\def\placeproofsatend{}
\def\includetodos{}
\newtcolorbox{ruletablebox}[1]{
    colframe=blue!75!black,
    title=#1
}
\newtcolorbox{remarkbox}{
    colframe=blue!75!black
}
\newtcolorbox{todobox}{
    colback=orange!5!white,
    colframe=orange!75!black,
    title=TODO
}
\newcommand\todobox[1]{}
\providecommand{\@fourthoffour}[4]{#4}
\def\fixstatement#1{%
    \AtEndEnvironment{#1}{%
    \xdef\pat@label{\expandafter\expandafter\expandafter
        \@fourthoffour\csname#1\endcsname\space\@currentlabel}}}
\globtoksblk\prooftoks{1000}
\newcounter{proofcount}
\long\def\proofatend#1\endproofatend{%
    \edef\next{\noexpand\begin{proof}[Proof of \pat@label]}%
    \toks\numexpr\prooftoks+\value{proofcount}\relax=\expandafter{\next#1\end{proof}}
    \stepcounter{proofcount}}
\def\proofatend{\begin{proof}}
\def\endproofatend{\end{proof}}
\def\printproofs{%
    \count@=\z@
    \loop
    \the\toks\numexpr\prooftoks+\count@\relax
    \ifnum\count@<\value{proofcount}%
    \advance\count@\@ne
    \repeat}
\theoremstyle{plain}
\newtheorem{theorem}{Theorem}[section]
\newtheorem{proposition}[theorem]{Proposition}
\newtheorem{lemma}[theorem]{Lemma}
\newtheorem{corollary}[theorem]{Corollary}
\newtheorem{property}[theorem]{Property}
\theoremstyle{definition}
\newtheorem{definition}{Definition}[section]
\newtheorem{conjecture}{Conjecture}[section]
\newtheorem{example}{Example}[section]
\theoremstyle{remark}
\newtheorem{remark}{Remark}[section]
\setlist[itemize]{itemindent=0.3em}
\setlist[description]{itemindent=0.2em}
\newcommand\VarOcc{\mathcal{O}{cc}}
\newcommand\VarSet{\mathcal{V}}
\newcommand\Nodes{\mathcal{N}}
\newcommand\NodesVar{\Nodes_{\sf var}}
\newcommand\NodesLmd{\Nodes_\lambda}
\newcommand\NodesApp{\Nodes_@}
\newcommand\ExtendedNodes{\tilde{\Nodes}}
\newcommand{\ghostlmd}{{\lambda\!\!\lambda}}
\newcommand{\ghostvar}{\theta}
\newcommand\ImNodesVar{\NodesVar^\ghostvar}
\newcommand\ImNodesLmd{\NodesLmd^\ghostlmd}
\newcommand{\normalizing}{{\sf norm}}
\newcommand{\branching}{{\sf branch}}
\newcommand{\travsetbr}{{\travset^\branching}}
\newcommand{\travsetnorm}{\travset^\normalizing}
\newcommand{\travulc}{\travset}
\newcommand{\travstlc}{{\travset_{\sf STLC}}}
\newcommand{\child}{{\small \sf Ch}} 
\newcommand{\rulefont}[1]{\mathbf{\sf #1}}
\def\coresymbol{\pi} 
\newcommand{\core}[1]{\coresymbol(#1)} 
\newcommand{\enables}{\vdash} 
\newcommand{\ctree}{CT} 
\newcommand{\ExtNodes}{\Nodes^{\sf ext}}
\newcommand{\IntNodes}{\Nodes^{\sf int}}
\newcommand\pathset{{\mathcal{P}aths}} 
\newcommand\arth{\textsf{arth}} 
\renewcommand\ie{{\it i.e.\@\xspace}}
\renewcommand\eg{{\it e.g.\@\xspace}}
\newcommand\bulklambda[1]{{\lambda\overline{#1}}}
\newcommand{\etal}{\textit{et al}. }
\newcommand{\hlred}{\rightarrow_{hl}}
\newcommand{\llred}{\rightarrow_{ll}}
\newcommand{\hoc}{\emph{hoc}}
\newcommand{\alphaequiv}{\equiv}
\def\justseqset{\mathcal{J}}
\def\istraversal{\models}
\author{William Blum}
\ead{william.blum@microsoft.com}
\address{Microsoft Research, One Microsoft Way, Redmond, WA, 98052, USA}
\title{Reducing Lambda Terms with Traversals (preprint)
\\ \vspace*{1em}
\today}
\begin{document}

\begin{abstract}
    We introduce a method to evaluate untyped lambda terms by combining the theory of traversals, a term-tree traversing technique inspired from Game Semantics~\cite{Ong2006,BlumPhd}, with judicious use of the eta-conversion rule of the lambda calculus.

    The traversal theory of the simply-typed lambda calculus relies on the eta-long transform to ensure that when traversing an application, there is a subterm representing every possible operator's argument\cite{BlumPhd, Ong2006}. In the untyped setting, we instead exhibit the missing operand via ad-hoc instantiation of the eta-expansion rule, which allows the traversal to proceed as if the operand existed in the original term. This gives rise to a more generic concept of traversals for lambda terms. A notable improvement, in addition to handling untyped terms, is that no preliminary transformation is required: the original unaltered lambda term is traversed.


    We show that by bounding the non-determinism of the traversal rule for free variables, one can effectively compute a set of traversals characterizing the paths in the tree representation of the beta-normal form, when it exists. This yields an evaluation algorithm for untyped lambda-terms. We prove correctness by showing that traversals implement \emph{leftmost linear reduction}, a generalization of the \emph{head linear reduction} of Danos~\etal \cite{danos-head,danosherbelinregnier1996}.
\end{abstract}

\maketitle

\tableofcontents

\section{Background and overview}

Traversals were originally introduced in the context of higher-order recursion schemes~\cite{Ong2006} as tree generators.
They were then extended to the more general setting of simply-typed languages with higher-order free variables, such as the simply-typed lambda calculus (STLC)~\cite{BlumPhd}. In the typed setting, the traversal theory has a well understood relationship with Game Semantics: traversals are in bijection with the plays of the game denotation of a term in which all the internal moves are revealed. Further, the \emph{core projections} of traversals are in bijection with the \emph{standard} game denotation, in which internal moves are hidden (Theorem~\ref{thm:gamesem_correspondence_stlc}). This correspondence gives rise to a method for reducing beta-redexes of a simply-typed lambda term that does not involve the traditional capture-avoiding substitution~\cite{danos-head,BlumPhd,BlumGalop2008,Blum-LocalBeta2008}.

In this paper we (i) generalize game-semantic traversals of the simply-typed lambda calculus\cite{BlumPhd} to the untyped lambda calculus (ULC), which we call \emph{imaginery travserals}; (ii) infer a method to evaluate untyped lambda terms based on enumeration of their traversals; (iii) formalize the connection between traversals and the notion of \emph{head linear reduction}.

\emph{Key idea behind imaginary traversals}: In the absence of types, eta-long expansion is not feasible as it would yield an infinite term. Imaginary traversals overcome this problem by performing eta-expansion \emph{on-demand}.
Traversing an untyped lambda term using the original traversal rules can lead
the traversal to get stuck. This happens when the operator sub-term of an application has an insufficient number of operands. In such situation, there is no operand node in the term tree that the traversal can jump to.
In the term $(\lambda f.f)(\lambda y.y)$, for example, an STLC traversal would get stuck when trying to resolve the ``$y$ argument'' of the variable occurrence $f$.
In contrasts, an imaginary traversal just jumps to a fictitious fresh node $z$ that would exist if the term were eta-expanded as $\lambda z. (\lambda f.f)(\lambda y.y) z$. This trick allows the traversal to proceed.

For terms having a beta-normal form, normalization can then be implemented by traversal enumeration. Although the set of imaginary traversals can be infinite, it is sufficient to consider sufficiently enough traversals to cover all paths of the tree representation of the beta-normal form. We demonstrated how such set is effectively computable by restricting the non-determinism of the free-variable rule. In other words, only a finite number of on-demand eta-expansions is required to evaluate the term. We implemented the normalization procedure in the HOG software~\cite{Blum-HogTool}. The last section provides some examples of terms normalized with it.

Finally, to prove correctness of the evaluation procedure, we show that traversals essentially implement a recursive version of \emph{head linear reduction}, a non-standard reduction strategy based on an alternative notion of redex where individual variable occurrences are matched to their corresponding arguments and where substitution occurs one variable occurrence at a time.\footnote{This is reminiscent of the standard call-by-name strategy except that linear reduction can occur under lambda, and performs substitution one variable occurrence at a time.}

\subsection{Related work}

\subsubsection{Game-semantics Traversal of STLC}

In~\cite{BlumPhd} we extended the theory of traversals to the simply-typed lambda calculus by introducing the new traversal rule \rulenamet{IVar} to traverse higher-order free variables. We formalized the correspondence between the theory of traversals and Game semantics by establishing a bijection between traversals and the game denotation of a term. This technical proof relies on several concepts~\cite{BlumPhd}:
\begin{description}
  \item[Free variables] The introduction of the traversal rule \rulenamet{IVar} modeling  free variables of the lambda calculus.
  Such rule is not needed in the original presentation of traversals because higher-order recursion schemes are necessarily closed terms of ground type where terminals are all of order $1$ at most\cite{Ong2006}.
  \item[Interaction Game Semantics] Traversals do not immediately correspond to the standard game denotation of a term. Instead they correspond to a more verbose variant of the innocent game denotation that preserves all the internal moves played while composing the sub-strategy denoting its subterms; as opposed to the standard innocent game denotation which hides all internal moves;
  \item[Traversal core] The correspondence makes use of various operations on justified sequences and traversals. The \emph{projection with respect to the tree root}, also called  ``core of a traversal'', plays a key role in the proof.
\end{description}

 This correspondence with Game Semantics yields a method to reduce beta-redexes in eta-long simply-typed lambda-terms. This procedure was studied in~\cite{BlumPhd,BlumGalop2008,Blum-HogTool,Ong-NormByTrav2015} and implemented in the HOG tool\cite{BlumGalop2008, BlumPhd}. We recall these results in Section~\ref{sec:traversal_correspondence_stlc}.

 The results presented in the present paper can be viewed as a generalization of this work to the untyped case. Adapting the traversal theory to the untyped setting requires new ingredients detailed in the rest of the paper. In particular:
\begin{itemize}[nosep]
 \item Eta-expansion is performed `on-the-fly' at each point where the traditional STLC traversal would get stuck: when there is insufficiently many operands in an application to continue traversing the term.
 \item  We augment the rules traversing lambda and variable nodes to support ad-hoc eta-expansion.
 \item We introduce the concept of ``ghost'' nodes: those are fictitious tree nodes that progressively appear as the traversal eta-expands the sub-terms.
\item The term tree itself is not altered during eta-expansion. Ghost nodes only fictitiously appear in the traversal and are only visible in the context where they get introduced.
\item We constrain the non-determinism of the rule traversing free variables by a critical quantity called the \emph{arity threshold}, calculated in linear time from the traversal itself. This limit guarantees that the normalization procedure terminates when the beta-normal form exists.
\end{itemize}

Conveniently, when restricted to simply-typed terms, the two notion of traversal coincide (Proposition~\ref{prop:ulc_and_stlc_trav_coincide}).

The absence of $\eta$-long transform of the term is a notable difference with the traversals for recursion scheme~\cite{Ong2006} or the simply-typed lambda calculus~\cite{BlumPhd}. It simplifies the presentation as the traversed tree is a straight abstract syntax representation of the original, unmodified, lambda term. Similarly, the evaluation procedure from Algorithm~\ref{algo:ulc_normalization_by_traversals} produces the traditional beta-normal form rather than its beta-\emph{eta-long} normal form. Furthermore, we prove soundness by establishing a connection with head-linear reduction rather than appealing to Game Semantics.

\subsubsection{Head linear reduction}
Danos \etal established the first connection between Game Semantics and the concept of \emph{head linear reduction}\cite{danos-head,danosherbelinregnier1996}. Despite this connection, their work does not explicitly describe how to use head linear reduction to fully evaluate a lambda term. The \emph{pointer abstract machine} introduced in the same paper, for instance, yields the ``\emph{quasi-head normal form (qhn)}'' of the term, not the normal form itself. The resulting \emph{qhn} still needs to be first reduced to head normal form (using head reduction) and then normalized using standard normal reduction, which must terminate by the $\lambda$-calculus theory.

The traversals discussed in this paper, just like the traversals of Berezun-Jones \cite{JonesBerezunLLL}, essentially implement
a recursive variation of the head linear reduction strategy on the lambda term to find its normal form. A traversal starts with a depth-first search for what Danos-Reigner call the \emph{head occurrence} of the \emph{hoc redex}. Traversals implement linear substitution by ``jumping'' to the node in the tree that represents the term to be substituted for the head occurrence. After a jump, a traversal just continues its descent for the next head occurrence, as if the first head occurrence in the tree had been replaced by the redex argument. A property of head-linear reduction is that all the terms appearing in the reduction sequence consist of sub-terms of the original term. This property means that there is no need for the traditional notions of environments or closures to reduce beta-redexes. Such mechanisms are instead replaced by a system of pointers referencing subterms of the original term.

In this paper, we generalize head linear reduction to \emph{leftmost linear reduction}, a strategy that recursively applies head linear reduction; just like normal reduction from the lambda calculus literature which proceeds by recursive application of head reduction.
We show how the \emph{imaginary traversals} effectively implement leftmost linear reduction, which proves soundness of the traversal normalization procedure.

\subsubsection{Berezun-Jones traversals}
Berezun and Jones introduced the first notion of traversals for the \emph{untyped} lambda calculus (ULC)~\cite{JonesBerezunLLL}. Although they took inspiration from the Game Semantic traversals of~\cite{Ong2006,BlumGalop2008}, their definition is more operational in nature and does not directly relate to the game denotation of a term. Starting from an operational semantics of the untyped lambda calculus they derive a normalization method that, very much like the traversal of~\cite{Ong2006, BlumPhd}, proceeds by traversing some tree representation of the term.

The tree representation used in Berezun-Jones' traversals is a direct abstract syntax tree representation of the lambda-term, whereas in the present paper we use a more compressed form where consecutive lambdas and applications are merged into a single bulk node. Unlike the original definition of traversals for STLC, where eta-long transformation is performed prior to traversing a term, neither Berezun-Jones' traversals nor imaginary traversals require any prior syntactic transformation of the input term.

The Berezun-Jones traversals distinctively rely on two justification pointers: each `token' of the traversal can have one \emph{binding pointer} as well as one \emph{control pointer}. They also involve the use of a `flag` boolean parameter associated with every token of the traversal. In contrast to Berezun-Jones' traversals, imaginary traversals necessitate a single justification pointer per node occurrence.

Another notable difference is that the normalization algorithm of Berezun-Jones requires a single traversal of the term. Our normalization algorithm, on the other hand, produces one traversal for every branch in the term tree f  the beta-normal form. Each branching point corresponds to some occurrence of a variable in the normal form, and each branch corresponds to one operand of the variable. This non-deterministic branching may conceivably be eliminated by adding auxiliary pointers to allow imaginary traversals to backtrack: after the first operand of an application is evaluated, the traversal would return to the operator variable and start exploring the remaining operands.

\subsection{Rest of the paper}

In the remaining of this paper we introduce basic definition from the traversal theory and introduce the notion of \emph{imaginary traversals} for the untyped lambda calculus. We briefly discuss the correspondence with game models of the Untyped Lambda Calculus~\cite{KerThesis} and show how they yield an algorithm to evaluate untyped lambda terms. We illustrate the evaluation algorithm on some examples, and finally prove correctness by establishing a correspondence with head-linear reduction.

\section{Terms, trees and justified sequences}
\subsection{Notations}
\emph{Sequences:} For any alphabet $\Sigma$ we write $\Sigma^*$ to denote the set of all (possibly infinite) sequences over elements of $\Sigma$.
 We use the abbreviated syntax $\overline{x}$ for the sequence of elements $x_1 \ldots x_n$ for some $n\geq0$.

 \emph{Labelled trees:} Given a set of labels $\mathcal{L}$, we use expressions of the form `$l\langle t_1, \ldots, t_m \rangle$' for $m \geq 0$ to denote a $\mathcal{L}$-labelled tree with root labelled $l\in \mathcal{L}$. If $m=0$ then the tree is a leaf otherwise it's a node labelled $l$ with $m$ ordered children trees $t_1$, \ldots, $t_m$. We will also use the formal definition of a labelled-tree given by a partial function $\nat^* \rightarrow \mathcal{L}$ mapping directed tree-paths to labels in $\mathcal{L}$, where
 a directed path is given by the sequence of child indices to follow to reach a given node from the tree root (represented by the empty sequence).

\subsection{Untyped lambda calculus}
We consider the set of terms $\Lambda$ of the untyped lambda calculus constructed from the grammar $\Lambda := x\ |\ (\Lambda\ \Lambda)\ |\ \lambda x. \Lambda $
where $x$ ranges over a countable set of variable names $\VarSet$.
For conciseness we abbreviate consecutive lambda abstractions when writing lambda terms, so that $\lambda x_1 \ldots \lambda x_n . U$ for some $n\geq 0$ and term $U$ will be written $\lambda x_1 \ldots x_n . U$; or just
$\lambda \overline{x} . U$ if $\overline{x} = x_1 \ldots x_n$. We write $M \alphaequiv N$ to denote that $M$ and $N$ are equal up to variable renaming.

In the remaining of the document we use uppercase letters $M, N, T, U, V$ for lambda terms and lowercase $x,y,z$ to denote variables. Terms will all have variable names ranging in some fixed set $\VarSet$ of identifiers.

\subsection{Computation tree and enabling relation}

Given an untyped lambda term $M$ we define its \defname{computation tree} as the abstract syntax tree representation of the term where (i) consecutive lambda abstractions are merged into a single `bulk' node labelled by the list of bound variables and have a single child; (ii) consecutive applications are represented by a single node labelled $@$ whose first child represents the operator, and subsequent children represents the operands.
This definition is similar to that of STLC~\cite{Ong2006, BlumPhd} with the notable difference that we do not eta-long expand the term prior to constructing the computation tree.
\begin{definition}[Computation tree]
Let $M$ be an untyped lambda term with variable names in $\VarSet$.
\begin{itemize}
\item The set of labels
$\mathcal{L} = \{ @ \} \union \VarSet \union \{ \lambda x_1 \ldots x_n \ | \ x_1 ,\ldots, x_n \in
    \VarSet, n\in\nat \}$
\item
    The \defname{computation tree} $\ctree(M)$ is a $\mathcal{L}$-labelled tree defined inductively on the syntax of $M$:
    \begin{eqnarray*}
        \ctree(\lambda \overline{x} . z s_1 \ldots s_m) &=& \lambda \overline{x}\; \langle\; z\; \langle \ctree(s_1),\ldots,\ctree(s_m)\rangle\rangle\\
        && \hbox {where $m\geq 0$, $z \in \VarSet$}, \\
 \ctree(\lambda \overline{x} . (\lambda y.t) s_1 \ldots s_m) &=& \lambda \overline{x}\; \langle\; @ \; \langle \ctree(\lambda y.t),\ctree(s_1),\ldots,\ctree(s_m) \rangle \rangle \enspace\\
&&  \hbox{where $m\geq 1$, $y\in \VarSet$}.
    \end{eqnarray*}

\item We write $\Nodes(M)$ to denote the set of nodes of $\ctree(M)$, or just $\Nodes$ if the term is clear from context; $\NodesVar$ for the set of variable nodes; $\NodesLmd$ for lambda nodes, and $\NodesApp$ for  application nodes.

\item For any lambda node $\alpha\in\NodesLmd$ we write $\child(\alpha)$ to denote its unique child node (either an $@$ or variable node).
\end{itemize}
\end{definition}

Note that any lambda term can indeed be written in one of the two forms above. In particular, applicative terms are handled by the case $n=0$ of the form $\lambda . N$ for some term $N$ where `$\lambda$' is referred to as a ``dummy lambda''. This compact tree representation helps maintain alternation between lambda nodes (at odd level, counting from 1 onwards) and variable nodes (at even level).

We define the \defname{enabling relation} $\enables$ on tree nodes $\Nodes$ as the relation mapping (i) each lambda node to all the variable nodes that it binds (ii) the root of the tree to all the free variable nodes (iii) variable and application node to each one of their child node. We write $m \enables_k \alpha$ to indicate that a lambda node $\alpha$ is the $k$th child of a variable or application node $m$ for some $k\geq0$; and $\alpha \enables_k m$ to indicate that variable node $m$ is labelled by the $k$th variable bound by $\alpha$ for $k\geq1$.

We define \defname{hereditarily enabling} as the reflexive transitive closure of the enabling relation, and we say that node $m$ \defname{hereditarily enables} $n$ written $m \enables^* n$.

A tree node is \defname{external} if it is hereditarily enabled by the root, otherwise it is \defname{internal}. We write $\ExtNodes$ for the set of external nodes, that is the image of the tree root by the reflexive transitive closure of the enabling relation. Internal nodes are therefore precisely the nodes hereditarily justified by application nodes $@$. We write $\IntNodes$ to denote the set of internal nodes.

We define the \defname{arity} of a node $n$, written $|n|$ as follows: the arity of a lambda node $\lambda x_1 \cdots x_k$ for $k\geq 0$ is defined as $k$; the arity of a variable node $x$ is given by the number of its children in the computation tree; the arity of an $@$-node is the number of its children nodes minus 1 (\ie, the number of operands in the application).

\subsection{Ghost nodes}
The nodes in $\Nodes$ correspond to tokens in the syntax representation of the lambda term; we call them \defname{structural nodes}. We complement the computation tree with two additional infinite sets of nodes representing fictitious term tokens:
(i) the set $\ghostvar$ of \emph{ghost variable nodes}, (ii) the set $\ghostlmd$ of \emph{ghost lambda nodes}. Together, nodes in $\ghostlmd$ and $\ghostvar$ are called the \defname{ghost nodes}. We write $\ExtendedNodes$ for the extended set of nodes $\Nodes + \ghostvar + \ghostlmd$; $\ImNodesLmd$ as a shorthand for $\NodesLmd + \ghostlmd$ and $\ImNodesVar$ as a shorthand for $\NodesVar + \ghostvar$. Ghost lambda nodes and ghost variables all have the same unique label $\ghostlmd$ and $\ghostvar$ respectively. We will thus not distinguish between them and will just use the identifiers $\ghostvar$ and $\ghostlmd$ as placeholders for some ghost variable node or ghost lambda node respectively.

Formally, we simultaneously define ghost nodes and the extension of the enabling relation $\enables$ to ghost nodes as follows: (i) for every (ghost) variable or application node $m$ and for all $k>|m|$ there is a ghost lambda node $\ghostlmd$ such that $m \enables_k \ghostlmd$; (ii) for every (ghost) lambda node $\alpha$ and $k>|\alpha|$ there is a ghost variable node $\ghostvar$ such that $\alpha \enables_k \ghostvar$. Thus, variable and ghost lambda nodes are uniquely defined by their enabler node (possibly themselves ghost nodes) and associated label $k\geq 1$.
The sets $\ghostvar$ and $\ghostlmd$ are therefore uniquely determined by the pair $(\Nodes, \enables)$.

By convention ghost variables and lambda nodes are assigned arity $0$.

\subsection{Justified sequence of nodes}
\label{sec:justseq}

A \defname{justified sequence} is defined a sequence of (extended) nodes from the computation tree where every occurrence $n$ in the sequence, except the first one, has an associated link--the ``justification pointer''--pointing to some previous node occurrence $j$ in the sequence--its ``justifier''--with an associated ``link label'' $l\geq0$, such that the justifier node $\enables$-enables the source node with the corresponding label. That is: $j \enables_l n$. We represent justified sequences as a sequence of labelled node occurrences with back-pointing arrows representing justification pointers:
$\Pstr[10pt]{ s = \cdots (j){j} \ldots (n-j,25:l){n} }$.
For readability, we sometime just indicate the link label in exponent of the justified occurrence, or even omit the justification pointers altogether.
Note that the enabling relation is \emph{statically} induced by the structure of the tree whereas the justification relation is defined on node \emph{occurrences} in one specific justified sequence. The set $\justseqset(M)$ of justified sequences over $M$ consists of all justified sequences over the extended set of nodes $\ExtendedNodes$.

{\bf Name-free representation}: We define the \defname{structure} of a justified sequence as the sequence obtained by discarding node labels while preserving node types. It is encoded as a sequence of triples $\{\lambda, {\sf Var}, @ \}\times\nat\times\nat$ where the first component indicates the type of each occurrence (lambda bunch, variable, application); the second component indicates the link distance, and the last component gives the link label ($0$ for no pointer).
Two sequences $s$ and $u$ over possibly two distinct terms are \defname{equivalent}, written $s \equiv u$ just if they have the same structure: the types of the underlying nodes in the two sequences match pairwise (either two variable/@ nodes or two lambda nodes) and have the same justification pointers. Two equivalent sequences are not necessarily equal since they may have different variable names.
We say that two subsets $J_1\subseteq \justseqset(M_1)$ and $J_2\subseteq\justseqset(M_2)$ are isomorphic, written $J_1\cong J_2$, if  there exists a structure-preserving bijection between $J_1$ and $J_2$. That is there exists a bijection $\phi :J_1\longrightarrow J_2$ such that for any $j\in J$, $j\equiv\phi(j)$.

{\bf Quadruplet sequence}:
We define the set of \defname{quadruplet sequences} over variables in $\mathcal{V}$ as $O_4^*$ where $O_4 = \nat^*\times \mathcal{V}^* \times\nat\times\nat$. The first component represents a node path; the second is a sequence of variable names representing \emph{pending lambdas}; the third represents a link distance and the last is a link label.

The `pending lambdas' component represents a sequence of variable names meant to be appended to the bulk of variables already bound in a lambda node. Although not used when encoding raw justified sequences of $M$, it will be useful when defining transformation on justified sequences that involve modifications to the lambda abstractions.

\begin{definition}[Quadruplet encoding]
    \label{def:quadruplet_encoding}
    We encode a justified sequence of nodes over a term $M$
    as a sequence of quadruplets in $O_4^*$ obtained by applying the following mapping occurrence-wise:
    \begin{enumerate}[nosep]
        \item (Node) The first component is the path to an extended node of the computation tree that uniquely defines the type and label of the node occurrence.
        \item (Pending lambdas) The second component is set to the empty list $\epsilon$.
        \item (Link distance) The third component gives the distance (number of occurrences in the sequence) between the encoded occurrence and its justifier.
        \item (Link label) The fourth component gives the justification link label.
    \end{enumerate}
\end{definition}

We adopt the quadruplet encoding throughout the rest of the paper and thus assume $\justseqset(M)$ to be a (strict) subset of $O_4^*$.


 When typesetting a sequence from $\justseqset(M)$, we represent occurrences by their associated node labels, and we write the \emph{pending lambdas} in exponent. For example:
$t = \cdots @ \cdot x \cdot \lambda\overline{x}^{[y_1 \ldots y_n]}$
where $\lambda\overline{x}$ is the label of the tree node associated with that occurrence, and $y_1 \ldots y_n$ represent \emph{pending lambdas} for this occurrence. We omit the exponent if the list of pending lambdas is empty.

\begin{example}
\label{examp:ghost_materialization}
    Let $M = \lambda x. (\lambda y z.z) u$. The set of structural paths of the computation tree of $M$ is given by $\Sigma = \{\epsilon, 0, 00,000, 01,010\}$. The justified sequence
    $t = \Pstr[10pt]{
        (l){\lambda x} \cdot (a)@ \cdot (lyz-a,25:0){\lambda y z} \cdot (z-lyz,25:2)z \cdot (gl-a,30:2)\ghostlmd \cdot (gv-l,32:2)\ghostvar }
    $
    has quadruplet encoding
    $(\epsilon, \epsilon ,0,0), (0,@,0,0), (00, \epsilon, 1,0), (000,\epsilon, 1,2), (02,\epsilon,3,2), (02,\epsilon,5,2) $ whereas the quadruplets $ (\epsilon, g ,0,0), (0,@,0,0), (00, h~k, 1,0), (000,\epsilon, 1,2), (02,\epsilon,3,2), (02,\epsilon,5,2) $ encode the justified sequence
    $\Pstr[10pt]{
        (l){\lambda x^{[g]}} \cdot (a)@ \cdot (lyz-a,25:0){\lambda y z^{[h~k]}} \cdot (z-lyz,25:2)z \cdot (gl-a,30:2)\ghostlmd \cdot (gv-l,31:2)\ghostvar }
    $.
\end{example}

Two justified sequences are considered \defname{equal} if they have the same structure and same variable names after appending each pending lambda list to the bound variables in the corresponding lambdas. Formally, given two terms $M$ and $N$ we define the relation
$Eq_{M,N} \subseteq O_4(N) \times O_4(M)$ as:
\begin{align*}
(n,\overline{p},d,k)~Eq_{M,N}~(n',\overline{p'},d',k') &\iff
\begin{cases}
    n \mbox{ and $n'$ have the same type} \\
    d = d' \\
    k = k' \\
    \mbox{If $n=\lambda\overline{x}$ and $n'=\lambda\overline{y}$ then
        $\overline{x} \cdot \overline{p} = \overline{y} \cdot \overline{p'}$.
    }
\end{cases}
\end{align*}
The reflexive closure of $Eq_{M,N}^*$ defines an equivalence relation on
 the disjoint union $O_4(M) + O_4(N)$. By extension the relation $Eq_{M,N}^*$ defines an equivalence relation on the disjoint union $\justseqset(M) + \justseqset(N)$ where two sequences are in relation if their element-wise occurrences are.

In the remaining of the article we will consider occurrences and justified sequences equal up to this equivalence relation. For all variable names $\overline{x}$ and $\overline{y}$ the two notations $\lambda\overline{x}^{[\overline{y}]}$ and $\lambda\overline{x}\overline{y}$ will denote the same occurrence and will be used interchangeably.

We call a justified sequence \defname{canonical} just if all the \emph{pending lambda} components are empty. Observe that given a fixed term $M$, the encoding of a canonical justified sequence is uniquely determined by its structure and reciprocally: discarding the label information yields the structure; reciprocally the labels can be uniquely reconstructed from the structure of the sequence and the tree enabling relation (a chain of justification pointers in the sequence corresponds to a path in the tree).

\begin{example}
We have $\Pstr[10pt]{ (l){\lambda x y} \cdot (n-l,25){x} } \equiv \Pstr[10pt]{ (l){\lambda z y} \cdot (n-l,25){z}}$. But we also have $\Pstr[10pt]{ (l){\lambda x y z w t} \cdot (n-l,25){x} } \equiv \Pstr[10pt]{ (l){\lambda x} \cdot (n-l,25){x}}$ since they have same structure $(\lambda,0)\cdot({\sf Var},1)$.
\end{example}

We will make use of non-canonical justified sequences when studying
justified sequences in the context of $\beta$-reduction. In particular the \emph{pending lambda} component will allow us to define two useful operations:
\begin{itemize}
    \item Lambda merging: where abstractions are merged in a combined bunch lambda node;
    \item Node materialization: where a ghost node materializes into an occurrence of a bound or free variable. This occurs when a ghost variable points to a lambda node $\lambda\overline{x}^{[\overline{y}]}$ with label $i$ strictly greater than the arity of $|x|$ but smaller than $|x|+|y|$.
\end{itemize}


A justified sequence verifies the \defname{alternation condition} if the first node is a lambda node and subsequent nodes occurrences alternate between (i) a variable or application node (ii) a lambda node.

We say that an occurrence of a node in a justified sequence is \defname{hereditarily justified by some other occurrence} if recursively following justification pointers starting from the first occurrence in the sequence leads to the second one. Because justification pointers must honor the enabling relation $\enables$ induced by the term structure, if a node occurrence $n$ is hereditarily justified by some occurrence of a node $m\in\Nodes$ then $n$ is necessarily hereditarily enabled by $m$. Further if $m$ occurs only once in the justified sequence then the occurrences hereditarily justified by $m$ are precisely the occurrences of nodes that are hereditarily \emph{enabled} by $m$.

For any justified sequence $t$ we write $t^\omega$ to denote the last occurrence in $t$. The notion of sequence prefix naturally extends to justified sequences.
For any occurrence $n$ in $t$ we write $t_{\leq n}$ for the prefix subsequence of $t$ ending at $n$, and $t_{<n}$ for the prefix ending at the occurrence immediately preceding $n$ (or the empty sequence if $n$ is the first occurrence in $t$).
 We say that $t$ is an \defname{extension} of justified sequence $u$ if $u$ is a strict prefix of $t$ sharing the same justification pointers.

We use the standard operations borrowed from Game Semantics on justified sequences~\cite{Abr02}.

\begin{definition}[Projection]
Let $s$ be a justified sequence of nodes and $n$ be an occurrence in $s$.

\begin{itemize}
\item The \defname{projection of $s$} with respect to $n$, written $s\filter n$, is the subsequence of $s$ obtained by keeping only nodes that are hereditarily justified by $n$ in $s$;

 \item Let $A$ be a subset of nodes in $\Nodes$, we write $s \filter A$ to denote the subsequence of $s$ obtained by keeping only occurrences of nodes that belong to $A$;

 \item We will consider the subsequence $s\filter\ExtNodes$ consisting of external nodes of $s$. Observe that if $r_1, \ldots r_n$, $n\geq 1$ are the occurrences of the root in $s$ then it is also given by the projection of $s$ with respect to those occurrences: $s\filter\ExtNodes = s\filter r_1 \filter \ldots \filter r_n$.
\end{itemize}
\end{definition}

The P-view of a justified sequence is the sub-sequence obtained by reading the sequence backwards and following the justification pointer every other node: (i) if the last node read is a variable node then follow its justification pointer (\ie, skip all the occurrences lying underneath) (ii) if the node is a lambda node then move to the preceding node. Formally:
\begin{definition}[Views]
\label{def:views}
The P-view $\pview{s}$ of a justified sequence $s$ is defined recursively by:
$$\begin{array}{rcll}
 \pview{\epsilon} &=&  \epsilon \\
 \pview{s \cdot n }  &=&  \pview{s} \cdot n
    & \mbox{if $n$ is a variable or $@$ node;}
    \\
 \pview{\Pstr[10pt]{ s \cdot (m){m} \cdots (lmd-m,25){n}}} &=&
        \Pstr{ \pview{s} \cdot (m2){m} \cdot (lmd2-m2,30){n} }
    & \mbox{if $n$ is a lambda node;}
    \\
 \pview{s \cdot n }  &=&  n & \mbox{if $n$ is a lambda node with no pointer.}
\end{array}$$
The O-view, denoted $\oview{s}$, is the defined dually:
$$\begin{array}{rcll}
 \oview{\epsilon} &=&  \epsilon \\
 \oview{s \cdot n }  &=&  \oview{s} \cdot n
    & \mbox{if $n$ is an $\lambda$-node;}
    \\
 \oview{\Pstr[10pt]{s \cdot (m){m} \cdot \cdots \cdot (x-m,30){n}}} &=&
    \Pstr{ \oview{s} \cdot (m2){m} \cdot (n2-m2,60){n} }
    & \mbox{if $n$ is a variable node;}
    \\
 \oview{s \cdot n }  &=&  n
    & \mbox{if $n$ is an $@$ node.}
\end{array}$$
\end{definition}

Given two node occurrences occurrence $n$ and $m$ in a justified sequence $s$, we say that $n$ is \defname{visible at} $m$ just if $n$ occurs in the P-view $\pview{s_{\leq m}}$.

\subsection{Justified paths of the term tree}

We will consider paths of a term tree as a set of justified sequences. For any term $M$ we define the \defname{set of justified paths} $\pathset(M)$ as the set of justified sequences in $\justseqset(M)$ whose underlying sequences of nodes are the structural paths (\ie, path without ghosts) in the labelled-tree $\ctree(M)$ with associated justification pointers induced by the enabling relation as follows. Occurrence of bound variable nodes are justified by their binder node with link label determined by the variable index in the binding node; free variables are justified by the root (the first node in the sequence) with label index determined by the free variable index; and lambda nodes are justified by their parent (the immediate predecessor in the sequence) with label index given by the child index. This notion is well defined because a variable binder necessarily occurs in the path from it to the root.
\begin{property}[Path characterization]
\label{prop:tree_path_charact}
(i) An untyped term $M$ is uniquely determined by the subset of maximal justified paths of $\pathset$.
(ii) Further $M$ is uniquely determined, up to $\alpha$-conversion, by the \emph{structure} (\ie, node types and pointers) of the maximal justified paths in $\pathset$.
\end{property}
\begin{proof}
(i) Computation trees are in one-to-one correspondence with the standard tree representation of a lambda term. Further, because $\pathset(M)$ is prefix-closed it is uniquely determined by its maximal elements. (ii) Variables names are uniquely determined by the justification pointers and their associated label index.
\end{proof}

\begin{example}
  $\pathset((\lambda x.x y) (\lambda z.z))$ is the prefix closure of
  $\{
  \Pstr{ (r){\lambda} (at){@} (lx-at,30:0){\lambda x} (x-lx,30:1){x} (l2-x,30:1){\lambda} (y-r,40:1){y} },
  \Pstr{ (r){\lambda} (at){@} (lz-at,35:1){\lambda z} (z-lz,35:1){z}}
  \}$.
\end{example}

Algorithm~\ref{algo:termtree_readout_from_justitied_paths} shows how to reconstruct the tree from a set of justified sequences that is isomorphic to the justified paths of the term. Note how ghost variables get materialized by  assigning real variable names. Observe as well that variable nodes with same binder and same link label are assigned the same variable name. This assignment is sound since by Property~\ref{prop:tree_path_charact} justification pointers in justified paths uniquely determine variables names.

\begin{definition}[Induced labelled-tree]
Let $\mathcal{P}$ be a set of justified sequences of $M$ that is isomorphic to justified paths $\pathset(M_2)$ of some other term $M_2$.
Then $\mathcal{P}$ induces the labelled-tree $(P,L)$ defined as follows:
\begin{itemize}[nosep]
    \item Directed paths $P \subseteq \nat^*$ are \emph{generated} from justified sequences in $\mathcal{P}$: variable occurrences have child index $1$; the child index of lambda-node occurrences is given by their justification label.
    \item Labels are $\mathcal{L} = \ExtendedNodes \times P \times \nat$.
    \item Label function $L : P \rightarrow\mathcal{L}$.
    For $p\neq\epsilon$, let $s = s' \cdot n$ be any sequence in $\mathcal{P}$ generating path $p$.
    If $n$ has not justifier then $L(p) = (n, p, 0)$.
    Otherwise $L(p) = (n, b, i)$ where
    $b$ is the prefix of $p$ corresponding to $n$'s justifier and $i$ is the label of $n$'s justification pointer.
    \end{itemize}
\end{definition}
It is well-defined because sequences in $\mathcal{P}$ correspond to actual paths in some other computation tree, therefore for any two sequences $s_1, s_2$ generating a given path $p$, their last occurrence must have be the same node and have the same justifier.

\begin{algorithm}[!ht]
\begin{algorithmic}
\caption{Reconstructing a term tree from justified paths.}
\label{algo:termtree_readout_from_justitied_paths}
\REQUIRE{A set $\mathcal{P}$ of justified sequences of a term $M$ verifying $\mathcal{P} \equiv \pathset(N)$ for some other term $N$.}
\ENSURE{Computation tree of $N$ modulo $\alpha$-conversion.}
\STATE Let $(P, L)$ be the labelled-tree induced by $\mathcal{P}$.
\FORALL{node $n \in P$ (by depth-first enumeration)}
    \STATE Let $(l,b,i) = L(n)$
    \IF{$n \in \NodesVar \union \NodesLmd \union \NodesApp$}
        \STATE $L'(n) \leftarrow l$
    \ELSIF{$n \in\ghostlmd$}
        \STATE Let $\alpha$ be a fresh identifier
        \COMMENT{Will be used as prefix to name variable bound by $n$}
        \STATE $id(n) \leftarrow \alpha$
        \STATE $arity \leftarrow \max \{ i \ | \ L(v) = (\ghostvar, n, i) \mbox{ for some } v \in P \}$
        \STATE $L'(n) \leftarrow \lambda \alpha_1 \ldots \alpha_{arity} $
    \ELSIF{$n \in\ghostvar$}
            \STATE $\alpha \leftarrow id(b)$
            \COMMENT{Get naming prefix associated with this binder}
            \STATE $L'(n) \leftarrow \alpha_i $
            \COMMENT{Variable occurrences with same binder and justification label have same name.}
    \ENDIF
\ENDFOR
\RETURN The labelled-tree $(P, L')$.
\end{algorithmic}
\end{algorithm}

\section{Imaginary Traversals}

\subsection{Definition and properties}

\begin{definition}[ULC traversals]
The set of \defname{traversals} of an untyped lambda term $M$, denoted $\travulc(M)$, abbreviated $\travulc$ when the term is clear from context, is the set of justified sequences of nodes over $M$ recursively defined by the rules of Table~\ref{tab:trav_rules}. A traversal that does not have any extension is a \defname{maximal traversal}.
\end{definition}

\begin{table}
\begin{ruletablebox}{Imaginary traversals of $\lambda$-calculus}
\noindent {\bf PROGRAM -- Structural rules}
\begin{itemize}[leftmargin=3em]
    \item[\rulenamet{Root}] The singleton sequence $r$ is in $\travulc$ where $r$ is the root of the tree.

    \item[\rulenamet{App}] If $t \cdot @$ is a traversal then so is \Pstr[0.4cm]{t \cdot (at) @  \cdot (a-at,40:0) \alpha} where $\alpha\in\NodesLmd$ is $@$'s $0$th child $\lambda$-node.

    \item[\rulenamet{Lam}] If $t \cdot \alpha$ is a traversal where $\alpha\in\NodesLmd$ then so is $t \cdot \alpha \cdot n$ where $n$ denotes $\alpha$'s unique child. Furthermore:
        \begin{compactitem}
            \item[\rulenamet{Lam^@}] If $n$ is an $@$-node then it has no justifier,
            \item[\rulenamet{Lam^{\sf var}}] If $n$ is a free variable node then it points to the only occurrence of the root in
            $\pview{t \cdot \alpha}$. Otherwise it points to the only occurrence of its binder in $\pview{t\cdot \alpha}$.
        \end{compactitem}

    \item[\rulenamet{Lam^\ghostlmd}] If
  $\Pstr[0.5cm]{ t \cdot(alpha){\alpha} \cdot
(n){n} \cdot \ldots \cdot
(gl-n,40:i){\ghostlmd} \in\travulc}$ for some prefix $t$, $\alpha \in \ImNodesLmd$ and $n \in\ImNodesVar$ then
$$\Pstr[0.5cm]{ t \cdot(alpha){\alpha} \cdot
(n){n}
\cdot \ldots \cdot
(gl-n,40:i){\ghostlmd}\cdot (al-alpha,40:{|\alpha| + i - |n|}){\ghostvar}
      \in\travulc}$$
\end{itemize}
\emph{\bf PROGRAM -- Copy-cat rules}
\begin{itemize}[leftmargin=3em]
\item[\rulenamet{Var}] If \Pstr[0.5cm]{t \cdot (m){m} \cdot (alpha){\alpha}
    \ldots (n-alpha,50:i){n} \in \travulc} for $i>0$, $n \in \ImNodesVar$ hereditarily justified by an $@$-node;
     $m \in\ImNodesVar \union \NodesApp$; and $\alpha \in \ImNodesLmd$ then:
  $$\Pstr[0.5cm]{ t  \cdot
(m){m} \cdot (lx){\alpha}  \ldots (x-lx,30:i){n}
    \cdot (letai-m,40:i){\beta}
     \in\travulc}$$
\begin{description}[align=right,leftmargin=10em]
\item[Concrete] $i \leq |m|$: $\beta\in\NodesLmd$ is the $i$th child of $m$;
\item[Eta-expanded] $i > |m|$: $\beta$ is a ghost lambda node $\ghostlmd$.
\end{description}
\end{itemize}

\emph{\bf DATA -- Input-variable rules}
\begin{itemize}[leftmargin=3em]
\item[\rulenamet{IVar}] If $t \cdot n$ is a traversal where $n \in \ImNodesVar$ is hereditarily justified by the root. For every node $m \in \ImNodesVar$ occurring in $\oview{t\cdot n}$
and every $i\geq1$ we have $t \cdot n \cdot\alpha\in\travulc$ with $\alpha$ pointing to $m$ with label $i$, where:
\begin{description}[align=right,leftmargin=10em]
\item[Concrete] $i\leq|m|$: $\alpha\in\Nodes_\lambda$ is the $i$th child of $m$;
\item[Eta-expanded] $i>|m|$: $\alpha$ is a ghost lambda node $\ghostlmd$.
\end{description}
\end{itemize}

\caption{Imaginary traversals $\travulc$ of the untyped lambda calculus.}
 \label{tab:trav_rules}
\end{ruletablebox}
\end{table}

Some immediate properties that can be shown by induction on the rules:
\begin{property}
    \label{prop:trav_immediate}
   For any traversal $t$
   \begin{enumerate}[nosep]
   \item $t$ verifies the alternation condition;
   \item $t\filter\ExtNodes$ is a valid justified sequence (with respect to the enabling relation $\enables$) verifying the alternation condition;
   \item If the last node in $t$ is external then all the nodes in $\oview{t}$ are also external.
   \end{enumerate}
\end{property}

\paragraph{On-the-fly expansion}
The eta-expansion of $M$ with respect to a traversal is a term obtained by eta-expanding some of the subterms of $M$ according to the rules used to produce the traversal.
\begin{definition}[On-the-fly eta-expansion]
\label{def:onthefly_etaexpansion}
Let $M$ be an untyped term.
\begin{itemize}[nosep]
\item Let $n\in \Nodes$ be a node and $N$ denote the subterm of $M$ rooted at $n$. We write ${\sf ETA}(M, n)$ to denote the term obtained by substituting
the subterm $N$ in $M$ with the term $\lambda\theta. N \theta$ for some variable $\theta$ fresh in $N$.

\item Let $t$ be a \emph{finite} traversal of $M$. The \defname{eta-expansion of $M$ with respect to $t$}, written $M^t$, is defined as just $M$ if $t$ is the empty traversal; and we define $M^{u \cdot n}$ inductively by case analysis on the rule used to traverse node $n$, where $m$ denotes $n$'s justifier in $u$, if it exists, with link label $i$:
$$
\begin{array}{rlcrll}
    \rulename{Root}          & M^{u}  &\qquad & \rulename{Var}  & M^{u} \hfill &\mbox{if $i \leq |m|$; } \\
    \rulename{App}           & M^{u}  &       &                 & {\sf ETA}(M^{u}, m) \hfill  &\mbox{if $i > |m|$. } \\
    \rulename{Lam}           & M^{u}  &       & \rulename{IVar} & M^{u} \hfill &\mbox{if $i \leq |m|$; } \\
    \rulename{Lam^\ghostlmd} & M^{u}  &       &                 & {\sf ETA}(M^{u}, m)  \hfill &\mbox{if $i > |m|$.}
\end{array}
$$
\end{itemize}
\end{definition}
Observe that the tree obtained after eta-expansion with respect to $t$ contains the tree of $M$ itself, therefore paths in the tree of $M$ are also paths in the tree of $M^t$.

\begin{example}
Take $M = (\lambda u . u\ (y_1\ u)) (\lambda v . v\ y_2)$. Its computation tree is shown in Ex.~\ref{ex:missingoperand}. A valid traversal is $t = \Pstr[0.7cm]{(n0){\lambda }\ (n1){@}\ (n2-n1){\lambda u}\ (n3-n2){u}\ (n4-n1){\lambda v}\ (n5-n4){v}\ (n6-n3){\lambda }\ (n7-n0){y_1}\ (n8-n7){\lambda }\ (n9-n2){u}\ (n10-n1){\lambda v}\ (n11-n10){v}\ (n12-n9){{\ghostlmd^{1}}}\ (n13-n8){{\ghostvar^{1}}}\ (n14-n13){{\ghostlmd^{1}}}\ (n15-n12){{\ghostvar^{1}}}\ (n16-n11){\lambda }\ (n17-n0,35){y_2}\ }$. The eta-expansion of $M$ with respect to $t$ is
$M^t = (\lambda u . u~(y_1~(\lambda \alpha. u (\lambda \beta.\alpha \beta))) (\lambda v . v~y_2)$.
\end{example}

\paragraph{Path-View correspondence}
We now generalize a known result from the theory of traversals for higher-order grammars~\cite{Ong2006} and simply-typed lambda terms~\cite[Proposition 4.29]{BlumPhd} to the untyped setting. The game-semantic concept of `Proponent view' from definition~\ref{def:views} corresponds to the concept of `tree path' in the following sense:

\begin{proposition}[Path-View correspondence for ULC]
\label{prop:pathview_ulc}
Let $M$ be an untyped term and $t$ be a \emph{finite} traversal of $M$ then
$\pview{t}$ is a path (with associated justification pointers) in the computation tree of the eta-expansion of $M$ with respect to $t$. Thus, if $t$ does not contain ghost nodes then $\pview{t}$ is a structural path in the computation tree of $M$.
\end{proposition}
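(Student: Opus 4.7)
I would proceed by induction on the length $|t|$, with a case analysis on the rule used to append the last node. The base case is $t=r$ (the root alone, by \rulenamet{Root}): we have $\pview{r} = r$, the eta-expansion is trivial so $M^{r} = M$, and a single-node sequence rooted at the tree root is the length-zero path of $\ctree(M)$. For the inductive step, write $t' = t \cdot n$ and assume the claim for $t$.

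For the three structural extensions \rulenamet{App}, \rulenamet{Lam^@} and \rulenamet{Lam^{\sf var}}, no eta-expansion is triggered, so $M^{t'} = M^t$. In each case the P-view simply appends the new occurrence ($\pview{t'} = \pview{t} \cdot n$) and the rule guarantees that $n$ is the correct child---in $\ctree(M)$ and hence in $\ctree(M^{t'})$---of the last occurrence of $\pview{t}$, extending the IH path by exactly one edge. For the copy-cat rule \rulenamet{Var} and the input-variable rule \rulenamet{IVar}, the appended $\beta$ is a lambda node pointing back to some earlier node $m$ with label $i$, so the lambda clause of Definition~\ref{def:views} gives $\pview{t'} = \pview{t_{\leq m}} \cdot \beta$. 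By the IH applied to the strict prefix $t_{\leq m}$, this view is already a path in $\ctree(M^{t_{\leq m}})$ ending at $m$, and it survives into $\ctree(M^{t'})$ because subsequent eta-expansions only add nodes. If $i \leq |m|$, $\beta$ is literally $m$'s $i$-th child in $\ctree(M)$; if $i > |m|$, then $M^{t'}$ is obtained from $M^t$ by ${\sf ETA}(M^t, m)$ (iterated if the gap $i-|m|$ exceeds $1$), and the freshly added lambda at position $i$ of $m$ is precisely the materialisation of $\ghostlmd$.

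The genuinely new case is \rulenamet{Lam^\ghostlmd}, which appends a ghost variable $\ghostvar$ pointing back to $\alpha$ with label $|\alpha|+i-|n|$. The pattern $\alpha \cdot n \cdot \ldots \cdot \ghostlmd$ combined with the \rulenamet{Lam} rule forces $n$ to be $\alpha$'s unique tree-child, and the earlier \rulenamet{Var}/\rulenamet{IVar} step that produced the ghost $\ghostlmd$ triggered an eta-expansion at $n$. In the bulk representation, each ${\sf ETA}$ step at $n$ simultaneously adds one new argument position to $n$ and one new bound variable to $n$'s parent lambda $\alpha$ (via merging of consecutive lambdas). After all the relevant steps, the $i$-th child of $n$ in $\ctree(M^{t'})$ is a dummy lambda whose sole child is a freshly introduced variable bound by $\alpha$ at position $|\alpha|+i-|n|$, which is exactly the occurrence $\ghostvar$ just appended to the P-view. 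Hence $\pview{t'}$ remains a path in $\ctree(M^{t'})$.

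The main obstacle is the bookkeeping around iterated ${\sf ETA}$ in the bulk-tree representation: one must verify that the merging of consecutive lambdas correctly attaches each fresh binder to $\alpha$ rather than to some spurious intermediate node, so that the labels $|\alpha|+i-|n|$ and $i$ really align the ghost occurrences with the newly materialised concrete nodes. Once this alignment is packaged as a small structural lemma, the remaining case analysis is routine; in particular, the corollary at the end of the statement is immediate, as the absence of ghost nodes in $t$ means no eta-expansion was ever invoked, so $M^t = M$ and $\pview{t}$ is already a structural path of $\ctree(M)$.
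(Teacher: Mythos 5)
Your proposal follows exactly the route the paper takes---the paper's entire proof is the single line ``Proven by induction on the traversal rules of $t$''---and your case analysis (structural rules extend the view by one tree edge, \rulenamet{Var}/\rulenamet{IVar} reset the view to the justifier's prefix, \rulenamet{Lam^\ghostlmd} handled via the eta-expansion bookkeeping) is a correct and considerably more explicit working-out of that induction. The alignment issue you flag around iterated ${\sf ETA}$ and the merging of consecutive lambdas is a genuine subtlety that the paper glosses over, so flagging it as a small auxiliary lemma is appropriate rather than a gap.
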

\begin{proof}
Proven by induction on the traversal rules of $t$.
\end{proof}

\begin{property}
The traversal rules are well-defined.
\end{property}
\begin{proof}
(i) Rule \rulenamet{Lam}: By Proposition~\ref{prop:pathview_ulc}, the P-view is a path in the tree of the eta-expansion of $M$ with respect to $t$. But the last node of $t$ is a structural node therefore, since the eta-expanded tree contains the tree of $M$, this path is necessarily also the path to $t^\omega$ in the tree of $M$ (and thus $\pview{t}$ does not contain any ghost node, even though $t$ itself may contain ghost nodes).
 Consequently, if the last node is a variable node, its enabler necessarily occurs exactly once in the P-view.

(ii) Rule \rulenamet{Var}: In the concrete sub-case, $m$ is necessarily itself a structural node since $|m|\geq i>0$, and its $i$th child exists in $\Nodes$ since $m$ as arity greater than $i$.
\end{proof}

\begin{proposition}
\label{prop:eta_expanded_trav}
Let $M$ be an untyped term and $t$ be a \emph{finite} traversal of $M$.
There exists a one-to-one function $\eta_t : \ExtendedNodes(M) \longrightarrow \ExtendedNodes(M^t)$
such that its implicit \emph{element-wise pointer-preserving} extension $\eta_t : \mathcal{J}(M) \longrightarrow \mathcal{J}(M^t)$ verifies the properties:
\begin{enumerate}[label=(\roman*),nosep]
    \item For any $u \in \travulc(M)$, $\eta_t(u)$ is a traversal of $M^t$.

    \item If $v$ traverses
    $M^t$ then there exists a traversal $u$ of $M$ (with possibly ghost nodes) such that $v = \eta_t(u)$.

    \item The restriction of $\eta_t$ to $\travset(M)$ defines a strong bijection with $\travset(M^t)$ (\ie, a bijection preserving the traversal structure, but not necessarily labels).
\end{enumerate}
\end{proposition}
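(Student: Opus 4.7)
The plan is to construct $\eta_t$ by induction on the construction of $t$, mirroring the inductive definition of $M^t$ from Definition~\ref{def:onthefly_etaexpansion}. Each ${\sf ETA}(M^u, m)$ operation replaces the subterm rooted at $m$ with $\lambda\theta. N\theta$, which amounts to inserting in the computation tree a new structural lambda node binding a fresh $\theta$ together with a new structural occurrence of $\theta$ as a fresh operand of $m$, so that $m$'s arity is incremented by one. These new structural nodes play exactly the role that ghost nodes played in $M$ when the traversal step that triggered the ${\sf ETA}$ was taken. I would therefore define $\eta_t$ to be the identity on structural nodes of $M$ (the tree of $M$ embeds into the tree of $M^t$), to map each materialized ghost to its newly-introduced structural counterpart in $M^t$, and to map all remaining ghosts of $M$ to ghost nodes of $M^t$ carrying the same relative enabling structure. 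Injectivity and preservation of the enabling relation $\enables$ are immediate from the construction, so the element-wise extension to justified sequences automatically preserves justification pointers.

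For property (i), I would induct on the length of $u\in\travset(M)$, case-splitting on the last rule applied. The structural rules \rulenamet{Root}, \rulenamet{App}, \rulenamet{Lam} commute with $\eta_t$ directly, since $\eta_t$ preserves structural parents and children. For \rulenamet{Var}, \rulenamet{IVar}, and \rulenamet{Lam^\ghostlmd} the non-trivial case is when $u$ invokes the \emph{eta-expanded} sub-case with link label $i>|m|$. Two sub-cases then arise in $M^t$: either the expansions performed during the construction of $M^t$ raised the arity of $\eta_t(m)$ up to at least $i$, in which case the rule fires in its \emph{concrete} sub-case in $M^t$ and $\eta_t$ sends the ghost child picked in $u$ to its materialized structural counterpart; or the expansions have not gone far enough, in which case the rule fires in its \emph{eta-expanded} sub-case in $M^t$ as well. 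Both cases yield exactly $\eta_t(u) \cdot \eta_t(\alpha)$ as a valid traversal extension. Proposition~\ref{prop:pathview_ulc} is used to identify the P-view and O-view computations on $\eta_t(u)$ with the images under $\eta_t$ of the corresponding views on $u$.

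Property (ii) is proved by the converse induction: given $v\in\travset(M^t)$, one builds a preimage $u$ occurrence-by-occurrence by inverting $\eta_t$ — every structural node of $M^t$ introduced by some ${\sf ETA}$ operation is preimaged by the ghost of $M$ it materialized. The resulting $u$ is a traversal of $M$ possibly involving ghosts, and $\eta_t(u)=v$. Property (iii) then follows immediately: $\eta_t$ restricted to $\travset(M)$ is injective (by injectivity on $\ExtendedNodes(M)$ together with pointer preservation) and surjective (by~(ii)), and it preserves the structural triples $(\mbox{type}, \mbox{link distance}, \mbox{link label})$, though not variable labels, since $M^t$ introduces fresh $\theta$'s.

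The main obstacle I expect is the bookkeeping for iterated ${\sf ETA}$ applications on the same node: each visit of the traversal to a ``fake'' child of $m$ may trigger a further ${\sf ETA}$ on $m$, so the eta-expanded arity of $\eta_t(m)$ depends on how many such visits occurred \emph{prior} to the current step in $t$. The induction must track this arity growth precisely and establish that ${\sf ETA}$s applied at distinct nodes commute, so that $M^t$ is well-defined independently of interleavings in $t$, and that $\eta_t$ correctly identifies which ghosts of $M$ fall in the materialized versus non-materialized range at each step. Nested ghosts (a ghost variable enabled by a ghost lambda enabled by another structural node) require propagating the correspondence through the chain while ensuring the link-label arithmetic of \rulenamet{Lam^\ghostlmd} ($|\alpha|+i-|n|$) remains compatible with the incremented arities in $M^t$.
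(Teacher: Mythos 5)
Your construction of $\eta_t$ (identity on the structural subtree, materialized ghosts sent to the freshly introduced $\theta$-nodes, remaining ghosts to ghosts) and your inductions for (i) and (ii) — with the key case being the eta-expanded sub-case of \rulenamet{Var}/\rulenamet{IVar} switching to the concrete sub-case in $M^t$ when the ghost was materialized — are essentially identical to the paper's own proof, which likewise defines $\eta_t$ by induction on $t$ using the fact that each ${\sf ETA}$ increments the justifier's arity, and obtains (iii) from injectivity by construction plus surjectivity from (ii). The bookkeeping concerns you raise are real but the paper treats them at the same level of informality, so your proposal matches it.
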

\proofatend
The tree $\ctree(M)$ being by definition a subtree of $\ctree(M^t)$, induces a one-to-one mapping from $\Nodes(M)$ to $\Nodes(M^t)$.
We extend it to ghost nodes by mapping ghost occurrences in $t$ to the corresponding node resulting from the eta-expansions of subterms of $M$ in $M^t$, and all other ghost nodes not occurring in $t$ to the corresponding ghost nodes in $\ctree(M^t)$. Formally, we define $\eta_t$ by induction on the traversal $t$, observing that in rules \rulenamet{Var} and \rulenamet{IVar}, the on-the-fly eta-expansion of Definition~\ref{def:onthefly_etaexpansion} increments the arity of the justifier node $m$, which guarantees that the corresponding structural node exists in $\ctree(M^t)$.

(i) By induction on $u$. Reusing the rule used to traverse $u$ to traverse $\eta_t(u)$. For the case \rulenamet{Var} and \rulenamet{IVar}, when the \emph{eta-expanded} subcase of the rule is used to traverse a ghost variable that also appeared in $t$, we use instead the \emph{concrete} sub-case used to traverse $\eta_t(u)$.

(ii) By induction on $v$, applying to $u$ each rule that is applied to $v$. For the case \rulenamet{Var} and \rulenamet{IVar}, if the index $i$ is greater than the arity of the node $m$ in $M$ then the lambda node from $v$ gets replaced by a ghost lambda node in $u$.

(iii) By (i) the function is well defined, it is injective by definition, and by (ii) it is surjective.
\endproofatend

\begin{example}
Continuing with the previous example: the following traversal obtained from $t$ is a valid traversal of $M^t$ where the occurrences corresponding to ghosts in $t$ are underlined:
$$\Pstr[0.7cm]{(n0){\lambda }~(n1){@}\ (n2-n1){\lambda u}~(n3-n2){u}~(n4-n1){\lambda v}~(n5-n4){v}\ (n6-n3){\bf \lambda}\ (n7-n0){y_1}\ (n8-n7){\bf \lambda}~(n9-n2){u}~(n10-n1){\lambda v}~(n11-n10){v}\ (n12-n9){\underline{\lambda\beta}}~(n13-n8){\underline{\alpha}}~(n14-n13){\underline{\lambda}}~(n15-n12){\underline{\beta}}\ (n16-n11){\lambda}~(n17-n0,35){y_2} }.$$
\end{example}

\paragraph{Traversal core}

We now generalize the notion of \emph{traversal core} from~\cite{BlumPhd} to imaginary traversals. In the typed setting, the \emph{core} is defined as the subsequence of a traversal consisting of external nodes only. In the untyped setting, we complement this filtering with a relabelling operation on lambda nodes.

We consider sequences of variable names in $\VarSet^*$ as stacks equipped with the standard stack operations:
for any sequence of variables $\overline{x} = x_1 \ldots x_n$, $n,j\geq0$, the `pop' operation $pop_j$ removes the first $j$ elements of the sequence: $pop_j (x_1 \ldots x_n) = x_{j+1} \ldots x_n$ if $j<n$, and $pop_j (x_1 \ldots x_n) = \epsilon$ otherwise. And for any sequence $\overline{y} = y_1 \ldots y_m$, $m\geq0$ we write $\overline{x}\overline{y}$ for
$x_1 \ldots x_n y_1 \ldots y_n$, the stack obtained after pushing the variables $\overline{x}$ (read backwards) onto $\overline{y}$.

\begin{definition}[Core projection]
\label{def:coreprojection}
Let $\overline{y} \in \VarSet^*$ denote a stack of variable names.
We define the partial function $\coresymbol_{\overline{y}}\colon \justseqset(M) \longrightarrow \justseqset(M)$ on the set of justified sequences by induction on the sequence.
Let $t \cdot n\in\justseqset(M)$ denote a justified sequence for some prefix subsequence $t$ and last occurrence $n$.
\begin{equation*}
\coresymbol_{\overline{y}}(t\cdot n) =
\begin{cases}
\coresymbol_{\epsilon}(t) \cdot n
    & \mbox{if } n\in\ImNodesVar\inter\ExtNodes \\
\coresymbol_{pop_{|n|}(\overline{y})}(t)
    & \mbox{if } n\in(\ImNodesVar\inter\IntNodes)\union\NodesApp \\
\coresymbol_{\epsilon}(t) \cdot \lambda \overline{x}\overline{y}
    & \mbox{if } n\in\NodesLmd\inter\ExtNodes, n \mbox{ labelled } \lambda\overline{x}\\
\coresymbol_{\overline{x} \overline{y}}(t)
    & \mbox{if } n\in\NodesLmd\inter\IntNodes, n \mbox{ labelled } \lambda\overline{x} \\
\coresymbol_{\epsilon}(t) \cdot\lambda\overline{y}
    & \mbox{if } n\in\ghostlmd\inter\ExtNodes \\
\coresymbol_{\overline{y}}(t)
    & \mbox{if } n\in\ghostlmd\inter\IntNodes \ .
\end{cases}
\end{equation*}
The above definition is pointer-preserving, so that the justifiers in $\coresymbol(t\cdot n)$ are defined to be the occurrences corresponding to the original justifiers in $t \cdot n$.

Note that if $t$ is a traversal, the second last case is not needed since ghost lambdas are necessarily internal by Prop~\ref{prop:trav_immediate}.

We call $\coresymbol_\epsilon(t)$ the \defname{core} of $t$ which we abbreviate as $\core{t}$. The P-view of the core $\pview{\core{t}}$ is called the \defname{core P-view} of $t$.
\end{definition}

The suffix parameter $\overline{y}$ in the definition of $\coresymbol$, is used as an accumulator for the list of variables forming the abstractions $\lambda \overline{y}$ that eventually gets prepended to the last external lambda nodes. We call it the \defname{stack of pending lambdas}. Hence, \emph{core projection} can be defined more succinctly as the sequence obtained by removing internal nodes and prepending to each external lambda node's label, the stack of \emph{pending lambdas} between that point to the next external node.

Since the core of a justified sequence only contains external nodes it follows that $\coresymbol$ is idempotent.

\begin{example}Take the term $\lambda w . (\lambda x y .x) z$ and consider traversal $t = \lambda w\cdot @ \cdot \lambda x y\cdot x\cdot\lambda \cdot z$.
    Then $\coresymbol(t) = \lambda wy \cdot z$.
\end{example}

\begin{example} Take $(\lambda x . x x)(\lambda y . y)$ with traversal
$t$ represented below. Projecting nodes hereditarily justified by the root gives
$\Pstr[0.7cm]{(n0){\lambda }\ (n1-n0){{\ghostvar^1}}}$
while the core projection gives
$\coresymbol(t) = \Pstr[0.7cm]{(n0){\lambda y}\ (n1-n0){y}}$.
$$t = \Pstr[0.7cm]{(n0){\lambda}~(n1){@}~(n2-n1){\lambda x}~(n3-n2){x}~(n4-n1){\lambda y}~(n5-n4){y}~(n6-n3){\lambda }~(n7-n2){x}~(n8-n1){\lambda y}~(n9-n8){y}~(n10-n7){{\ghostlmd^{1}}}~(n11-n6){{\ghostvar^{1}}}~(n12-n5){{\ghostlmd^{1}}}~(n13-n4){{\ghostvar^{2}}}~(n14-n3){{\ghostlmd^{2}}}~(n15-n2){{\ghostvar^{2}}}~(n16-n1,41){{\ghostlmd^{2}}}~(n17-n0,41){{\ghostvar^{1}}}}$$
\end{example}

\subsection{Imaginary traversals subsume STLC traversals}

Traversals for simply-typed languages were previously studied in~\cite{BlumPhd}.
The traversal rules defined in Table~\ref{tab:trav_rules} closely match those of the simply-typed lambda calculus and PCF from~\cite{BlumPhd} with some important differences:
\begin{description}[nosep]
  \item[No interpreted constants] Unlike PCF, there are no interpreted constants in the present setting therefore the rule \rulenamet{Value} and \rulenamet{InputValue} from the original presentation are not needed.

  \item[No $\eta$-long expansion] In the original STLC traversals, the term is eta-long expanded prior to calculating the set of traversals. This guarantees that the operand of an application always exists in the tree.
  Imaginary traversals, on the other hand, are defined on the unmodified tree representation of the term.

  \item[On-the-fly $\eta$-expansion] In the untyped setting, eta-long expansion is an infinite process, so instead of eta-long expanding the term prior to traversing it, imaginary traversals perform eta-expansion  `on the fly'. Eta-expansion occurs in rule \rulenamet{Var} when the arity of a variable in operand position is too low to statically determine the operand of an application (case $i>|n|$). When the variable arity is high enough ($i\geq|n|$), the definition of the rule coincides with STLC and the static tree representation of the term dictates the next node to visit.

  \item[Free variables] Eta-expansion can also occur on free variables, so like for  rule \rulenamet{Var}, the input-variable rule allows for infinitely many eta-expansions ($k>0$).

  \item[Traversal core] In the typed setting, the core is  obtained by just filtering nodes with respect to the tree root. In the untyped settings, lambda nodes also get relabelled after filtering.

  \item[Traversing ghost nodes] There is an additional rule \rulenamet{Lam^\ghostlmd} for the case where a traversal ends with a ghost lambda node. In the concrete sub-case, rule \rulenamet{Lam} visits the unique child node of the last lambda node in the traversal. In the eta-expanded sub-case where the last node is a ghost lambda node, there is no such child node, so we visit an imaginary one: the variable node that would be created if we were to eta-expand the sub-term under the lambda.

   The ghost placeholder $\ghostvar$ thus represents an occurrence of the $j$th variable that would be bound by lambda node $\alpha$ if the sub-term at node $\alpha$ were eta-expanded $i-|n|$ times: hence $j = |\alpha| + i - |n|$.
   (Observe that in this case we necessarily have $i>|n|$ since the $i$th child of $n$ is a ghost variable node.)
\end{description}

Let's fix a simply-typed term-in-context $\Gamma \entail M :T$, with typed-context $\Gamma$ (a set of typed variables), and simple type $T$. Its \defname{eta-long form} is defined inductively on the type $T$ and is obtained by recursively eta-expanding every subterm as many times as possible with respect to the type of the subterm~\cite{Ong2006,BlumPhd}.
By abuse of language we say that an untyped term $M$ is in eta-long form if it inhabits a simple type $T$ such that it equates its own eta-long expansion with respect to $T$, modulo variable renaming.

STLC traversals from~\cite{BlumPhd}
are defined for eta-long form only. The rules
consist of the same rules as the imaginary traversal from Table~\ref{tab:trav_rules} with the exclusion of all sub-cases required to implement `on-the-fly eta-expansion' or to traverse ghost nodes:
\begin{definition}[STLC traversal~\cite{BlumPhd}]
Let $\Gamma \entail M : T$ be a simply-typed term in context. The set of traversals $\travstlc(\Gamma \entail M : T)$ from \cite{BlumPhd} is  the set of traversals of the \emph{eta-long form} of $M$ generated from the rules of Table~\ref{tab:trav_rules} with the exclusion of \rulenamet{Lam^\ghostlmd} and the `eta-expanded` sub-cases of \rulenamet{Var} and \rulenamet{IVar}.
\end{definition}

ULC and STLC traversals coincide on eta-long forms in the following sense:
\begin{proposition}
\label{prop:ulc_and_stlc_trav_coincide}
Let $M$ be an untyped term inhabiting a simple type $T$ for some context $\Gamma$ ($\Gamma \entail M : T$) and $\etalf{M}$ denote its \emph{eta-long normal form} with respect to $T$. Then  we have
$$\travstlc(\Gamma \entail M : T) = \travulc'(\etalf{M})$$
where $\travulc'$ denote the set of traversals obtained with rules of Table~\ref{tab:trav_rules} with the exclusion of the eta-expanded case in \rulenamet{IVar}.
\end{proposition}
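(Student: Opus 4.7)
The plan is to prove the two inclusions separately by directly comparing the rule sets.

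The forward inclusion $\travstlc(\Gamma \entail M : T) \subseteq \travulc'(\etalf{M})$ is immediate. The rules generating $\travstlc$ form a strict subset of those generating $\travulc'$: the latter additionally allows \rulenamet{Lam^\ghostlmd} and the eta-expanded subcase of \rulenamet{Var}, while both systems forbid the eta-expanded subcase of \rulenamet{IVar}. Since both sets of rules operate on the same tree $\ctree(\etalf{M})$, every derivation under the smaller rule set is also a derivation under the larger one.

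For the reverse inclusion $\travulc'(\etalf{M}) \subseteq \travstlc(\Gamma \entail M : T)$, I would proceed by induction on the traversal length, establishing the stronger invariant that no traversal in $\travulc'(\etalf{M})$ contains any ghost node. The base case (the singleton root traversal) is trivial. In the inductive step the structural rules \rulenamet{Root}, \rulenamet{App}, \rulenamet{Lam} introduce only structural nodes; \rulenamet{Lam^\ghostlmd} is inapplicable since the induction hypothesis rules out existing ghost lambdas; and \rulenamet{IVar} is restricted to its concrete subcase by definition of $\travulc'$. The single nontrivial obligation is therefore that whenever \rulenamet{Var} fires on a ghost-free traversal of $\etalf{M}$, the concrete subcase $i \leq |m|$ applies, where $m$ is the immediate predecessor of the justifier $\alpha$ of the final variable $n$.

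This is precisely the arity-alignment property characterising STLC traversals on eta-long forms, as established in \cite{BlumPhd}. Concretely, eta-long form forces $|m|$ to coincide with the type-arity of $m$ for every variable or @-node of $\ctree(\etalf{M})$, and each child of $m$ to bind exactly as many variables as dictated by the corresponding operand's type. A simple invariant, preserved by every rule, is that for each adjacent pair $m \cdot \alpha$ in the traversal one has $|m| = |\alpha|$: rule \rulenamet{App} makes $\alpha$ the operator whose binder count is forced by the type of the @-node to equal $|m|$, and rules \rulenamet{Var} and \rulenamet{IVar} create the new adjacency whose arities agree by transferring the preceding alignment through the matching operand and binder types of the eta-long form. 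Since $n$ is non-ghost, $i \leq |\alpha|$ holds by definition of its justification pointer; combined with $|\alpha| = |m|$ this yields $i \leq |m|$ as required.

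The main obstacle is formulating and propagating the arity-alignment invariant precisely through all four rule cases. Rather than recapitulating the type-theoretic bookkeeping, the cleanest route is to observe that every ghost-free $\travulc'$ traversal of $\etalf{M}$ applies exactly the same rules to exactly the same structural nodes as a $\travstlc$ traversal, so that the well-definedness argument of \cite{BlumPhd} transfers verbatim and closes the induction.
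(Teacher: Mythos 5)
Your overall strategy is the same as the paper's: the forward inclusion is immediate from rule containment, and the reverse inclusion reduces to showing that the eta-expanded subcase of \rulenamet{Var} can never fire on the eta-long form, so that no ghost node is ever introduced. The paper's proof is exactly this induction, phrased via two arity observations (the arity of an $@$-node equals that of its operator child; the arity of a variable with binding index $i$ equals that of the $i$th child reached through the lookup).

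There is, however, one step in your argument that fails as written. Your auxiliary invariant --- ``for each adjacent pair $m\cdot\alpha$ in the traversal one has $|m|=|\alpha|$'' --- is false for the adjacencies created by \rulenamet{IVar}, and you explicitly claim \rulenamet{IVar} preserves it. Take $\lambda f.\, f(\lambda x y.\, x)$ with $f : (o\to o\to o)\to o$, which is eta-long: after \rulenamet{IVar} the traversal contains the adjacent pair $f\cdot\lambda xy$ with $|f|=1$ but $|\lambda xy|=2$, since the arity of a variable is the arity of \emph{its own} type while the arity of its $i$th child is the arity of the $i$th \emph{argument} type, and these need not agree. So the induction, as you set it up, cannot be closed at the \rulenamet{IVar} case. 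The repair is to restrict the invariant to adjacencies $m\cdot\alpha$ in which $\alpha$ is \emph{internal} (hereditarily justified by an $@$-node): these are precisely the pairs created by \rulenamet{App} and \rulenamet{Var}, and precisely the pairs that the premise of \rulenamet{Var} can ever inspect, since that rule requires $n$ --- and hence its binder $\alpha$ --- to be hereditarily justified by an $@$-node. On that restricted class the type-level alignment (the type of $\alpha$ matches the types of $m$'s children) does propagate through \rulenamet{App} and \rulenamet{Var} in eta-long form and yields $i\le|\alpha|=|m|$ as you intend. With that correction your argument coincides with the paper's.
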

\begin{proof}
Because the term is eta-long, the condition $i>|m|$ in rules \rulenamet{Var} never holds while traversing the term. This is shown by induction on the rules, observing that the arity of an $@$ node is necessarily equal to the arity of its $0$th child lambda node, and the arity of a variable node with binding index $i$ is necessarily equal to the arity of the $i$th child of its binder's parent.
And since the eta-expanded rule \rulenamet{IVar}  is excluded, traversals in $\travulc'(\etalf{M})$ do not contain any ghost node. Other rules coincide with STLC, therefore the equality holds.
\end{proof}

\subsection{Property of ghost nodes}

It is helpful to think of ghost nodes as the counterpart of complex numbers which are sometimes used in mathematics
to prove trigonometry identities: they are introduced intermediately to perform some calculation (e.g.\ using De Moivre's Theorem) but do not appear in the final result. Just like the imaginary number $i$ is created out of the impossibility of calculating the square root of $-1$, ghost nodes are defined from the impossibility of ``traversing'' a beta-redex of a lambda term due to an insufficient number of operands.

Ghost nodes appear in a traversal when the arity of a node is too low to be able to structurally traverse it:
\begin{property}
\label{prop:ghost_justifier_arity}
If $\Pstr[0.5cm]{(x){x} \cdots (y-x,30:i){y}} \in \travulc$ then
$ y \in \ghostvar \union \ghostlmd \iff i > |x|.$
In particular for all $n\in\NodesVar$ and $\alpha\in\NodesLmd$:
\begin{enumerate*}[nosep,label=(\roman*)]
\item $\Pstr[0.5cm]{(x){n} \cdots (y-x,30:i){\ghostlmd}} \in \travulc \implies i > |n|$
\item $\Pstr[0.5cm]{(x){\alpha} \cdots (y-x,30:i){\ghostvar}} \in \travulc \implies i > |n|$.
\end{enumerate*}
\end{property}

\begin{definition}
We call \defname{ghost materialization} any application of a traversal rule where the last occurrence in the traversal prior to applying the rule is a ghost node ($\ghostvar$ or $\ghostlmd$), and the node traversed after applying the rule is a structural node of the tree (in $\Nodes$).
\end{definition}

\begin{remark}[Materialization rule]
Observe that among all the rules defined in Table~\ref{tab:trav_rules}, the rule \rulenamet{Var} is the only rule that can materialize a structural node in $\Nodes$ from a traversal ending with a ghost node. This means that after traversing ghost nodes, the only way to `return' to structural nodes is to visit a ghost variable node $\ghostvar$ with an application of rule \rulenamet{Var} of the following form:

$$\rulename{Var}\ \  \Pstr[0.5cm]{ t \cdot(beta){\beta} \cdot
(y){y} \cdot (l){\alpha}  \ldots (t-l,30:i){\ghostvar}
    \cdot (lx-y,40:i){\lambda \overline{x}}
     \in\travulc}$$
where
\begin{itemize*}[nosep,label=]
\item $y \in \NodesVar$,
\item $\lambda \overline{x} \in \NodesLmd$ is the $i$th child lambda node of $y \in \NodesVar$,
\item $\alpha$ is either a structural lambda node in $\Nodes_\lambda$ or a ghost lambda node in $\ghostlmd$,
\item $0\leq \alpha < i \leq |y|$.
\end{itemize*}
\end{remark}

\subsection{Correspondence with Game semantics}
\label{sec:traversal_correspondence_stlc}

In~\cite{BlumPhd} we formalized the correspondence between the theory of traversals and Game Semantics in the setting of simply-typed languages: there is a bijection between the set of traversals of a term and its revealed interaction game denotation. Furthermore, the core projection yields a bijection with the standard innocent game denotation.
In other words, the traversal cores are precisely plays from the game denotation of the term.
Formally:
\begin{theorem}[Traversal-Play Correspondence for STLC (Theorem 4.96 in~\cite{BlumPhd})]
\label{thm:gamesem_correspondence_stlc}
The following two bijections hold for every simply-typed term $\Gamma \entail M :T$ with eta-long normal form $\etalf{M}$:
\begin{eqnarray*}
    \travstlc(M) & \cong & \revsem{\Gamma \entail M :T} \\
    \travstlc^\coresymbol(\etalf{M}) & \cong & \sem{\Gamma \entail M :T} \enspace .
\end{eqnarray*}
where $\revsem{\Gamma \entail M : T}$ and
$\sem{\Gamma \entail M : T}$
denote respectively the \emph{revealed game denotation} and \emph{innocent game denotation} of
$\Gamma \entail M : T$;
and $travstlc^\coresymbol$ denotes the image of $travstlc$ by $\coresymbol$ (\ie, set of justified sequences that are \emph{core} of some STLC-traversal).
\end{theorem}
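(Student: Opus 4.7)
The plan is to proceed by structural induction on the simply-typed term $\Gamma \entail M : T$, establishing simultaneously both bijections by exhibiting an explicit structure-preserving correspondence between nodes of the computation tree of $\etalf{M}$ and moves in the arena for $\Gamma \vdash T$. First I would define, at each node of $\ctree(\etalf{M})$, its image as a move in the corresponding arena: each lambda node binding variables $\lambda \overline{x}$ maps to a collection of O-moves in the appropriate component of $\Gamma \vdash T$, and each variable/application node maps to the corresponding P-move (or to an internal move, when the node is hereditarily justified by an $@$-node). External nodes will correspond to moves in $\Gamma \vdash T$ itself, while internal nodes correspond to moves in intermediate arenas created during composition.

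The induction then proceeds by case analysis on the syntactic form of $M$. For variables $M = x$ and for abstractions $M = \lambda x.N$, the correspondence is essentially immediate from the game-semantic denotation and from rules \rulenamet{Root}, \rulenamet{Lam}, \rulenamet{IVar}. The real work is in the application case $M = N P_1 \ldots P_k$: here the revealed game denotation $\revsem{\Gamma \entail M : T}$ is built by composing the (revealed) strategies for $N$ and the $P_i$, \emph{without hiding} intermediate moves. I would show that the rule \rulenamet{Var} exactly replays the copy-cat/composition behavior: when a traversal visits a variable $x$ hereditarily justified by an $@$-node and then follows with a child lambda, this is precisely the game-semantic pattern of passing control from a sub-strategy to its argument through the interaction arena. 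In the simply-typed, eta-long setting, rule \rulenamet{Var} deterministically dictates the $i$th operand, which is exactly what composition dictates, so a direct structural induction gives the first bijection $\travstlc(M) \cong \revsem{\Gamma \entail M : T}$.

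For the second bijection, I would show that the core projection $\coresymbol$ implements precisely the hiding operation involved in passing from the revealed denotation to the standard innocent denotation $\sem{\Gamma \entail M : T}$. Concretely: a node is \emph{internal} iff it is hereditarily justified by an $@$-node iff the corresponding game move is in an intermediate arena and therefore hidden during composition. By Property~\ref{prop:trav_immediate}, discarding internal nodes preserves alternation and justification at external nodes, so $\coresymbol(t)$ is a valid justified sequence over $\Gamma \vdash T$. One then checks that the resulting sequence satisfies the usual constraints (P-visibility, innocence) by using the view-path correspondence (Proposition~\ref{prop:pathview_ulc} restricted to the eta-long simply-typed case, where no ghost nodes appear), and that every play in $\sem{\Gamma \entail M : T}$ arises uniquely this way by lifting back to a revealed interaction.

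The main obstacle, as in the original proof in~\cite{BlumPhd}, is the application case, and more specifically verifying that repeatedly applying composition (and hence repeatedly passing through rule \rulenamet{Var}) yields exactly the traversals of the whole term, with all pointers matching. This requires a careful bookkeeping of justification pointers across the composed sub-strategies, which is where the hereditary-justification characterization of internal vs external nodes is essential; once this bookkeeping is set up cleanly (e.g.\ via an auxiliary lemma stating that $t \filter \ExtNodes$ is preserved under one composition step), both bijections follow uniformly and the core projection bijection is obtained by post-composing with the hiding operator.
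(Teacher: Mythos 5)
The paper does not actually prove this theorem: it is recalled from the author's thesis (Theorem 4.96 in \cite{BlumPhd}), so there is no in-paper argument to compare yours against. That said, your outline follows the same strategy the paper attributes to that reference --- induction on the term, with the revealed/interaction denotation matching the full traversal set, rule \rulenamet{Var} replaying strategy composition, and the core projection $\coresymbol$ implementing the hiding of internal moves (the nodes hereditarily justified by an $@$-node) --- so you are on the right route rather than a different one. Be aware, however, that what you have is a plan rather than a proof: the application case, where one must show that uncovering the interaction of the composed sub-strategies yields exactly the traversals of the whole term with all justification pointers matching, is precisely where the thesis proof spends its effort, and your proposal names this obstacle without discharging it.
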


\paragraph{Game semantics Correspondence for ULC}
What would be the equivalent of Theorem~\ref{thm:gamesem_correspondence_stlc} in the untyped case? In his thesis, Andrew Ker studied Game models for the untyped lambda calculus~\cite{KerThesis}.  We conjecture that the traversal-game semantics correspondence for STLC also holds for ULC
Table~\ref{tab:trav_rules}
between the traversals from Table~\ref{tab:trav_rules} and the game model of ULC from Ker's thesis, \ie there is an isomorphism between the set of imaginary traversals and the revealed game denotation of the ULC term, and further, an isomorphism between the standard game semantics and the set of traversal cores:

\begin{conjecture}[Traversal-Play Correspondence for ULC]
\label{conj:ulc_corresp}
For every untyped lambda-term $M$ there are two bijections
$ \travulc(M) \cong \revsem{M}$ and
 $\travulc^\coresymbol(M) \cong \sem{M}$
where $\sem{M}$ denotes the innocent \emph{effectively almost everywhere copycat(EAC)} game denotation of $M$ defined in~\cite{KerThesis},
$\revsem{M}$ denotes the corresponding interaction game denotation where internal moves are not hidden during strategy composition,
and $\travulc^\coresymbol(M)$ denotes the set of traversal cores of $M$.
\end{conjecture}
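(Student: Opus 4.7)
The plan is to bootstrap the correspondence from the simply-typed case (Theorem~\ref{thm:gamesem_correspondence_stlc}) using the on-the-fly eta-expansion bijection of Proposition~\ref{prop:eta_expanded_trav}. The key observation is that although $M$ has no type, every \emph{finite} traversal $t$ involves only finitely many eta-expansion steps, so $M^t$ is a finite term; and the \emph{effectively almost everywhere copycat} condition in Ker's EAC model is precisely designed to capture the fact that beyond some finite depth, strategies become uniform copycats. This deterministic tail corresponds exactly to how ghost-node suffixes of traversals are generated by rule \rulenamet{Lam^\ghostlmd}, suggesting that both objects admit a common presentation as a directed colimit of simply-typed approximations.

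First I would prove a \emph{finitary} version of the bijection. For any finite $t \in \travulc(M)$, Proposition~\ref{prop:eta_expanded_trav} supplies a structure-preserving bijection $\eta_t : \travulc(M) \to \travulc(M^t)$. Under the assumption that $M^t$ inhabits a simple type (which, by construction of the on-the-fly eta-expansion, it does for any $t$ whose justifications are concrete), Theorem~\ref{thm:gamesem_correspondence_stlc} yields $\travstlc(M^t) \cong \revsem{M^t}$, and Proposition~\ref{prop:ulc_and_stlc_trav_coincide} identifies $\travstlc(M^t)$ with the appropriate subset of $\travulc(M^t)$. I would then identify $\revsem{M^t}$ with a suitable ``finite-depth truncation'' of $\revsem{M}$ in Ker's model, and show that the family of bijections $\{\eta_t\}_t$ is compatible under extension of $t$, so that the colimit over all finite traversals yields the desired bijection $\travulc(M) \cong \revsem{M}$.

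For the second bijection $\travulc^{\coresymbol}(M) \cong \sem{M}$, I would verify that the core projection of Definition~\ref{def:coreprojection} implements, on the traversal side, the hiding of internal moves performed during strategy composition on the semantic side. The stack of pending lambdas $\overline{y}$ accumulated by $\coresymbol_{\overline{y}}$ records precisely those bound variables introduced by hidden eta-expansions that are relabelled onto the surviving external lambda node, and this relabelling should correspond to the way the EAC model collapses an infinite type hierarchy into finite arena data presented by a copycat tail.

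The main obstacle will be making Ker's EAC model interoperate cleanly with the varying approximations $M^t$. Two different finite traversals $t_1, t_2$ generally produce distinct eta-expansions, yet the induced bijections must agree on their overlap; this will likely require either developing a theory of typed approximations inside the EAC model, or reformulating the correspondence directly at the level of justified-sequence structure so that the simply-typed detour can be bypassed. A further subtlety is handling non-normalizable terms, where traversals can be infinite: one has to check that every infinite EAC play arises as the limit of a compatible sequence of extending traversals, and this is where Ker's specific formulation of the EAC condition will play a decisive role.
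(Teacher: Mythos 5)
First, a point of order: the paper does not prove this statement. It is stated as Conjecture~\ref{conj:ulc_corresp} and explicitly left open; the authors offer only a one-sentence speculation that a proof ``might consist in a similar argument to the STLC case'' but based on Ker's game model, possibly exploiting the morphism $Fun : U \rightarrow (U \Rightarrow U)$ of Ker's category. So there is no proof in the paper to compare yours against, and any complete argument would be a new contribution rather than a reconstruction.

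That said, your proposal as written has a genuine gap that breaks its central step. You invoke Theorem~\ref{thm:gamesem_correspondence_stlc} on the eta-expansion $M^t$, justified by the parenthetical claim that $M^t$ inhabits a simple type ``by construction of the on-the-fly eta-expansion'' for any $t$ with concrete justifications. This is false: on-the-fly eta-expansion replaces a subterm $N$ by $\lambda\theta.\, N\,\theta$, which preserves $\beta\eta$-equivalence and therefore cannot turn an untypeable term into a typeable one. For example $M = (\lambda x.\, x\, x)(\lambda y.\, y)$ is not simply typeable (the self-application forces $a \to a = a$), yet it has finite traversals, and every $M^t$ is equally untypeable; this is precisely the class of terms for which the conjecture says something new, since for terms that do inhabit a simple type Proposition~\ref{prop:ulc_and_stlc_trav_coincide} already reduces the question to the STLC theorem. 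Consequently the finitary bijection $\travstlc(M^t) \cong \revsem{M^t}$ is unavailable, and the proposed colimit has nothing to range over. The remaining ingredients --- compatibility of the family $\{\eta_t\}_t$ under extension of $t$, the meaning of a ``finite-depth truncation'' of $\revsem{M}$ in a model built over a single reflexive arena $U \cong U \Rightarrow U$ rather than a stratified type hierarchy, and the recovery of infinite EAC plays as limits of extending traversals --- are the actual mathematical content of the conjecture, and your proposal defers all of them. The sketch identifies reasonable intuitions (the core projection as move-hiding, ghost suffixes as copycat tails matching the EAC condition), but it does not constitute a proof.
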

A possible proof of this conjecture might consist in a similar argument to the STLC case~\cite{BlumPhd} but based on Ker's game model of the untyped lambda calculus instead of the innocent game model of STLC. There seems, for instance, to exist a possible connection between the ``on-the-fly'' eta-expansion of imaginary traversals and  the morphism $Fun : U \rightarrow (U \Rightarrow U)$ of Ker's game category where $U$ denotes the maximal game arena.

\section{Normalizing Traversals}

In this section we show how traversals can help evaluate lambda terms and find their normal form when it exists. The crux of the algorithm lies in the \emph{Path Characterization} (Proposition~\ref{prop:path_charact_stlc} and Theorem~\ref{thm:path_charact_ulc}) which states that the set of traversals \emph{core P-views} captures what is strictly required from traversals to reconstruct the normal form of a term.
For STLC, the characterization result was shown using the Game Semantics correspondence; we prove its untyped counterpart using instead a term-rewriting argument.

Path Characterization suggests a normalization method based on enumeration of all maximal traversals. Unfortunately, due to the generous non-determinism of the free variable rule, traversals can be arbitrarily long and therefore the set of traversals may be infinite, even for terms with a beta-normal from! Not all traversals need to be enumerated, fortunately. It is sufficient to enumerate a subset of traversals that covers all the possible core P-views. Some traversals are redundant in the sense that there exist shorter traversals with identical core P-view.
We capture this notion by considering equivalence classes on traversals: two traversals are in the same class just if they have the same core P-view. It then suffices to exhibit a traversal subset that is \emph{complete} for the equivalence classes, in the sense that it contains at least one traversal for each equivalence class. If such subset is effectively computable and finite then it yields a normalization procedure. We show how this can be done for both STLC and ULC.

For the simply-typed lambda calculus, we consider the subset of \emph{branching traversals} which verifies the desired completeness property. For simply typed terms in eta-long form this subset is finite. This yields a normalization procedure for STLC. Soundness follows from the characterization theorem which is an immediate consequence of the Game Semantics correspondence of~\cite{BlumPhd}.

For the untyped lambda calculus, \emph{branching traversals} are not adequate as they can still be infinite even for terms with a normal form. We introduce another kind of traversals, called \emph{normalizing traversals} to limit the non-determinism of the traversals and derive  from it a procedure to calculate $\beta$-normal forms of untyped lambda terms when they exist.

\subsection{Quotienting}

We introduce a quotient relation that identifies traversals with identical core P-views:
\begin{definition}[Quotienting]
The core P-view function
$\pview{\core{\_}} \colon \travulc \longrightarrow \justseqset $
is the composition of $\pview{\_}\colon \justseqset \longrightarrow \justseqset$ with the core projection $\coresymbol\colon\travulc  \longrightarrow \justseqset$ that is $t \longmapsto \pview{\core{t}}$.
We define $\sim$ as the equivalence relation over $\travulc$ induced by $\pview{\core{\_}}$ up to relabelling (same structure but not necessarily the same labels). Formally:
$$t \sim u \quad \mbox{ iff } \quad  \pview{\core{t}} \equiv \pview{\core{u}} .$$
We write $\travulc/{\sim}$ for the set of equivalence classes of $\travulc$. We identify a $\sim$-equivalence class with the structure of the P-view core of the traversals it contains. A subset $T\subseteq \travulc$ is \defname{$\sim$-complete} if it contains at least one element for each $\sim$-equivalence class; that is if $T/{\sim} = \travulc/{\sim}$.
\end{definition}

In the next sub-section, we explore
\emph{branching traversals}, a $\sim$-complete traversal subset that can be effectively computed for lambda terms inhabiting some simple type.

\subsection{Branching traversals}

A normalization procedure based on enumerating all traversals is not practical because the set of traversals can be infinite. This is expected for non-normalizing term such as $\Omega = (\lambda x. x x)(\lambda y. y y)$ which has infinitely long traversals of the form $\lambda x \cdot x \cdot \lambda y \cdot y \cdot x \cdot \lambda y \cdot y \cdot \ldots$ But this is also the case for terms having a normal form. Take for example $M = \lambda f . f (\lambda x. x)$, then for all $k\geq0$ the justified sequence $t_k = \lambda f \cdot f \cdot (\lambda x \cdot  x)^k$ (with appropriate pointers) is a traversal.
Remember that rule \rulenamet{IVar} offers two non-deterministic choices when traversing a free variable:
\begin{itemize}
\item[(J)] The variable to pick in the O-view (the justifier),
\item[(L)] The child lambda node to pick amongst the children of variable picked in (J).
\end{itemize}
The choice (J) alone gives rise to the pattern displayed in $t_k$ where the node $\lambda x$ from the O-view is repeatedly picked {\it ad infinitum} within a single traversal.

\begin{remark}[Game-semantic intuition]
The Game Semantics correspondence explains why traversals can be infinite: In the game denotation of a lambda term $M$, at every point in a play where it is Opponent's turn to play, all possible Opponent moves must be accounted for. More generally, the game denotation must allow Opponent's moves modeling the behaviour of \emph{all} contexts that can possibly interact with $M$. The $t_k$s traversals thus accounts for all possible denotations of the function parameter $f$ of $M$: for each $k$, there exists a term that applies its argument $k$ times: $F_k = \lambda g . g (g ( \ldots (g z)))$ with $k$ applications of $g$, and there got to be plays in the game denotation of the term $M$ to accounts for those possible values of argument $f$. Fortunately, traversals accounting for all those contexts are redundant for normalization: due to the absence of side-effects, calling the same argument multiple times always involves the same underlying computation in $M$. We formalize this intuition with the notion of \emph{branching traversals} (Definition~\ref{dfn:branching_traversals})  which prevent such repetitive behaviour while still covering all $\sim$-equivalence classes (Proposition~\ref{prop:branching_traversal_simcomplete}).
\end{remark}

One may view traversals as a mechanism to explore the beta-normal form of the term in a dept-first search manner. Under such a view, one can interpret the non-determinism in \rulenamet{IVar} as a `branching point' in the exploration. In the absence of side-effect, it is sufficient to explore each possible branch only once. \emph{Branching traversals} implement this idea by restricting the rule \rulenamet{IVar} so as to traverse only nodes leading to paths in the computation tree that are yet unexplored: when choosing the next lambda node to visit, it forces choice `(J)' to be the \emph{latest} variable node in the traversal, and restricts choice `(L)' to be some child lambda node of that variable node.

\begin{definition}
\label{dfn:branching_traversals}
The set of \defname{branching traversals} $\travsetbr$ is the subset of $\travulc$ defined by induction with the rules of Table~\ref{tab:trav_rules} where the justifier node in the
input-variable rule \rulenamet{IVar} is restricted to be necessarily the last node in the traversal ($m=n$).
We will use the subscript `${}_\branching$' to refer to this system of rules. The input variable rule can thus be stated by the following derivation rule:
\infrule[$\rulefont{IVar_\branching}$]
     {t \cdot n \in\travsetbr
      \andalso n \in\ExtNodes\inter\NodesVar
      \andalso n \enables_i\alpha
      \andalso i \geq 1
     }
     {\Pstr[0.5cm]{t \cdot (n){n} \cdot (l-n,25:i){\alpha}} \in \travsetbr}
\end{definition}

\begin{remark}[Game semantic intuition]
Lambda nodes correspond to opponent moves in game semantics. Restriction on the opponent moves limit the set of contexts in which a term can appear.
By limiting the possible lambda nodes to traverse after a free variable, the branching restriction essentially eliminates all contexts containing a sub-expression where the same argument is referred more than once, or two arguments of the same higher-order function are called in the same function body. This for instance exclude the Kierstead contexts (in which the argument $f$ is called twice) but also contexts like  $\lambda f \lambda g . f (\lambda f (\lambda x . g (\lambda y . x)))$ where the consecutive lambdas abstractions $\lambda f \lambda g$ have in their body an occurrence of both $f$ and $g$.
\end{remark}

\begin{property}
\label{prop:core_truncation_at_externallambda}
Let $t\in\travulc$ be a traversal which does not contain any ghost occurrence, and $m$ be an occurrence in $t$ of an external $\lambda$-node (\ie, $m \in \NodesLmd\inter\ExtNodes$). Then $\core{t_{<m}} = \core{t}_{<m}$.
\end{property}
\begin{proof}
By an easy induction on $t$ using the fact that when recursively calculating $\coresymbol(t)$, external lambda nodes reset the stack of pending lambdas.
\end{proof}

\begin{proposition}[Branching traversals are $\sim$-complete]
\label{prop:branching_traversal_simcomplete}
  $\travsetbr$ is  $\sim$-complete.
\end{proposition}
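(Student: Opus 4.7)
The plan is to prove the statement by induction on the length of the traversal $t \in \travulc$, constructing at each step a branching traversal $u \in \travsetbr$ with $\pview{\core{u}} \equiv \pview{\core{t}}$. The base cases are immediate: the empty sequence is in both sets, as is the singleton root, which is always produced by the common \rulenamet{Root} rule.

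For the inductive step, let $t = t' \cdot n$ and assume by induction that there exists $u' \in \travsetbr$ with $\pview{\core{u'}} \equiv \pview{\core{t'}}$. I would split cases on the rule used to add $n$ in $t$. For the structural rules \rulenamet{App}, \rulenamet{Lam}, \rulenamet{Lam^\ghostlmd}, the copy-cat rule \rulenamet{Var}, and the branching sub-case of \rulenamet{IVar} (where the justifier of the newly added lambda is the last variable), the same rule is available in $\travsetbr$. In these cases, the idea is to extend $u'$ by a corresponding step so that the resulting extension to the core P-view matches that of $t$. Property~\ref{prop:core_truncation_at_externallambda} and the definition of core projection provide the technical handle for tracking how the core P-view evolves under these rules: the contribution of each rule to $\pview{\core{\_}}$ depends only on whether the new node is external and on its local pointer structure, both of which are preserved by the inductive invariant.

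The only non-trivial case is when $n$ is added in $t$ via a non-branching application of \rulenamet{IVar}: a jump from the last variable to some earlier variable $m$ in $\oview{t'\cdot n}$. The key technical claim is that such a jump either leaves the structure of the core P-view unchanged or extends it in a way that can be reproduced by an appropriate branching extension of $u'$---possibly using the eta-expanded sub-case of \rulenamet{IVar_\branching} to introduce a ghost lambda child of the last variable of $u'$ with a suitably chosen label. Establishing this claim rigorously is the main obstacle of the proof. It requires analysing how $\oview{t'\cdot n}$ projects into the core P-view and showing that the flexibility afforded by ghost children in the branching rule is expressive enough to cover every extension that a non-branching \rulenamet{IVar} could produce. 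As the remark preceding the proposition explains, the deeper intuition is that a non-branching jump corresponds to redundantly re-exploring a branch already accounted for in the core P-view; the proof would make this intuition precise by exhibiting an explicit branching simulation of the problematic jumps.
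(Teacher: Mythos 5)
You have correctly isolated the hard case (a non-branching application of \rulenamet{IVar}, where the new lambda node points to a variable $m$ strictly earlier than $t'^\omega$ in the O-view), but the resolution you sketch does not work, and you acknowledge yourself that it is left unestablished. The dichotomy you propose --- that such a jump ``either leaves the structure of the core P-view unchanged or extends it in a way that can be reproduced by an appropriate branching extension of $u'$'' --- is false. When $t = t' \cdot \alpha$ with $\alpha$ justified by an earlier external variable occurrence $m$, the definition of the P-view forces $\pview{\core{t}} = \pview{\core{t}_{\leq m}} \cdot \alpha$: the core P-view is \emph{truncated} back to $m$ and then extended by one node. It is therefore in general neither equal to $\pview{\core{t'}}$ nor an extension of it, so no extension of $u'$ (whose core P-view matches that of the \emph{full} prefix $t'$) can have the right core P-view; in particular, appending a ghost lambda child of the last variable of $u'$ produces $\pview{\core{u'}} \cdot \ghostlmd$, whose prefix before the final node is $\pview{\core{t'}}$ rather than $\pview{\core{t}_{\leq m}}$. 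Your lockstep invariant ``simulate $t'$, then extend'' breaks precisely at this point.

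The paper's proof resolves this with a different maneuver that your proposal is missing: it uses \emph{strong} induction and, in the external-lambda case, applies the induction hypothesis not to $t_{<t^\omega}$ but to the earlier prefix $t_{\leq m}$ ending at the justifier, obtaining a branching traversal $u'$ whose core equals $\core{t_{\leq m}}$; it then sets $u = u' \cdot t^\omega$ with $t^\omega$ pointing to its \emph{immediate predecessor} $m$, which is exactly what $\rulenamet{IVar_\branching}$ permits, and the core P-views agree because everything in $t$ strictly between $m$ and $t^\omega$ is erased by the P-view computation (Property~\ref{prop:core_truncation_at_externallambda}). In other words, one must be allowed to \emph{rewind} to a branching traversal simulating an earlier prefix, not merely extend the current one; the resulting $u$ is a subsequence of $t$, not a step-synchronized simulation. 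Two further points you would also need: the case where $t^\omega$ is internal requires showing that the whole internal suffix $m_1 \cdots m_q$ can be replayed on top of $u'$, and the case where $t$ contains ghost occurrences must be reduced to the ghost-free case via the eta-expansion bijection $\eta_t$ of Proposition~\ref{prop:eta_expanded_trav}, since the truncation property used in the key step is only available for ghost-free traversals.
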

\proofatend
Let $t\in\travulc$ and let's first assume that $t$ does not contain any ghost occurrence. We show by strong induction on $t$ that there is a subsequence $u\in\travsetbr$ of $t$ such that $\pview{\core{t}} = \pview{\core{u}}$, and so in particular
$\pview{\core{t}} \equiv \pview{\core{u}}$.
We do a case analysis on the last node $t^\omega$ of $t$.
\begin{itemize}
\item $t^\omega\in \ExtNodes$: Suppose $t^\omega$ is a variable or an $@$ node then we can conclude immediately from the induction hypothesis on $t_{<t^\omega}$ and using the rules \rulenamet{Lmd} of $\travsetnorm$.

Suppose $t^\omega$ is a lambda node. If it has no justifier then it is the root in which case we conclude by taking  $u=t=t^\omega$. Otherwise let $m$ denote $t^\omega$'s justifier in $t$. By the I.H.~on $t_{\leq m}$ there is $u' \in \travsetnorm$ such that $\core{u'} = \core{t_{\leq m}}$. Take $u = u' \cdot t^\omega$ where $t^\omega$ points to its immediate predecessor $m$. Then $u$ is clearly a $\travsetnorm$-traversal by rule \rulenamet{IVar} and:
\begin{align*}
\pview{\core{u}} &= \pview{\core{u'}} \cdot t^\omega & \mbox{Def.~of $\coresymbol$}\\
 &= \pview{\core{t_{\leq m}} \cdot t^\omega} & \mbox{By I.H.~on $t_{\leq m}$}\\
 &= \pview{\core{t}_{\leq m} \cdot t^\omega} & \parbox[t]{8cm}{By Prop.~\ref{prop:core_truncation_at_externallambda} since $m$ is necessarily followed in $t$ by a $\lambda$-node in $\ExtNodes$.}
\end{align*}
Now because $m$ justifies $t^\omega$, by Def.~of P-view, \emph{up to renaming of lambda variables} the sequences $\pview{\core{t}_{\leq m} \cdot t^\omega}$ and $\pview{\core{t}}$ are equal: $\pview{\core{t}_{\leq m} \cdot t^\omega} \equiv \pview{\core{t}}$.
 But because $m$ is a variable node, its label is kept untouched by the transformation $\coresymbol$ and therefore the equality holds.

\item $t^\omega\not\in \ExtNodes$: Let $n$ be the last occurrence in $t$ that is in $\ExtNodes$, and let $m_1 \ldots m_q$, $q>0$ be the occurrences of internal nodes following $n$ in $t$ (so that $m_q = t^\omega$). By definition of the traversal rules, a variable in $\ExtNodes$ is necessarily followed by a lambda node in $\ExtNodes$, therefore $n$ is necessarily a lambda node. By definition of $\coresymbol$, $n$ is therefore also the last occurrence in $\core{t}$ therefore $\core{t} = \core{t}_{\leq n}$.
But \emph{up to relabelling}, $\core{t}_{\leq n} = \core{t_{\leq n}}$; more precisely, $\core{t}_{\leq n}$ is obtained from $\core{t_{\leq n}}$
by pre-pending to $n$'s label the stack of pending lambdas of $t_{>m}$ (the internal nodes following $n$ in $t$). Applying the induction hypothesis on $t_{\leq n}$ gives $\core{t_{\leq n}} = \core{u'}$ for some $u' \in \travsetnorm$.
To conclude it therefore suffices that to show that $u = u' \cdot m_1 \ldots m_q$ is also a traversal of $\travsetnorm$.

We prove by finite induction on $q$ that $u_{\leq m_q} \in\travsetnorm$ and $m_q$'s justifying node and label are same in $u$ and $t$. For $q=0$, we just apply rule \rulenamet{Lam} on $u'$.
For $q>0$: by case analysis on the rule used to visit $m_q$ in $t$.
For structural rules \rulenamet{Root}, \rulenamet{App} and \rulenamet{Lam} it follows immediately by induction.
Rule \rulenamet{Var}: If $m_q \in \NodesVar$ then by the Path-View correspondence, $\pview{u_{<m_q}}$ is a path in the tree from the root to $m_q$ therefore $m_q$'s binder necessarily occur in $u$, we can therefore conclude using the I.H.~on $u_{<m_q}$ and applying \rulenamet{Var} on $u_{<m_q}$ to get $u_{\leq m_q}$.
\end{itemize}
Suppose that $t$ contains ghost occurrences then we consider the term $M^t$.
By Prop.~\ref{prop:eta_expanded_trav}(i), $\eta_t(t)$ is a traversal of $M^t$. Hence by the above, there exists $u'\in\travsetnorm(M^t)$ such that
$\pview{\core{u'}}=\pview{\core{\eta_t(t)}}$.
By Prop.~\ref{prop:eta_expanded_trav}(ii),
there exists $u \in \travulc(M)$ such that
$u' = \eta_t(u)$. Recall that the normalizing traversals is defined as the subset of traversals where external lambda nodes always point to their immediate predecessor. Therefore since $\eta_t$ is pointer-preserving $u$ must necessarily belong to $\travsetnorm(M)$.

By definition of $\coresymbol$ we have $\core{t} = \core{\eta_t(t)}$, and since $u$ is a subsequence of $t$ we also have $\core{u} = \core{\eta_t(u)}$.
Hence $\pview{\core{t}}=\pview{\core{\eta_t(t)}}=
\pview{\core{u'}} = \pview{\core{\eta_t(u)}} = \pview{\core{u}}$.
\endproofatend

\begin{property}[Infinite branching traversals]
\label{prop:branching_spine_property}
Let $t\in \travsetbr$ be a branching traversal. If $t$ is infinite then it necessarily contains a ghost node occurrence.
\end{property}
\begin{proof}
Suppose $t$ does not contain any ghost node then
$t$ is obtained without using \rulenamet{Lam^\ghostlmd_\branching} and without the `eta-expansion' subcases of the variable rules. The remaining rules correspond precisely to the traversal rules of
\cite{Ong2006} (where \rulenamet{IVar} is renamed \rulenamet{Sig}).
We can therefore appeal to the Spinal Decomposition Lemma
\cite[Lemma 14]{Ong2006} which shows that if $t$ is infinite then there is an infinite sequence of prefixes of $t$ whose P-views (the \emph{spine} of $t$) is strictly increasing. By the Path correspondence this means that there is an infinite path in the computation tree of $M$ which gives a contradiction. Hence $t$ is either finite or contains a ghost node.
\end{proof}

\begin{property}
\label{prop:etalong_trav_finite}
Let $M$ be an untyped term in eta-long form (with respect to some simple type it inhabits). Then
\begin{enumerate*}
\item[(i)] All traversals in $\travsetbr(M)$ are finite;
\item[(ii)] The set $\travsetbr(M)$ is finite.
\end{enumerate*}
\end{property}
\begin{proof}
By Proposition~\ref{prop:ulc_and_stlc_trav_coincide}, traversals in $\travsetbr(M)$ do not contain any ghost node therefore Prop.~\ref{prop:branching_spine_property} implies (i).
(ii) Traversal rules that are not involving ghost nodes all have bounded non-determinism therefore a traversal only has a finite number of immediate extensions. Since traversals are finite by (i), this implies that the set of traversals is itself finite.
\end{proof}

\subsection{Normalization procedure for eta-long forms (STLC)}

The normalization procedure for the simply-typed lambda calculus is given by Algorithm~\ref{algo:stlc_normalization_by_traversals}.

\begin{algorithm}[!ht]
\caption{Eta-long normalization by traversals for STLC}
\label{algo:stlc_normalization_by_traversals}
\begin{algorithmic}
\REQUIRE{A term $M$ admitting an eta-long form \emph{wrt} some simple-type $T$ that it inhabits.}
\ENSURE{The tree of the eta-long beta-normal form of $M$.}
\begin{enumerate}[nosep]
  \item Calculate the eta-long form $\etalf{M}$ of $M$ \emph{wrt} $T$;
  \item Enumerate maximal \emph{branching} traversals of $\etalf{M}$ from Def.~\ref{dfn:branching_traversals}:
  \begin{itemize}[nosep]
  \item For each traversal $t$, get the \emph{traversal core} $\core{t}$ by removing all internal nodes;
  \item Calculate the P-view of the traversal core $\pview{\core{t}}$;
  \item Interpret $\pview{\core{t}}$ as a path in the tree representation of the eta-long $\beta$-nf of $M$;
  \end{itemize}
  \item Reconstruct from the paths thus obtained the tree of the eta-long $\beta$-nf of $M$.
\end{enumerate}
\end{algorithmic}
\end{algorithm}

\subsubsection*{Correctness of STLC normalization}


We now state the Paths Characterization of simply-typed terms as an immediate consequence of the Game-Semantic Correspondence Theorem for STLC~\cite{BlumPhd}:
\begin{proposition}[Normalized Paths Characterization for STLC]
\label{prop:path_charact_stlc}
For every term $M$ inhabiting some simple type $T$, let $\etabetalnf{M}$ denote the eta-long beta-normal form of $M$, then we have:
\begin{equation*}
\travstlc(M)/{\sim}\ = \pathset(\etabetalnf{M})
\end{equation*}
\end{proposition}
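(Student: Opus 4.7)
The plan is to chain three correspondences: (i) the game-semantic correspondence of Theorem~\ref{thm:gamesem_correspondence_stlc}; (ii) invariance of the innocent game denotation under $\beta$-reduction (a standard result to be cited); (iii) the identification of the innocent denotation of a term in $\beta\eta$-normal form with its syntactic tree paths.

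Concretely, first apply Theorem~\ref{thm:gamesem_correspondence_stlc} to obtain $\travstlc^\coresymbol(\etalf{M}) \cong \sem{\Gamma \entail M : T}$. Since an innocent strategy is uniquely determined by its set of P-views, and $\sim$ quotients precisely by the structure of $\pview{\core{\_}}$, the quotient $\travstlc(M)/{\sim}$ is in structure-preserving bijection with the set of (maximal) P-views of $\sem{\Gamma \entail M : T}$. Next, invoking soundness of the innocent game model under $\beta$-reduction gives $\sem{\Gamma \entail M : T} = \sem{\Gamma \entail \etabetalnf{M} : T}$, so it remains to identify the P-views of $\sem{\Gamma \entail \etabetalnf{M} : T}$ with $\pathset(\etabetalnf{M})$.

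For a term $N$ already in $\beta$-normal eta-long form, every subterm has the form $\lambda \overline{x}.\, y\, s_1 \ldots s_m$ with each $s_i$ itself in $\beta$-normal eta-long form. The innocent strategy $\sem{N}$ then plays exactly as a recursive descent into $\ctree(N)$: Opponent opens on the root, Proponent answers with the head variable $y$ justified by its binder (or by the root if $y$ is free), Opponent then selects one of the operand $\lambda$-nodes $s_i$, and the strategy recurses. Thus the P-views of $\sem{N}$ correspond, with matching justification pointers and labels, to the justified paths $\pathset(N)$. Composing with (i)--(ii) yields $\travstlc(M)/{\sim}\; \cong\; \pathset(\etabetalnf{M})$; the equality in the statement must be read modulo the structural equivalence $\equiv$ of Property~\ref{prop:tree_path_charact}(ii), since $\sim$-classes are identified with P-view structures (labels up to $\alpha$-renaming) while paths carry specific variable names.

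The principal technical difficulty lies in step (iii): one must verify that the label produced by $\coresymbol$ on an external lambda node — which prepends the accumulated stack of pending lambdas to that node's bulk label — matches the bulk lambda label on the tree side of $\etabetalnf{M}$, and that the Opponent-question/Proponent-answer alternation of the strategy aligns with the lambda/variable alternation in $\ctree(\etabetalnf{M})$. Modulo these bookkeeping checks, the proposition follows immediately from the already-established game-semantic correspondence for STLC.
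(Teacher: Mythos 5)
Your proposal is correct and follows essentially the same route as the paper: invoke Theorem~\ref{thm:gamesem_correspondence_stlc}, use soundness of the innocent game model under $\beta\eta$-equivalence to transfer the denotation to $\etabetalnf{M}$, and then identify that denotation with the tree paths of the normal form. The only (inessential) difference is in the last leg, where the paper re-applies the Path-View correspondence (Proposition~\ref{prop:pathview_ulc}) and Proposition~\ref{prop:ulc_and_stlc_trav_coincide} to the normal form itself, whereas you argue directly that the innocent strategy of a $\beta$-normal eta-long term is the recursive-descent strategy on its computation tree; both justifications are standard and your remark that the stated equality holds modulo the structural equivalence of Property~\ref{prop:tree_path_charact}(ii) matches the paper's conventions.
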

\begin{proof}
 By soundness of Game Semantics, beta-eta equivalent terms have the same game denotation, therefore by Theorem~\ref{thm:gamesem_correspondence_stlc}, the set of traversals of the eta-long beta-normal form of $M$ is also given by the set of cores of traversals of $\etalf{M}$. The Path-View correspondence (Proposition~\ref{prop:pathview_ulc}) and Proposition~\ref{prop:ulc_and_stlc_trav_coincide} then show that the P-views of traversal cores of $\etalf{M}$ give the tree paths of the eta-long beta normal form.
\end{proof}

\begin{theorem}[Correctness of STLC normalization]
Algorithm~\ref{algo:stlc_normalization_by_traversals} terminates and returns the eta-long beta-normal form of the input term.
\end{theorem}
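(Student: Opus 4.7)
The plan is to split the argument into termination and correctness, leaning on the supporting results already established in the paper.

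For termination, I would first observe that the preliminary step of computing the eta-long form $\etalf{M}$ with respect to $T$ is a standard type-directed transformation that terminates because $T$ is a finite simple type. The enumeration step in (2) terminates by Property~\ref{prop:etalong_trav_finite}: since $\etalf{M}$ is in eta-long form, every branching traversal is finite, and moreover the set $\travsetbr(\etalf{M})$ is itself finite. Hence only finitely many traversals are enumerated, each of finite length, so the projection, P-view, and path extraction in the inner loop are finite computations. Finally, step (3) (tree reconstruction via Algorithm~\ref{algo:termtree_readout_from_justitied_paths}) runs on a finite set of finite sequences and terminates.

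For correctness, the goal is to show that the set of paths fed to the reconstruction in step (3) is exactly the set of structural paths of $\etabetalnf{M}$, up to $\alpha$-conversion. I would assemble this from three pieces. First, by Proposition~\ref{prop:ulc_and_stlc_trav_coincide}, the branching traversals generated on $\etalf{M}$ by the rules of Table~\ref{tab:trav_rules} coincide with the STLC branching traversals $\travstlc^\branching(\Gamma \entail M : T)$ and contain no ghost occurrences. Second, by the sim-completeness statement (Proposition~\ref{prop:branching_traversal_simcomplete}), every $\sim$-equivalence class of $\travstlc(\etalf{M})$ is represented by some branching traversal, so enumerating maximal branching traversals and then applying $\pview{\core{\_}}$ produces at least one representative of each class in $\travstlc(M)/{\sim}$. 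Third, by the Normalized Paths Characterization for STLC (Proposition~\ref{prop:path_charact_stlc}), these equivalence classes are in bijection with $\pathset(\etabetalnf{M})$, and this bijection is precisely the one that sends a traversal to the structure of its core P-view (using the Path-View correspondence, Proposition~\ref{prop:pathview_ulc}, to see that each $\pview{\core{t}}$ really is a structural path of the eta-long normal form).

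Combining these three observations, the multiset of core P-views computed in step (2) is isomorphic, as a set of justified sequences, to $\pathset(\etabetalnf{M})$. Therefore Algorithm~\ref{algo:termtree_readout_from_justitied_paths}, invoked on this set in step (3), returns the computation tree of $\etabetalnf{M}$ modulo $\alpha$-conversion, which is what the theorem claims.

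The main obstacle I anticipate is the bookkeeping around $\sim$-completeness and the relabelling induced by $\coresymbol$: sim-completeness only guarantees coincidence of core P-views up to relabelling of lambda bulks, whereas the conclusion we want is equality of the induced computation trees up to $\alpha$-conversion. The reason this is not a real problem is Property~\ref{prop:tree_path_charact}(ii): a lambda term is determined up to $\alpha$-conversion by the structure (node types plus justification pointers) of its maximal justified paths, which is exactly the invariant preserved by $\sim$ and reconstructed by Algorithm~\ref{algo:termtree_readout_from_justitied_paths}. So the relabelling discrepancy is precisely absorbed by $\alpha$-conversion, completing the argument.
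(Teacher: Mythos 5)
Your proposal is correct and follows essentially the same route as the paper's own proof: termination via Property~\ref{prop:etalong_trav_finite}, and soundness by combining Property~\ref{prop:tree_path_charact}, Proposition~\ref{prop:path_charact_stlc}, and the $\sim$-completeness of branching traversals (Proposition~\ref{prop:branching_traversal_simcomplete}). The only difference is that you spell out more explicitly the intermediate citations (Propositions~\ref{prop:ulc_and_stlc_trav_coincide} and~\ref{prop:pathview_ulc}, which the paper folds into its proof of Proposition~\ref{prop:path_charact_stlc}) and the relabelling-versus-$\alpha$-conversion bookkeeping, which the paper's terser proof leaves implicit.
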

\begin{proof}
\emph{Soundness} A beta-normal term is uniquely determined by the set of maximal paths in its tree representation (Property~\ref{prop:tree_path_charact}). By Proposition~\ref{prop:path_charact_stlc}, this set corresponds precisely to the set of core P-views,
and by Proposition~\ref{prop:branching_traversal_simcomplete} its is also given by the core P-views of \emph{branching} traversals.
\emph{Termination} By Property~\ref{prop:etalong_trav_finite} for terms in eta-long form the set of branching traversals is finite and each traversal is itself finite, therefore the enumeration in Algorithm~\ref{algo:stlc_normalization_by_traversals} terminates.
\end{proof}

\paragraph{Implementation} The normalization procedure from algorithm~\ref{algo:stlc_normalization_by_traversals} was first implemented in the HOG program\cite{BlumGalop2008, BlumPhd,BlumHogTool2008}. The program takes as input any simply-typed lambda term and lets the user interactively generate all the traversals by ``playing the traversal game'' over the tree representation of the term. It also offers an  environment to calculate and perform operations over the traversals, including filtering, views and core projection.

\subsection{Arity threshold and normalizing traversals}

We showed in the previous section that for eta-long forms, there is a finite number of branching traversals, which in turn implies termination of the normalization procedure of Algorithm~\ref{algo:stlc_normalization_by_traversals}. For untyped terms, however, branching traversals can be infinite because of \emph{on-the-fly eta-expansion} in rule $\rulename{IVar}$: the justification label $i>0$ in the traversal rule is unbounded. In this section, we show that, for the purpose of term evaluation, $i$ can be bounded by a computable quantity called the \defname{arity threshold} of a traversal.

When traversing a lambda term, ghost nodes are introduced on-demand each time an eta-expansion is deemed necessary to lookup the argument of an application.
We should not need to eta-expand ad-infinitum as ghost nodes are only useful if they eventually lead to traversing some structural node of the tree. The intuition behind the \defname{arity threshold} is that after a sufficiently large number of eta-expansions, we are guaranteed to keep on traversing ghost nodes that never materialize back to structural nodes.

This section formalizes this intuition by defining the \emph{arity threshold} as the maximum number of times necessary to eta-expand a given subterm (using rule $\rulename{IVar}$)  in order to discover all set of paths of the beta-normal form of the term. In the next section we will show that such limit is sufficient to characterize the normal form of a lambda term when it exists.

\begin{definition}[Strand]
\label{ref:strand}
A \defname{strand} of a traversal $t$, is any sub-sequence of consecutive nodes from $t$,
with even length, starting with an external lambda node and ending with an external variable node, and such that all the occurrences in-between are internal.
\end{definition}

From the parity property of traversals, a strand consists of alternations of lambda nodes and variable/application nodes. It is convenient to highlight strands within a traversal by underlining the first and last occurrence of the strand. We will often write strands as follows:
$$ t = \cdots \underline{\alpha_k}\ n_k\ \alpha_{k-1}\ n_{k-1}\ \cdots\ \alpha_2\ n_2\ \alpha_1\ \underline{n_1} \cdots $$
for some $k\geq 1$ where for $j$ ranging from $k$ down to $1$, $\alpha_j$ is a lambda node in $\ImNodesLmd$, $n_j$ is a variable/application node in $\ImNodesVar$,
and the $2k-2$ nodes occurring  between $\alpha_k$ and $n_1$ are the internal nodes of the strand.

For any occurrence $n$ in $t$ of an external node, we call \defname{strand ending at $n$} the sequence of nodes ending at $n$ that constitutes a strand. It can be obtained by taking the longest subsequence of nodes preceding $n$ consisting only of internal nodes.

\begin{definition}[Traversal arity threshold]
\label{dfn:arity-threshold}
Let $t$ be traversal ending with an external variable node.
Let $\underline{\alpha_k}\ n_k\ \alpha_{k-1}\ n_{k-1}\ \cdots\ \alpha_2\ n_2\ \alpha_1\ \underline{n_1}$ be the strand of $t$ ending at $t^\omega$, for some $k>0$  where $\alpha_j \in \ImNodesLmd$, and $n_j \in \ImNodesVar$ for all $1\leq j\leq k$.
We define the \defname{arity threshold} of $t$ as:
\begin{align*}
\arth(t) &= \max_{q=1..k-1} \left( |n_q| + \sum_{j=1..q-1} (|n_j| - |\alpha_j|) \right)\ .
\end{align*}
\end{definition}

Or equivalently, using the abbreviation $ \delta_j = |n_j| - |\alpha_j|$:
\begin{align*}
    \arth(t) &= \max_{q=1..k-1} \left( |n_q| + \sum_{j=1..q-1} \delta_j \right)     = \max_{q=1..k-1} \left( |\alpha_q| + \sum_{j=1..q} \delta_j\right)
\end{align*}

\begin{remark}[Calculating the arity threshold]
The arity threshold can be rewritten as
$\arth(t) = \max_{q=1..k-1} b_q$
where $b_q = |\alpha_q| + \sum_{j=1..q} \delta_j = |n_1| - |\alpha_1| + |n_2| - |\alpha_2| + \cdots + |n_{q-1}| - |\alpha_{q-1}| + |n_q|$, for $1\leq q\leq k-1$ or equivalently defined by the induction
 $b_1 = |n_1|$, $b_{q+1} = b_q + |n_{q+1}| - |\alpha_q|$, for $1 \leq q \leq k-2$.
In other words, calculating the arity threshold boils down to  adding and subtracting the arity of the nodes starting from the last variable occurrence in the traversal strand and reading backwards until reaching the previous external variable. The maximal value calculated in that summation gives the arity threshold.
\end{remark}

Observe that traversal rule $\rulename{IVar}$ leaves infinitely many choices for the link label: any value greater than $1$. The following property shows that if the link label exceeds the arity threshold then the traversal ends up traversing only ghost nodes that never materialize back to a structural node.
\begin{proposition}[Weaving]
\label{prop:weaving}
Let $t \in \travsetbr$ be a branching traversal ending with an external variable node, and suppose its last strand is
$\underline{\alpha_k}\ n_k\ \alpha_{k-1}\ n_{k-1}\ \cdots\ \alpha_2\ n_2\ \alpha_1\ \underline{n_1}$ of length $2k$, for some $k\geq1$.
Let $t_{\max} \in \travsetbr$ be a maximal traversal extension of $t$. By rule $\rulename{IVar_\branching}$, the last occurrence in $t^\omega$ is necessarily immediately followed in $t_{\max}$ by an external lambda node justified by $t^\omega$ with label $i\geq 1$.


\begin{enumerate}[label=(\roman*)]
\item If $i>\arth(t)$ then $t^\omega$ is followed in $t_{\max}$ by a strand of length $2k$ consisting only of ghost nodes with labels defined as follows:
\begin{eqnarray*}
t_{\max} &=& \Pstr[0.3cm]{ \cdots
 (alpha){\underline{\alpha_k}} \ (nk){n_k} \
 (alphakm1){\alpha_{k-1}} \
 \cdots\
 (a2){\alpha_2}\ (n2){n_2}\ (a1){\alpha_1}\ (n1){\underline{t^\omega}}\
 (l1-n1,30:i_1){\underline{\ghostlmd}}\
 (t1-a1,40:i_2){\ghostvar} \ (l2-n2,41:i_2){\ghostlmd}\
  (t2-a2,45:i_3){\ghostvar} \cdots
(tkm1-alphakm1,40:i_{k}){\ghostvar} \ (lkm1-nk,43:i_{k}){\ghostlmd}\
 (tk-alpha,45:i_{k+1}){\underline{\ghostvar}} \cdots }\\
i_{q+1} &=& i + \sum_{j=1..q} (|\alpha_j| - |n_j|) = i - \sum_{j=1..q} \delta_j, \quad \mbox{ for $0\leq q\leq k$}
\end{eqnarray*}
where underlined occurrences indicate external nodes.

\item If $i>\arth(t)$ then \emph{all the nodes} following $t^\omega$ in $t_{\max}$ are ghost occurrences.

\item If $i\leq\arth(t)$ then for some $0 \leq r <k$ and link labels $i_q$, $0\leq q \leq r$ defined as in (i), we have:
\begin{eqnarray*}
t_{\max} &=& \Pstr[0.3cm]{
    \underline{\alpha_k}\ n_k \
    \cdots
    (alphar){\alpha_r} \ (nr){n_r} \
    (alpharm1){\alpha_{r-1}} \
 \cdots\
 (a2){\alpha_2}\ (n2){n_2}\ (a1){\alpha_1}\ (n1){\underline{t^\omega}}\
 (l1-n1,30:i_1){\underline{\ghostlmd}}\
 (t1-a1,40:i_2){\ghostvar} \ (l2-n2,41:i_2){\ghostlmd}\
  (t2-a2,45:i_3){\ghostvar} \cdots
(tkm1-alpharm1,40:i_{r}){\ghostvar} \ (lkm1-nr,43:i_{r}){\lambda\overline{x}}\
\cdots }
\end{eqnarray*}
where $\lambda\overline{x}$ is a structural internal lambda node, and consequently $i_r \leq |n_r|$.
\end{enumerate}
\end{proposition}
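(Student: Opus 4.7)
The plan is to extend $t$ one rule application at a time, tracking which rule is forced at each step and what label the new occurrence receives; the three cases then fall out as parallel branches of the same induction. Since $t$ ends with the external variable $n_1=t^\omega$ and we are in $\travsetbr$, the only applicable extension is $\rulename{IVar_\branching}$, whose branching restriction forces the justifier to be $t^\omega$. The resulting lambda occurrence $\beta$ points to $n_1$ with some label $i\geq 1$; by the case split in $\rulename{IVar}$, $\beta=\ghostlmd^{i_1}$ with $i_1=i$ iff $i>|n_1|=b_1$, and otherwise $\beta$ is the structural $i$-th child of $n_1$ (which is case (iii) with $r=0$).

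The inductive invariant is that after $2q$ further steps ($1\leq q<k$) the traversal has appended $\ghostlmd^{i_1}\,\ghostvar^{i_2}\,\ghostlmd^{i_2}\,\cdots\,\ghostlmd^{i_q}\,\ghostvar^{i_{q+1}}$ with $\ghostlmd^{i_q}$ justified by $n_q$ and $\ghostvar^{i_{q+1}}$ justified by the internal $\alpha_q$, and with $i_{q+1}=i-\sum_{j=1..q}\delta_j$. Rule $\rulename{Lam^\ghostlmd}$ is the \emph{unique} rule that can fire after $\ghostlmd^{i_q}$, and by the rule's label formula it outputs $\ghostvar$ justified by $\alpha_q$ with label $|\alpha_q|+i_q-|n_q|=i_q-\delta_q=i_{q+1}$, matching the invariant. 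Rule $\rulename{Var}$ is then the only rule that can fire after $\ghostvar^{i_{q+1}}$ (it is internal, hereditarily justified by an $@$-node): the rule's pattern $t\cdot m\cdot\alpha\cdots n$ forces $\alpha=\alpha_q$ (the justifier) and $m=n_{q+1}$ (the occurrence immediately preceding $\alpha_q$ in the strand), producing a new lambda that points to $n_{q+1}$ with label $i_{q+1}$; this is ghost iff $i_{q+1}>|n_{q+1}|$, i.e. iff $i>b_{q+1}$, whence the ghost/structural dichotomy that drives the branching between (i) and (iii).

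Combining the two halves, as long as $i>b_q$ for every $q$ encountered, the invariant propagates through $q=1,\ldots,k$, and at $q=k$ the penultimate step produces $\ghostlmd^{i_k}$ justified by $n_k$ followed by $\ghostvar^{i_{k+1}}$ justified by the external $\alpha_k$; the latter is therefore external and terminates a strand of length exactly $2k$, giving (i). For (iii), let $r$ be the least index with $i\leq b_r$; the invariant holds for $q<r$ and fails first at $q=r$, so rule $\rulename{Var}$ chooses the structural sub-case, outputting the $i_r$-th child of $n_r$ — exactly the structural $\lambda\overline{x}$ announced in the statement. For (ii), once the new strand has terminated in case (i) with the external ghost $\ghostvar^{i_{k+1}}$, any further extension must again start with $\rulename{IVar_\branching}$, whose justifier — by branching — is that same $\ghostvar^{i_{k+1}}$; since ghost variables have arity $0$, every label $i'\geq 1$ exceeds it, so the fresh lambda is forced to be ghost, and the same inductive analysis re-applies to every subsequent strand, giving that all successor occurrences are ghosts.

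The main obstacle is the bookkeeping in the inductive step: verifying that the $m$-node in $\rulename{Var}$ is exactly the \emph{structural} $n_{q+1}$ two positions before the current ghost variable, rather than some other occurrence. This is where the rigid strand shape matters — because the previously built ghost chain has introduced no new interleaved occurrences between $\alpha_q$ and the preceding $n_{q+1}$ in the underlying sequence (the ghosts sit strictly after $n_1$), the occurrence immediately preceding $\alpha_q$ is unchanged from the original strand and is indeed $n_{q+1}$; this, combined with the arithmetic identity $i_{q+1}-\delta_{q+1}=i_{q+2}$, is what glues each iteration to the next and yields the explicit formula for $i_{q+1}$.
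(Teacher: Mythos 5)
Your proposal is correct and follows essentially the same route as the paper's proof: a finite induction over the strand indices $q$, showing that $i>\arth(t)$ forces $i_q>|n_q|$ at every step so that the eta-expanded sub-cases of \rulenamet{Var} and \rulenamet{Lam^\ghostlmd} alternate to produce the ghost chain with the stated labels, with (iii) obtained by taking the least $r$ where the inequality first fails and (ii) by observing that the arity threshold of the all-ghost strand is $0$. Your explicit check that the $m$-occurrence in \rulenamet{Var} is still the original $n_{q+1}$ (since the appended ghosts sit entirely after $n_1$) is a detail the paper leaves implicit, but it is the same argument.
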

\proofatend
(i) By the alternation property of traversals, the last strand of $t$ consists of a succession of lambda nodes $\alpha_q$ and variable or $@$-nodes $n_q$, with indices going from $k$ down to $1$.

We show by finite induction on $1 \leq q \leq k$ that the first $2k$ nodes after $t^\omega$ are successive pairs of ghost lambda and ghost variable nodes justified in order by $n_1, \alpha_1, n_2, \alpha_2, n_3, \ldots, n_k, \alpha_k$ with respective labels $i_1, i_2, i_2, i_3, i_3, \ldots, i_k, i_k, i_{k+1}$ defined by:
\begin{equation*}
\left\{
    \begin{aligned}
    i_1 &= i \\
    i_{q+1} &= i_q + |\alpha_q| - |n_q| \hbox{, $1\leq q \leq k$}.
    \end{aligned}
\right.
\end{equation*}

\begin{itemize}
\item Base case $q=1$: Because $t^\omega$ is a ghost variable node hereditarily justified by the root, the only next rule that can be applied is \rulenamet{IVar_\branching}, and by assumption the following node is a ghost lambda node justified by $t^\omega$ with label $i_1 = i$. Then by rule \rulenamet{Lam^\ghostlmd_\branching} the next node is a ghost variable justified by $\alpha$ with label $i_2 = i + |\alpha_1| - |t^\omega|$.

$$ t_{\max} = \Pstr[0.3cm]{ \cdots (alpha){\alpha_1}\ (m){t^\omega}\ (l-m,30:i){\ghostlmd}\ (t-alpha,45:i_2){\ghostvar} \cdots} $$

\item Let $1<q< k$, applying the induction hypothesis on $q$ gives:
$$
t_{\max} = \Pstr[0.3cm]{
   \cdots
  (aq){\alpha_q}\
  (nq){n_q} \
  (aqm1){\alpha_{q-1}}\
  (nqm1){n_{q-1}}
 \cdots
 t^\omega\ \cdots
  (lk-nqm1,35:{i_{q-1}})\ghostlmd\
  (tq-aqm1,40:{i_{q}})\ghostvar\
   \cdots
}
$$

We then have $i_{q} > |n_q|$ since:
\begin{align*}
i_{q} &= i - \sum_{j=1..q-1} \delta_j
&\qquad\hbox{(by induction hypothesis)}\\
    &> \arth(t) - \sum_{j=1..q-1} \delta_j
&\qquad\hbox{(assumption $i> \arth(t)$)} \\
        &= \max_{r=1..k-1} \left( |n_r| + \sum_{j=1..r-1} \delta_j \right)
        - \sum_{j=1..q-1} \delta_j
& \qquad\hbox{(definition of $\arth$)} \\
    &\geq |n_q| + \sum_{j=1..q-1} \delta_j - \sum_{j=1..q-1} \delta_j
&\qquad\hbox{(take $r=q$)} \\
    &= |n_q|
\end{align*}

By definition of the strand, $\alpha_{q-1}$ is hereditarily justified by an $@$ node.
Since $i_q > |n_q|$, by rule \rulenamet{Var_\branching} the next node is necessarily a ghost lambda node justified by $n_q$ with label $i_q$. With rule \rulenamet{Lam^\ghostlmd_\branching} the following node is a ghost variable node $\ghostvar$ justified by $\alpha_q$ and labelled by $i_{q+1} = i_q + |\alpha_{q}| -|n_{q}|$.
\end{itemize}

(ii) By (i) the next strand following $t$ consists solely of ghost nodes and ends with a ghost variable $\ghostvar$ hereditarily justified by the root. Since ghost nodes have arity $0$, the arity threshold at that point is $0$. Hence, any label value $j$ chosen to extend the traversal at that point will be strictly greater than the arity threshold: thus by (i) the next strand consists solely of ghost nodes.
Applying the argument repeatedly shows that all the occurrences following $t$ are necessarily ghost nodes.

(iii) Take $r$ to be the smallest $q\geq 1$ such that $i_q\leq |n_q|$, it exists because:
\begin{align*}
i \leq \arth(t)
        &\iff i \leq \max_{q=1..k-1} \left( |n_q| + \sum_{j=1..q-1} \delta_j \right)\quad\hbox{(Definition of $\arth(t)$)} \\
        &\implies i \leq |n_r| + \sum_{j=1..r-1} \delta_j \quad \mbox{for some $1 \leq r \leq k-1$} \\
        &\iff i - \sum_{j=1..r-1} \delta_j \leq |n_r| \quad \mbox{for some $1 \leq r \leq k-1$} \\
        &\iff i_r \leq |n_r| \quad \mbox{for some $1 \leq r \leq k-1$} \quad \mbox{(Definition of $i_r$).}
\end{align*}
We then conclude by applying the same argument as (i) for all $q<r$.
\endproofatend

This weaving property suggests that instantiations of the external variable rule that exceeds the arity threshold are not relevant for normalization since they essentially expand external ghost variable nodes into deeper external ghost variable nodes. This observation gives rise to a stricter version of traversals:
\begin{definition}[Normalizing traversals]
\label{def:normalizing_traversals}
The set $\travsetnorm$ of \defname{normalizing traversals}, is the subset of branching traversals defined by the rules of Def.~\ref{dfn:branching_traversals} where the index $i$ in rule \rulenamet{IVar} is bounded by the arity threshold of the traversal.
Table~\ref{tab:normalizing_trav_rules} recapitulates the rules thus obtained  using judgement derivations, where the judgement notations `$\istraversal t$' means `$t\in\travsetnorm$'.
\end{definition}

\begin{remark}
    Observe that $\travsetnorm$ is \emph{not} $\sim$-complete. Indeed, take $M = \lambda x . x$ and the traversal
    $t  = \Pstr[0.5cm]{(l){\lambda x} \cdot (x-l){x} \cdot (gl-x){\ghostlmd} \cdot (gv-l){\ghostvar} }$
     in $\travsetbr$. We have $\pview{\core{t}} = t$ which is not equivalent to  $\pview{\core{u}}$ for any $u$ in $\travsetnorm = \{ \epsilon, \lambda x, \lambda x \cdot x \}$.
\end{remark}

\begin{table}
    \begin{ruletablebox}{Normalizing imaginary traversals of $\lambda$-calculus}
    \noindent {\bf PROGRAM - Structural rules}

    \infrule[$\rulefont{Root}_\normalizing$]
        {r \mbox{ is the root of } \ctree(M)}
        {\istraversal r}

    \infrule[$\rulefont{App}_\normalizing$]
        {\istraversal t \cdot @ \andalso @ \enables_0 \alpha }
        {\istraversal \Pstr[0.4cm]{t \cdot (at) @  \cdot (a-at,40:0) \alpha} \andalso (\alpha\in\NodesLmd)}

    \infrule[$\rulefont{Lam}^@_\normalizing$]
         {\istraversal t \cdot \alpha
         \andalso \alpha\in\NodesLmd
         \andalso \child(\alpha) \in \NodesApp }
         {\istraversal t \cdot \alpha \cdot \child(\alpha)\andalso \mbox{$\child(\alpha)$ has no pointer}}

    \infrule[$\rulefont{Lam}^{\sf var}_\normalizing$]
         {\istraversal u \cdot \beta \cdot v \cdot \alpha
         \quad \alpha\in\NodesLmd
         \quad \child(\alpha) \in \NodesVar
         \quad \beta \enables_i \child(\alpha), i \geq 1
         \quad \beta \mbox{ visible at } \alpha }
         {\istraversal \Pstr[0.5cm]{u \cdot (beta)\beta \cdot v \cdot \alpha \cdot (n-beta,30:i){\child(\alpha)
         }}}

    \infrule[$\rulefont{Lam^\ghostlmd_\normalizing}$]
        {
            \istraversal \Pstr[0.5cm]{ t \cdot(alpha){\alpha} \cdot (m){m} \ldots (gl-m,40:i){\ghostlmd}}
        }
        {
            \Pstr[0.5cm]{ \istraversal t \cdot(alpha){\alpha} \cdot
            (m){m}
             \ldots
            (gl-m,40:i){\ghostlmd}\cdot (al-alpha,40:{|\alpha| + i - |m|}){\ghostvar}
               }
            \andalso(\alpha\in\ImNodesLmd, m \in\ImNodesVar)
        }

    \emph{\bf PROGRAM - Copy-cat rules}

    \infrule[$\rulefont{Var}_\normalizing$]
        {
            \istraversal \Pstr[0.5cm]{t \cdot m \cdot (alpha){\alpha}
            \ldots (n-alpha,50:i){n}}
            \andalso n \in \ImNodesVar \setminus \ExtNodes
            \andalso m \enables_i \beta
            \andalso i>0
        }
        {
            \istraversal \Pstr[0.5cm]{ t \cdot
            (m){m} \cdot (a){\alpha}  \ldots (n-a,30:i){n}
                \cdot (letai-m,35:i){\beta}
                }
            \andalso (m \in \ImNodesVar\union\NodesApp, \alpha,\beta\in\ImNodesLmd)
        }

    \emph{\bf DATA - Input-variable rules}
    \infrule[$\rulefont{IVar_\normalizing}$]
         {\istraversal t \cdot n
          \andalso n \in \ImNodesVar\inter\ExtNodes
          \andalso n \enables_i \alpha
          \andalso 1\leq i \leq\arth(t)
         }
         {\istraversal\Pstr[0.5cm]{t \cdot (n){n} \cdot (l-n,30:i){\alpha}}
         \andalso(\alpha\in\NodesLmd)
         }

    where $t, u, v$ range over (sub-sequences of) justified sequences of nodes.

    \caption{Normalizing traversals $\travsetnorm$ of the untyped lambda calculus.
    The judgement rules define $\travsetnorm$ as a strict subset of imaginary traversals $\travulc$ from Table \ref{tab:trav_rules} where only rules $\rulefont{IVar_\normalizing}$ and $\rulefont{IVar}$ differ.}
    \label{tab:normalizing_trav_rules}
    \end{ruletablebox}
\end{table}

\subsection{Normalization procedure for ULC}

Algorithm~\ref{algo:ulc_normalization_by_traversals} describes the normalization procedure to compute the $\beta$-normal of an untyped lambda term if it exists. It generalizes Algorithm~\ref{algo:stlc_normalization_by_traversals} for STLC based on \emph{branching} traversals to ULC using \emph{normalizing} traversals.
We will prove correctness of the procedure in Section~\ref{sec:correctness_ulc_normalization}.

\begin{algorithm}[!ht]
\begin{algorithmic}
\caption{Normalization by traversals for the Untyped Lambda Calculus}
\label{algo:ulc_normalization_by_traversals}
\REQUIRE{Untyped term $M$ having a beta-normal form}
\ENSURE{Computation tree of the beta-normal form of $M$ modulo $\alpha$-conversion}
\begin{enumerate}[nosep]
  \item Calculate $\mathcal{P} \leftarrow \{ \pview{\core{t}} \ | \ t \in \travsetnorm \}$ as follows:
  \begin{itemize}[leftmargin=0.5em,nosep]
    \item Enumerate maximal \emph{normalizing} traversals $\travsetnorm(M)$ using the rules of Table~\ref{tab:normalizing_trav_rules};
    \item For each traversal, apply transformation $\coresymbol$ to get the traversal core;
    \item Take the P-view of the traversal core.
  \end{itemize}
  \item Apply Algorithm~\ref{algo:termtree_readout_from_justitied_paths} to reconstruct the tree of the $\beta$-nf of $M$ from $\mathcal{P} \equiv \pathset(\betanf{M})$.
\end{enumerate}
\end{algorithmic}
\end{algorithm}

\section{Examples of Normalization by Traversals}
This section walks you through the ULC normalization algorithm~\ref{algo:ulc_normalization_by_traversals} and traversal enumeration of Table~\ref{tab:normalizing_trav_rules} on several examples of lambda terms.

For each example we will enumerate the normalizing traversal of Table~\ref{tab:normalizing_trav_rules}. We write $t_\epsilon$  for the initial traversal up to the first non-deterministic choice. For every $s \in \nat^*$, we write $t_{s \cdot k}$ for the maximal traversal obtained after extending $t_s$ using a single application of the eta-expanded subcase of rule \rulenamet{IVar} with link-label $k \in \nat$. In other words, the subscript $s$ in $t_s$ records the list of non-deterministic choices made so far by rule \rulenamet{IVar}.

\begin{example}[Baby example]
  Take $M \alphaequiv (\lambda x. x x) (\lambda y. y)$. Repeatedly applying the structural traversal rules until reaching a non-deterministic choice yields traversal

  $t_\epsilon = \Pstr[0.7cm]{(n0){\lambda}\,(n1){@}\,(n2-n1,20){\lambda x}\,(n3-n2,20){x}\,(n4-n1,20){\lambda y}\,(n5-n4,20){y}\,(n6-n3,20){\lambda}\,(n7-n2,20){x}\,(n8-n1,20){\lambda y}\,(n9-n8,20){y}\,(n10-n7,20){\ghostlmd^1}\,(n11-n6,20){\ghostvar^1}\,(n12-n5,20){\ghostlmd^{1}}\,(n13-n4,20){\ghostvar^2}\,(n14-n3,20){\ghostlmd^2}\,(n15-n2,20){\ghostvar^2}\,(n16-n1,20){\ghostlmd^2}\,(n17-n0,20){\ghostvar^1}}$. Since the last node is an external variable (because it is justified by the root) rule $\rulename{IVar}$ could extend $t_\epsilon$ if considered as a pure \emph{imaginary} traversals. Since, the traversal's arity threshold is $\arth(t) = 0$, however, there is no more \emph{normalizing} traversal to explore: $t_\epsilon$ is maximal. The P-view core is $\pview{\core{t_\epsilon}} = \Pstr[0.7cm]{(n0){\lambda }\ (n17-n0){\ghostvar^1}}$ thus the beta-normal form of $M$ is, up to $\alpha$-conversion $\lambda y . y$.
\end{example}

\begin{example}[Church increment] Consider Church numerals $k \alphaequiv\lambda s z . s^k z$ for $k\geq0$.

\begin{tabular}{p{10cm}r}
Consider a term $M$ implementing the increment function so that $M k$ reduces to $k+1$ for all $k$. Such term can be defined as $M \alphaequiv {\sf add}\ 1$ where
${\sf add} \alphaequiv \lambda x y s z. x\, s (y\, s\, z)$.
The right figure represents the computation tree of $M$.
&
\begin{tikzpicture}[baseline=(root.base),level distance=5ex,inner ysep=0.5mm,sibling distance=10mm]
    \node (root)
    {$\lambda$}
    child {node{$@$}
        child{node{$\lambda x y s z$}
            child { node{$x$}
                child{node{$\lambda$}{
                    child {node {$s$}}}
                }
                child{node{$\lambda$}
                    child{node{$y$}
                        child{node{$\lambda$}
                            child{ node {$s$}}
                        }
                        child{node{$\lambda$}
                            child{node{$z$}}}
                    }
                }
            }
        }
        child{node{$\lambda s z$}
            child{node{$s$}
                child{node{$\lambda$} child{node{$z$}}}
            }
        }
    }
    ;
\end{tikzpicture}
\end{tabular}

\begin{itemize}[nosep]
\item The maximal normalizing traversal obtained without making any non-deterministic choice is:

$t_\epsilon = \Pstr[0.7cm]{(n0){\lambda }\ (n1){@}\ (n2-n1){\lambda x y s z}\ (n3-n2){x}\ (n4-n1){\lambda s z}\ (n5-n4,30){s}\ (n6-n3){\lambda }\ (n7-n2,32){s}\ (n8-n1,32){\ghostlmd^3}\ (n9-n0,32){\ghostvar^2} }$.
The core of $t_\epsilon$ is
$\core{t_\epsilon} = \Pstr[0.7cm]{(n0){\lambda} \cdot (n9-n0){{\ghostvar^2}} }$
and therefore $\pview{\core{t_\epsilon}} =  \lambda\cdot \ghostvar^2$ is a path in  $\betanf{M}$.
This means that $\betanf{M}$ must be of the form $\lambda y s \ldots \cdot y N_1 \ldots \ldots N_q$ for some fresh variable $y$ and $s$ and $q\geq0$.

\item To determine each argument $N_k$ from the normal form, we enumerate every possible non-deterministic choice corresponding to  eta-expansion of each possible argument for $k\geq 1$ using rule $\rulename{IVar}$. We then  maximally extend the traversal until the next non-deterministic choice. For $k=1$ this gives
$t_1 = \Pstr[0.7cm]{(n0){\lambda }\ (n1){@}\ (n2-n1){\lambda x y s z}\ (n3-n2){x}\ (n4-n1){\lambda s z}\
(n5-n4){s}\ (n6-n3){\lambda }\
(n7-n2){s}\ (n8-n1){\ghostlmd^3}\
(n9-n0){\ghostvar^2}
(n10-n9){\ghostlmd^1}
(n11-n8){\ghostvar^1}
(n12-n7){\ghostlmd^1}
(n13-n6){\ghostvar^1}
(n14-n5){\lambda^1}
(n15-n4)z
(n16-n3){\lambda^2}
(n17-n2)y
(n18-n1){\ghostlmd^2}
(n19-n0){\ghostvar^1}
}$.
Its core P-view is
$\pview{\core{t_1}} = \lambda \cdot \ghostvar^2 \cdot\ghostlmd^1
\cdot \ghostvar^1$
therefore the normal form is of the shape $\lambda y s \ldots \cdot s (y R_1 \ldots R_{q_2}) N_2 \ldots N_q$ for some terms $R_1$, \ldots $R_{q_2}$, and $q,q_2\geq 0$.

\item How many more $k$ do we need to look at? The answer: we need to keep iterating until the point where applying the traversal rules is guaranteed to only produce ghost variables and ghost lambda-nodes. Because there is a finite number of nodes in the computation tree, the variable node arities are bounded. Therefore, for high enough index $k$, the eta-expansion case from rule \rulenamet{Var} will never be met:
after applying rule \rulenamet{IVar} on $t_\epsilon$, all subsequent extensions of the traversal will be constructed using repeated applications of rule $\rulename{Lam^\ghostvar}$ or the eta-expanded case of rule $\rulename{Var}$.

The upper-bound $q$ for $k$ is precisely given by the \emph{arity threshold} of the traversal $t_\epsilon$ as defined in~\ref{dfn:arity-threshold}:
    \begin{align*}
     \arth(t_\epsilon)
     &= \max \{ |s^2| , 
                |s^1| + (|s^2| - |\lambda|) , 
                |x| +  (|s^1| - |\lambda s z|) + (|s^2| - |\lambda|)
               \} \\
    & = \max \{   0 , 1 + (0 - 0) ,  2 + (1 - 2) + (0 - 0)
            \} = 1
     \end{align*}
     Thus, we do not have to consider higher value of $k$ at $t_\epsilon$. Hence we have $\betanf{M} = \lambda y s \ldots \cdot s (y R_1 \ldots R_{q_2})$.

\item The arity threshold of $t_1$ is $\arth(t_1) = |z| + |y| - |\lambda^2| = 0+2-0 = 2$ hence  $\betanf{M}$ is of the form $\lambda y s \ldots \cdot s (y R_1 R_2)$.

\item Let's eta-expand using rule \rulenamet{IVar} for child index $k_2$ ranging from $1$ to $q_2 = 2$. For the case $k_2 = 1$, we obtain the traversal:

$t_{11} = \Pstr[0.7cm]{
(n0){\lambda }\
(n1){@}\ (n2-n1){\lambda x y s z}\ (n3-n2){x}\ (n4-n1){\lambda s z}\ (n5-n4){s}\ (n6-n3){\lambda }\ (n7-n2){s}\ (n8-n1){\ghostlmd^3}\ (n9-n0){\ghostvar^2}
(n10-n9){\ghostlmd^1}
(n11-n8){\ghostvar^1}
(n12-n7){\ghostlmd^1}
(n13-n6){\ghostvar^1}
(n14-n5){\lambda^1}
(n15-n4)z
(n16-n3){\lambda^2}
(n17-n2)y
(n18-n1){\ghostlmd^2}
(n19-n0){\ghostvar^1}
(n20-n19){\ghostlmd^1}
(n21-n18){\ghostvar^1}
(n22-n17,60:1)\lambda 
(n23-n2,45:3) s
(n24-n1,45:3) {\ghostlmd^3}
(n25-n0,48:2) {\ghostvar^2}
}$

Thus $\pview{\core{t_{11}}} =
\Pstr[0.7cm]{
(n0){\lambda }\
 (n9-n0){\ghostvar^2}
 (n10-n9){\ghostlmd^1}
(n19-n0){\ghostvar^1}
(n20-n19){\ghostlmd^1}
(n25-n0,48:2){\ghostvar^2}
}$. Hence $\betanf{M}$ is of the form $\lambda y s \ldots \cdot s\ (y\ s\ R_2)$.

\item Extending $t_1$ with \rulenamet{IVar^\lambda} for $k_2 = 2$ gives:

$t_{12} = \Pstr[0.7cm]{
(n0){\lambda }\
(n1){@}\ (n2-n1){\lambda x y s z}\
(n3-n2){x}\ (n4-n1){\lambda s z}\
(n5-n4){s}\
(n6-n3){\lambda }\
(n7-n2){s}\
(n8-n1){\ghostlmd^3}\
(n9-n0) {\ghostvar^2}
(n10-n9) {\ghostlmd^1}
(n11-n8){\ghostvar^1}
(n12-n7){\ghostlmd^1}
(n13-n6){\ghostvar^1}
(n14-n5)\lambda^1
(n15-n4)z
(n16-n3){\lambda^2}
(n17-n2)y
(n18-n1){\ghostlmd^2}
(n19-n0){\ghostvar^1}
(n20-n19){\ghostlmd^2} 
(n21-n18){\ghostvar^2}
(n22-n17,60:2)\lambda 
(n23-n2,45:4) z
(n24-n1,45:4) {\ghostlmd^4}
(n25-n0,48:3) {\ghostvar^3}
}$

Thus $\pview{\core{t_{12}}} =
\Pstr[0.7cm]{
(n0){\lambda }\
 (n9-n0){{\ghostvar^2}}
 (n10-n9){\ghostlmd^1}
(n19-n0){\ghostvar^1}
(n20-n19){\ghostlmd^2}
(n25-n0,48:3) {\ghostvar^3}
}$.
\end{itemize}
Hence $\betanf{M} = \lambda y s z \cdot s\ (y\ s\ z)$.
\end{example}

\begin{example}[``Missing operand'' example by Neil Jones]
\label{ex:missingoperand}
This small example illustrates how on-the-fly eta-expansion helps resolve the ``missing argument'' problem faced when using the traversal rules of STLC. Take $M = (\lambda u . u\ (y_1\ u)) (\lambda v . v\ y_2)$.
The computation tree and its only two maximal normalizing traversals are shown below.

\begin{tabular}{lp{12cm}}
\begin{tikzpicture}[baseline=(root.base),level distance=5ex,inner ysep=0.5mm,sibling distance=10mm]
\node (root)
{$\lambda$}
child {node{$@$}
        child{node{$\lambda u $}
           child {node {$u$}
              child {node {$\lambda$}
                  child {node {$y_1$}
                      child {node {$\lambda$}
                          child {node {$u$}
                          }
                      }
                  }
              }
           }
        }
        child{node{$\lambda v$}
            child{node{$v$}
                child{node{$\lambda$}
                    child{ node {$y_2$}}
                }
            }
        }
    }
;
\end{tikzpicture}
&
With STLC traversals, one can traverse $t_1$ all the way to variable $v$ at which point one gets stuck due to the lack of operand applied to the last occurrence of $u$ (the occurrence  at the bottom of the left branch in the tree).
$$t_1 = \Pstr[0.7cm]{(n0){\lambda }\,(n1){@}\,(n2-n1){\lambda u}\,(n3-n2){u}\,(n4-n1){\lambda v}\,(n5-n4){v}\,(n6-n3){\lambda }\,(n7-n0){y_1}\,(n8-n7){\lambda }\,(n9-n2){u}\,(n10-n1){\lambda v}\,(n11-n10){v}\,(n12-n9){{\ghostlmd^{1}}}\,(n13-n8){{\ghostvar^{1}}}\,(n14-n13){{\ghostlmd^{1}}}\,(n15-n12){{\ghostvar^{1}}}\,(n16-n11){\lambda }\,(n17-n0,35){y_2}\,}
$$
Using imaginary traversals, that operands gets created on-the-fly through eta-expansion and is represented by ghost lambda node $\ghostlmd^1$.
The other maximal traversal is$$
t_2 = \Pstr[0.7cm]{(n0){\lambda }\,(n1){@}\,(n2-n1){\lambda u}\,(n3-n2){u}\,(n4-n1){\lambda v}\,(n5-n4){v}\,(n6-n3){\lambda }\,(n7-n0,30){y_1}\,(n8-n7){{\ghostlmd^{2}}}\,(n9-n6){{\ghostvar^{1}}}\,(n10-n5){\lambda }\,(n11-n0,30){y_2}\,}
$$
The two core P-views give the two maximal paths in the beta-normal form:
$\pview{\core{t_1}} = \lambda \ y_1\ \lambda \ \ghostvar^{1}
\ \ghostlmd^{1}\ y2$ and  $\pview{\core{t_2}} = \lambda\ y_1\ \ghostlmd^{2}\ y_2$.
The beta-normal form of $M$ is thus $y_1 (\lambda  z.z\ y_2) y_2$.
\end{tabular}
\end{example}

\begin{example}
    Take the fixed-point combinator $\Omega = (\lambda x. x x) (\lambda y. y y)$.

    \begin{tabular}{lp{12cm}}
    \begin{tikzpicture}[baseline=(root.base),level distance=5ex,inner ysep=0.5mm,sibling distance=10mm]
        \node (root)
        {$\lambda$}
        child {node{$@$}
                child{node{$\lambda x $}
                   child {node {$x_1$}
                      child {node {$\lambda$}
                          child {node {$x_2$}
                          }
                      }
                   }
                }
                child{node{$\lambda y$}
                    child{node{$y_1$}
                        child{node{$\lambda$}
                            child{ node {$y_2$}}
                        }
                    }
                }
            }
        ;
    \end{tikzpicture}
    &
    It's a well-known fact that $\Omega$ does not have a normal form. The term has a unique (infinite) traversal represented below, which starts with
    $\Pstr[0.7cm]{{\lambda}\ (n1){@}\ (n2-n1){\lambda x}\ (n3-n2){x_1}\ (n4-n1){\lambda y}\ (n5-n4){y_1}\ (n6-n3){\lambda }\ (n7-n2){x_2}\ (n8-n1){\lambda y}\ (n9-n8){y_1}\ldots }$ and then follows a repeated pattern with an increasing number of ghost nodes between occurrences of $y_1$ and $y_2$:
    \end{tabular}

    \resizebox{1\hsize}{!}{
    $$
    \Pstr[0.7cm]{{\lambda }\ (n1){@}\ (n2-n1){\lambda x}\ (n3-n2){x}\ (n4-n1){\lambda y}\ (n5-n4){y}\ (n6-n3){\lambda }\ (n7-n2){x}\ (n8-n1){\lambda y}\ (n9-n8){y}\ (n10-n7){{\ghostlmd^{1}}}\ (n11-n6){{\ghostvar^{1}}}\ (n12-n5){\lambda }\ (n13-n4){y}\ (n14-n3){\lambda }\ (n15-n2){x}\ (n16-n1){\lambda y}\ (n17-n16){y}\ (n18-n15){{\ghostlmd^{1}}}\ (n19-n14){{\ghostvar^{1}}}\ (n20-n13){{\ghostlmd^{1}}}\ (n21-n12){{\ghostvar^{1}}}\ (n22-n11){{\ghostlmd^{1}}}\ (n23-n10){{\ghostvar^{1}}}\ (n24-n9){\lambda }\ (n25-n8){y}\ (n26-n7){{\ghostlmd^{1}}}\ (n27-n6){{\ghostvar^{1}}}\ (n28-n5){\lambda }\ (n29-n4){y}\ (n30-n3){\lambda }\ (n31-n2){x}\ (n32-n1){\lambda y}\ (n33-n32){y}\ (n34-n31){{\ghostlmd^{1}}}\ (n35-n30){{\ghostvar^{1}}}\ (n36-n29){{\ghostlmd^{1}}}\ (n37-n28){{\ghostvar^{1}}}\ (n38-n27){{\ghostlmd^{1}}}\ (n39-n26){{\ghostvar^{1}}}\ (n40-n25){{\ghostlmd^{1}}}\ (n41-n24){{\ghostvar^{1}}}\ (n42-n23){{\ghostlmd^{1}}}\ (n43-n22){{\ghostvar^{1}}}\ (n44-n21){{\ghostlmd^{1}}}\ (n45-n20){{\ghostvar^{1}}}\ (n46-n19){{\ghostlmd^{1}}}\ (n47-n18){{\ghostvar^{1}}}\ (n48-n17){\lambda }\ (n49-n16){y}\ (n50-n15){{\ghostlmd^{1}}}\ (n51-n14){{\ghostvar^{1}}}\ (n52-n13){{\ghostlmd^{1}}}\ (n53-n12){{\ghostvar^{1}}}\ (n54-n11){{\ghostlmd^{1}}}\ (n55-n10){{\ghostvar^{1}}}\ (n56-n9){\lambda }\ (n57-n8){y}\ldots }
    $$
    }

    Consider each section of the traversal starting with $y_1$ and ending with $y_2$ with no other variable occurrence in-between, and define its length as the number of ghost nodes in-between. Then the lengths of successive such traversal sections have the following pattern: $2,6,14,30,62\ldots$
    One can show that the length of the $i$th section is precisely given by $2\times(2^i-1)$, for $i\geq1$.
    So as the term is being evaluated, it takes exponentially longer to determine that $y_2$ is the argument of $y_1$.
    The infinite traversal $t$ can be expressed as follows (omitting justification pointers):
    \begin{equation*}
    t_\Omega = \lambda \cdot @ \cdot \lambda x \cdot  x_1 \cdot \lambda y \cdot y_1
    \cdot  \sum_{i\geq 1}
        \left[
            \lambda \cdot x_2 \cdot \lambda y \cdot y_1 \cdot
             {(\ghostlmd\ \ghostvar)}^{2^i -1}\cdot \lambda \cdot y_2
                 \sum_{i\leq j<i}
                    \left(
                        {(\ghostlmd\ \ghostvar)}^{2^j -1}
                        \cdot
                        \lambda \cdot y_2
                    \right)
        \right]
    \end{equation*}
    where $\sum$ represents sequence concatenation and $u^i$ represents the concatenation of $i$ copies of $u$ for $i\geq 0$.
    Represented more succinctly by keeping only variable nodes and omitting redundant lambda and @-nodes:
    $t_\Omega \filter \NodesVar =  x_1 \cdot y_1
        \cdot  \sum_{i\geq 1}
            \left[
                x_2 \cdot y_1 \cdot
                 \ghostvar^{2^i -1}\cdot y_2
                     \sum_{1\leq j <i} (\ghostvar^{2^j -1} \cdot y_2)
            \right]
    $.
    Without ghost nodes, $\Omega$'s traversal semantics can thus be summarized as
    $
        t_\Omega \filter (\NodesVar\inter\ExtNodes) =  x_1 \cdot y_1
        \cdot  \sum_{i\geq 1} x_2 \cdot y_1 \cdot {y_2}^i \ .
    $
\end{example}

\begin{example}[Neil Jones' varity example]
    \label{example:varity}
Take $varity\ two$ where
$two \alphaequiv \lambda s_2 z_2 . s_2 (s_2\ z_2)$ and
$varity \alphaequiv \lambda t.t (\lambda n a x . n (\lambda s z . a s (x s z))) (\lambda a. a) (\lambda z_0 . z_0)$.
The initial traversal $t_\epsilon$ up to the first non-deterministic choice is:

\resizebox{1\hsize}{!}{$
t_\epsilon = \Pstr[0.7cm]{(n0){\lambda }\ (n1){@}\ (n2-n1){\lambda t}\ (n3-n2){t}\ (n4-n1){\lambda s_2 z_2}\ (n5-n4){s_2}\ (n6-n3){\lambda n a x}\ (n7-n6){n}\ (n8-n5){\lambda }\ (n9-n4){s_2}\ (n10-n3){\lambda n a x}\ (n11-n10){n}\ (n12-n9){\lambda }\ (n13-n4){z_2}\ (n14-n3){\lambda a}\ (n15-n14){a}\ (n16-n13){{\ghostlmd^{1}}}\ (n17-n12){{\ghostvar^{1}}}\ (n18-n11){\lambda s z}\ (n19-n10){a}\ (n20-n9){{\ghostlmd^{2}}}\ (n21-n8){{\ghostvar^{1}}}\ (n22-n7){\lambda s z}\ (n23-n6){a}\ (n24-n5){{\ghostlmd^{2}}}\ (n25-n4){{\ghostvar^{3}}}\
(n26-n3){\lambda z_0}\ (n27-n26){z_0}\ (n28-n25){{\ghostlmd^{1}}}\ (n29-n24){{\ghostvar^{1}}}\
(n30-n23){\lambda }\ (n31-n22){s}\ (n32-n21){{\ghostlmd^{1}}}\ (n33-n20){{\ghostvar^{1}}}\ (n34-n19){\lambda }\ (n35-n18){s}\ (n36-n17){{\ghostlmd^{1}}}\ (n37-n16){{\ghostvar^{1}}}\ (n38-n15){{\ghostlmd^{1}}}\ (n39-n14){{\ghostvar^{2}}}\ (n40-n13){{\ghostlmd^{2}}}\ (n41-n12){{\ghostvar^{2}}}\ (n42-n11){{\ghostlmd^{2}}}\ (n43-n10){{\ghostvar^{4}}}\ (n44-n9){{\ghostlmd^{4}}}\ (n45-n8){{\ghostvar^{3}}}\ (n46-n7){{\ghostlmd^{3}}}\ (n47-n6){{\ghostvar^{5}}}\ (n48-n5){{\ghostlmd^{5}}}\ (n49-n4){{\ghostvar^{6}}}\ (n50-n3){{\ghostlmd^{6}}}\ (n51-n2){{\ghostvar^{4}}}\ (n52-n1){{\ghostlmd^{4}}}\ (n53-n0){{\ghostvar^{3}}}}$}

Enumerating normalizing traversals yields the set $\{t_\epsilon, t_{11}, t_{12}, t_{121}, t_{122} \}$. Keeping only maximal traversals gives $\travsetnorm_{\max}(M) = \{ t_{11}, t_{121}, t_{122} \}$ from Table~\ref{tab:varity2_trav} in appendix. The set of maximal paths in the beta-normal form is thus given by the core P-views
$\pview{\core{t_{11}}} =
        \lambda \ \ghostvar^{3}\ \ghostlmd^{1}\
        \ghostvar^{1}\ \ghostlmd^{1}\ \ghostvar^{3}
$,
$\pview{\core{t_{121}}} =
        \lambda \ \ghostvar^{3}\ \ghostlmd^{1}\ \ghostvar^{1}\ \ghostlmd^{2}\ \ghostvar^{2}\ \ghostlmd^{1}\ \ghostvar^{3}$
and
$\pview{\core{t_{122}}} =
    \lambda \ \ghostvar^{3}\ \ghostlmd^{1}\ \ghostvar^{1}\ \ghostlmd^{2}\ \ghostvar^{2}\ \ghostlmd^{2}\ \ghostvar^{4}$
Therefore the beta-normal from of $varity\ 2$ is
$\lambda x_1 x_2 x_3 x_4 . x_3 (x_1 x_3 (x_2 x_3 x_4)) \alphaequiv \lambda x y s z . s (x s (y s z))
$.
\end{example}


\section{Leftmost linear reduction}
\label{sec:leftmostlinearred}

\subsection{Lambda calculus background and notations}
We recall standard results of the lambda calculus.
A \defname{redex} is a sub-term of the form $(\lambda x. M) N$.
Reducing redex $(\lambda x. M) N$, or also \emph{firing} the redex, means substituting all free occurrences of $x$ in $M$ by the term $N$ using capture avoiding substitution (the bound variable is renamed afresh when recursively substituting under a lambda).
A term is said to be in \defname{normal form} if it does not contain any redex.
A term is in \defname{head normal form} if it can be written $\lambda x_1 \ldots x_n . y A_1 \ldots A_m$ for $n,m\geq0$. If a term is not in head normal form then its \defname{head beta-redex} is the left-most redex, otherwise the term does not have any head beta-redex. For any reduction relation $\rightarrow$ between terms we will write
$\rightarrow^*$ to denote its reflexive transitive closure.

The \defname{head reduction}, denoted $\rightarrow_{h}$, fires the head redex of a term if it exists. It can be shown that $\rightarrow^*_{h}$ yields the head-normal form. The \defname{normal-order reduction strategy}, also called leftmost-outermost reduction strategy, performs head reduction until reaching the head-normal form and then recursively applies head reduction on each operand of the head variable. A standard result is that this reduction strategy yields the normal form if it exists.

\subsection{Head-linear reduction}

In the lambda calculus, a redex is necessarily formed by the outermost lambda in operator position of an application: if the operator consists of consecutive lambda abstractions (\eg, as in $(\lambda x \lambda y . M) A_1 A_2$) then the outermost lambda (\eg, $\lambda x$) is the one that will form the redex. The notion of redex can be generalized to allow evaluation of arguments in any order. In particular, one can allow any of the consecutive $\lambda$-abstractions to form a redex (\eg, the abstraction $\lambda y$ and corresponding argument $A_2$ would be a valid redex). This is formalized by the notion of \emph{generalized redex}--a generalization of the notion of \emph{prime redex} from \cite{danos-head}:
\begin{definition}[Generalized redex]
\label{dfn:generalized_redex}
The set of generalized redexes of a term $M$, written $gr(M)$, is a set of pairs $(\lambda x, A)$ where $\lambda x$ is some abstraction in $M$ and $A$ is a subterm called the argument of $\lambda x$. The head $\lambda$ list of $M$, written $\lambda_l(M)$ is a list of lambda abstractions of $M$. They are defined by induction:
\begin{align*}
gr(v) &= \emptyset & \lambda_l(v) &= \epsilon\\
gr(\lambda x. U) &= gr(U) & \lambda_l(\lambda x. U) &= \lambda x \cdot \lambda_l(U) \\
gr(U V) &= \{ (\lambda x, V) \} \union gr(U) \union gr(V) &
\lambda_l(U V) &= l & \mbox{if $\lambda_l(U) = \lambda x \cdot l$} \\
gr(U V) &= gr(U) \union gr(V) & \lambda_l(U V) &= \epsilon & \mbox{if $\lambda_l(U) = \epsilon$.}
\end{align*}
where $v$ ranges over variable occurrences, $x$ ranges over variable names, $U, V$ range over subterms of $M$, and $\epsilon$ denotes the empty list.
\end{definition}

\begin{example} For any term $M, N, A_1, A_2$ we have
$gr((\lambda x \lambda y . M) A_1 A_2) = \{ (\lambda x, A_1), (\lambda y, A_2)\}$ and
 $gr((\lambda z . (\lambda x \lambda y . M) N) A_1 A_2) = \{ (\lambda z, A_1), (\lambda x, N), (\lambda y, A_2)\}$.
\end{example}

To define head-linear reduction, one needs to consider specific \emph{occurrences} of variables and sub-terms in a given term. In particular, let's emphasize that a generalized redex  $(\lambda x, A)$ refers to specific \emph{occurrence} of $\lambda x$ and subterm $A$.

We say that a variable occurrence is \defname{involved} in the generalized redex $(\lambda x, A)$ if the variable occurrence is bound by $\lambda x$. A variable occurrence can therefore be involved in at most one generalized redex. We define the \defname{linear substitution} of $x$ for $A$ as the term obtained by performing capture-avoiding substitution of that single occurrence of $x$ by $A$. (Compare this to the standard substitution that applies to \emph{every} occurrence of $x$ in $M$.) When performing such substitution we say that we \defname{linearly fire} the generalized redex $(\lambda x, A)$ for that occurrence of $x$.

The \defname{head variable occurrence} (abbreviated \emph{hoc}) of a term is the left-most variable occurring in the term (\ie, the first variable found by depth-first traversal of the term tree.) If the head variable occurrence of a term is involved in a generalized redex then we call that redex the \defname{head-linear redex}.
A term that does not have a head-linear redex is said to be in \defname{quasi head-normal form}.
The \defname{head-linear reduction} $\hlred$ is defined as the reduction that linearly fires the head linear redex, if it exists. It can be shown that the reflexive transitive closure $\rightarrow^*_{hl}$ yields the quasi head-normal form \cite{danos-head,danosherbelinregnier1996}.

\paragraph{Soundness}
The set of \defname{spine subterms} is defined by induction: a term is a spine subterm of itself; the spine subterms of $U V$ and $\lambda x. U$ are those of $U$.
A \defname{prime redex} (ala Danos-Regnier) is a generalized redex $(\lambda x, A)$ such that the operator of the redex ($\lambda x . U$ for some $U$) is a spine subterm. Danos-Regnier showed the following result (Theorem~2 in \cite{danos-head}):
\begin{theorem}[Soundness and completeness of head-linear reduction~\cite{danos-head}] \
\label{thm:danosreigner_headlinred}
\begin{itemize}[nosep]
\item If $T \rightarrow^*_{hl} T$  then $T$ and $T'$ are $\beta$-equivalent.
\item If $T$ is in quasi-head normal form and has $n$ prime redexes then the head reduction of $T$ leads to a head normal form in exactly $n$ steps.
\item If $T$ is any term, the head linear reduction of $T$ terminates iff the head reduction of $T$ terminates.
\end{itemize}
\end{theorem}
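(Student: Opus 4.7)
The plan is to prove the three claims in the stated order, relying on a careful comparison between a single $\hlred$-step and the standard $\beta$-step of the generalized redex it instantiates. For claim~(1), I would strengthen the statement to the assertion that each head-linear step is $\beta$-equivalent to the identity. Suppose $T \hlred T'$ fires the head-linear redex $(\lambda x, A)$ on some head occurrence $x_0$ of $x$. In $T'$ the abstraction $\lambda x$ and its argument $A$ are still present, but with one fewer occurrence of $x$ bound by $\lambda x$; performing the full $\beta$-step associated with $(\lambda x, A)$ on both $T$ and $T'$ produces the same resulting term. Hence $T$ and $T'$ are $\beta$-equivalent, and the full claim follows by induction on the length of the reduction sequence.

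For claim~(2), I would analyse the structure of a quasi-head-normal form. In a qhnf the head occurrence is excluded from every generalized redex, so the term decomposes along its spine as a nested sequence of generalized redexes wrapping a ``dead'' head position. The prime redexes are then exactly the generalized redexes positioned along this spine, and there is a canonical outermost one that coincides with the standard head $\beta$-redex of the term. I would proceed by induction on the number $n$ of prime redexes, proving the invariant that firing the head redex of a qhnf with $n\geq1$ prime redexes produces a qhnf with exactly $n-1$ prime redexes. The subtlety is that the substitution performed by the head step cannot accidentally create a new prime redex at the head position, because by the qhnf assumption the head occurrence is not bound by the substituted abstraction.

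Claim~(3) follows by combining (1) and (2) with a two-way simulation between head reduction and head-linear reduction. For the ($\Rightarrow$) direction, if head reduction of $T$ reaches a head normal form in finitely many steps, each such head step can be simulated by a bounded number of $\hlred$-steps (one linear substitution per occurrence of the substituted variable, followed by the $\hlred$-steps needed to expose the next head redex), yielding termination of $\hlred$. For the ($\Leftarrow$) direction, a terminating head-linear reduction from $T$ produces a qhnf $T'$ whose finitely many prime redexes can be fired by finitely many head reductions from $T$ by (1) and (2), reaching a head normal form.

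The main obstacle will be making the simulation argument in claim~(3) precise. Specifically, one must show that $\hlred$ cannot diverge in a ``purely linear'' way, substituting occurrences indefinitely without ever consuming a generalized redex, and conversely that a standard head step corresponds to a bounded number of $\hlred$-steps. The prime-redex count from claim~(2) is the natural measure to track, but verifying its interaction with the bookkeeping of how generalized redexes are created, preserved, or destroyed by a single linear substitution is the delicate technical point.
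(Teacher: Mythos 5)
First, note that the paper offers no proof of this statement: it is quoted verbatim as Theorem~2 of Danos--Regnier \cite{danos-head} and used as an imported result, so the only thing to compare your argument against is the paper's later adaptation of the \emph{consecutive redex} technique in the proof of Theorem~\ref{thm:soundness_leftmostlinearred}. Your first two bullets are broadly workable. For the first, be aware that a generalized redex $(\lambda x, A)$ is in general not a standard $\beta$-redex (consider $(\lambda y, A_2)$ in $(\lambda x \lambda y. M)\,A_1\,A_2$), so ``performing the full $\beta$-step associated with $(\lambda x, A)$ on both $T$ and $T'$'' is not a legal reduction in either term; you must first fire the chain of generalized redexes containing it, which is exactly the consecutive-redex induction that the paper borrows from \cite{danos-head} when it proves the analogous soundness statement for leftmost linear reduction. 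The second bullet's invariant (each head step consumes exactly one prime redex and creates none at the head position, because the \emph{hoc} of a quasi-head normal form is not bound by the fired abstraction) is the right one.

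The genuine gap is in the $(\Rightarrow)$ direction of the third bullet. You propose to simulate one head step by ``one linear substitution per occurrence of the substituted variable, followed by the $\hlred$-steps needed to expose the next head redex.'' But head-linear reduction never substitutes non-head occurrences: by definition it rewrites only the single head occurrence, and the remaining occurrences of $x$ bound by the same abstraction are handled lazily, only if and when they later become head occurrences (possibly never, possibly only after arbitrarily many intervening linear steps on other redexes). So a head step does not decompose into a bounded block of $\hlred$-steps, and the prime-redex count you propose as a measure does not decrease along $\hlred$: a linear substitution does not consume the generalized redex it fires (it only removes one bound occurrence of $x$), and copying $A$ into applied position can create new generalized redexes. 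This termination direction is precisely the nontrivial content of Danos--Regnier's theorem; your sketch names the difficulty but does not supply an argument that closes it, so as written the proof of the third bullet would fail.
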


\subsection{Leftmost linear reduction}

As the name indicates, the head-linear reduction is a linear version of the \emph{head} reduction. It thus only yields the quasi \emph{head}-normal form, not the normal form, and therefore is not complete for normalization.

In the standard lambda calculus, normal-order reduction strategy is obtained by repeatedly applying head reduction to get to the head-normal form, and then continuously applying the head-reduction on each argument of the head variable.
This ultimately yields the normal form of the term if it exists.
We now define the linear counterpart of the normal-order reduction: Informally, the \defname{leftmost linear reduction strategy} is the strategy that performs head-linear reduction if possible, and otherwise, if the term is in quasi-head normal form, continuously (and recursively) applies the head-linear reduction on each argument (from left to right) of the head variable occurrence.
Formally:

\begin{definition}[Leftmost linear reduction]
    \label{def:leftmostlinearreduction}
Given a term $M$, we write $\VarOcc(M)$ for the set of variable occurrences in $M$
and $\VarOcc^\bot(M)$ for $\{\bot \} + \VarOcc(M)$ where the bottom element $\bot$ represents the `undefined' occurrence. We introduce the partial order $\sqsubseteq$ on $\VarOcc^\bot(M)$ defined by: for all $x,y \in \VarOcc^\bot(M)$, $x \sqsubseteq y$ if and only if $x = y$ or $x = \bot$. We define the partial function $lloc_M$ from subterms of $M$ to variable occurrences by induction on the subterms of $M$:
\begin{align*}
lloc_M(v) &=
    \begin{cases}
    v &\mbox{if $v$ is involved in a generalized redex in $M$,} \\
    \bot & \mbox {otherwise.}
    \end{cases}  \\
lloc_M(\lambda x . U) &= lloc_M(U) \\
lloc_M(U V) &= \begin{cases}
                lloc_M(U) &\mbox{if $lloc_M(U)\neq\bot$,} \\
                lloc_M(V) &\mbox{if $lloc_M(U)=\bot$ and $\lambda_l(U) = \epsilon$,} \\
                \bot & \mbox{if $lloc_M(U)=\bot$ and $\lambda_l(U) \neq \epsilon$.}
              \end{cases}
\end{align*}
where $v$ ranges over variable occurrences in $M$,
and for any subterm $N$, $lloc_M(N) = \bot$ denotes that $lloc$ is undefined at $N$.

The \defname{leftmost linear variable occurrence} of $M$
is defined as $lloc_M(M)$ if it exists and the generalized redex it is involved in is called the \defname{lloc redex}, otherwise $M$ is said to be in \defname{quasi normal form}, or \emph{qnf} for short.
The \defname{leftmost linear reduction} strategy, written $\llred$, is defined as the strategy that linearly fires the generalized redex involving the leftmost linear variable occurrence.
\end{definition}

Leftmost linear reduction proceeds by first locating the left-most variable occurrence and, if it is involved in a redex, linearly fires it. In comparison, the standard left-most outermost reduction strategy first locates the leftmost redex and then fires it by substituting all the variables involved in it.

Since the \emph{lloc} is uniquely defined, a given term has a unique left-most linear reduction sequence $M \llred M_1 \llred M_2 \llred \ldots$ (also abbreviated as \defname{linear reduction sequence}).

\begin{example}
Take $M = (\lambda x. x x N) z$. We have $M \hlred (\lambda x. z x N) z$ which is in quasi head normal form because the head variable $z$ is not involved in any generalized redex.
But since the left-most occurrence $x$ is involved in the redex $(\lambda x, z)$ the leftmost linear reduction gives $(\lambda x. z x N) z \llred (\lambda x. z z N) z$.
\end{example}

The \emph{lloc} generalizes the notion of \emph{hoc} to arguments of the head variable. Observe, however, that contrary to the \emph{hoc}, the definition requires the \emph{lloc} to be involved in a generalized redex.
\begin{example}In the term $M = (\lambda x y . z y x) (\lambda u . u) (\lambda v . v)$, the \emph{hoc} variable $z$ is not involved in any generalized redex, whereas the \emph{lloc} variable $y$ is involved in generalized redex $(\lambda y, (\lambda v.v))$.
\end{example}
On the other hand, if the \emph{hoc} is involved in a generalized redex then by definition it must coincide with the \emph{lloc}.

\begin{property}
    If $\lambda_l(M) = \langle \lambda x_1 \ldots x_n \rangle$ and $M$ has a $\beta$-normal form then its beta-normal form starts with at least $n$ lambda abstractions, \ie, up to variable renaming it is of the shape $\lambda x_1 \ldots x_n . M'$ for some beta-normal term $M'$.
\end{property}
\begin{proof}
    Trivial induction.
\end{proof}

\begin{property}
\label{prop:qnf_longapply}
    Let $M$ be an untyped term and $y A_1 \ldots A_n$ be a subterm of $M$ for $n\geq1$ where $y$ is a variable not involved in any generalized redex then:
    \begin{align*}
    lloc_M(y A_1 \ldots A_n) &=
        \begin{cases}
         lloc_M (A_j) &\mbox{ where } j = \min \{ 1\leq i\leq n \ | \ lloc_M (A_i) \neq \bot\} \\
         \bot &\mbox{ if } lloc_M(A_i) = \bot \mbox{ for all } 1\leq i\leq n.
        \end{cases}
    \end{align*}
\end{property}
\begin{proof}
    Immediate from the definition of $lloc$ since $\lambda_l(y) = \epsilon$.
\end{proof}

We say that a standard redex $(\lambda x. N) A$ is \defname{trivial} if $x$ does not occur freely in $N$ in which case the redex can be \emph{trivially reduced} to just $N$.

\begin{property}
\label{prop:lloc_properties}
Some immediate properties:
\begin{enumerate}[noitemsep,label=(\roman*)]
\item
 If $N$ is a subterm of $M$ then $lloc_N(N) \sqsubseteq lloc_M(N)$;
\item
 If $M$ is beta-normal then $lloc_M(M) = \bot$;
 \item
 If $U V$ is in quasi normal form then so is $U$;
\item
 If $\lambda x . U$ is in quasi normal form then so is $U$;
 \item If $M$ has a \emph{lloc} redex then $M$ has a redex;
 \item If all the redexes in $M$ are trivial then $M$ is in \emph{qnf}.
\end{enumerate}
\end{property}
\begin{proof}
(i) Immediate from the fact that generalized redexes of $N$ are also generalized redexes of $M$. (ii) If $lloc_M(M) \neq \bot$ then $M$ must contain a generalized redex and therefore must also contain a standard redex hence $M$ is not beta-normal.
(iii) If $U V$ is in quasi normal form then $lloc_{UV}(UV) = \bot$ so by definition of $lloc$ we must have $lloc_{UV}(U) = \bot$. By (i) since $U$ is a subterm of $UV$ this implies $lloc_{UV}(U) = \bot$.
(iv) We have $\bot = lloc_{\lambda x . U}(\lambda x . U) = lloc_{\lambda x . U}(U) \sqsupseteq lloc_{U}(U)$.
(v) and (vi) are immediate from the definition of \emph{lloc} redex.
\end{proof}

The following example shows that the converse of (vi) does not hold:
\begin{example}
Take $M= (\lambda z.u)((\lambda x . x) y)$. Then $lloc_M(M) = \bot$ since $\lambda_l(\lambda z.y)\ne\epsilon$ and $lloc_M(\lambda z.u) = \bot$.
So $M$ is in quasi-normal form, even though it has a non-trivial redex in the argument $(\lambda x . x) y$.
\end{example}

\begin{definition}[Arguments of the hoc]
Let $z$ be the \emph{hoc} of $M$. The arguments of the \emph{hoc} occurrence, written ${\sf Args}(M)$, is a list of subterms of $M$ defined by induction:
\begin{align*}
{\sf Args}(z) &= \epsilon \\
{\sf Args}(\lambda x . U ) &= {\sf Args}(U) \\
{\sf Args}(U V) &= \begin{cases}
             {\sf Args}(U) \cdot V & \mbox{if } \lambda_l(U) = \epsilon\\
             {\sf Args}(U)         & \mbox{if } \lambda_l(U) \neq \epsilon
            \end{cases}
\end{align*}
Note that the variable case where $x\neq z$ is not needed since the induction proceeds by spinal descent to the head occurrence.
\end{definition}

\begin{property}
\label{property:hoc_argument_count_in_headnf}
    If $M$ is in quasi-head-normal form, then the head variable of its head-normal form has precisely $|{\sf Args}(M)|$ arguments.
\end{property}
\begin{proof}
By Theorem~\ref{thm:danosreigner_headlinred}, the \emph{hoc} of a quasi-head normal form becomes the head variable of the head-normal form after reducing the prime redexes, and it is easy to show by structural induction that standard beta-reduction preserves the number of pending hoc arguments.
\end{proof}

We now introduce a $\beta$-reduction strategy (in the standard sense).
We define the \defname{leftmost spinal-innermost} standard $\beta$-redex of $M$ as the $\beta$-redex $(\lambda x . N) A$ such that there is no other redex in $M$ occurring left of it and such that $N$ contains no other redex; the argument $A$ on the other-hand, may contain other redexes.
The idea consists in reducing $M$'s redexes in `inside-out left-to-right' order.
This definition should be distinguished from the \emph{leftmost innermost} reduction strategy from the lambda calculus literature.
In particular $(\lambda x . y) \Omega$ reduces to $y$ with the leftmost spinal-innermost strategy, while the leftmost innermost strategy does not terminate.


\begin{proposition}[Trivial reduction of quasi normal forms]
\label{prop:qnf_nf}
Let $M$ be a term in \emph{qnf}.
\begin{itemize}[nosep]
\item[(i)]
 If $M$ is not in beta-normal form then its \emph{leftmost spinal-innermost standard $\beta$-redex} $(\lambda x . N) A$ is trivial.
(\ie, $x$ does not occur freely in $N$).

\item[(ii)] $M$ has a beta-normal form and the leftmost spinal-innermost strategy normalizes $M$.

\end{itemize}
\end{proposition}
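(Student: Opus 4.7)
The plan is to prove (i) by contradiction and then obtain (ii) by strong induction on $|M|$ using (i). For (i), I will assume a qnf $M$ has a leftmost spinal-innermost redex $(\lambda x . N) A$ with $x \in FV(N)$ and derive $lloc_M(M) \ne \bot$. The first ingredient is a \emph{bubble-up lemma}: if $T$ is a $\beta$-normal subterm of $M$ and some $v \in FV(T)$ is involved in a generalized redex of $M$, then $lloc_M(T) \ne \bot$. Its proof is a routine structural induction on $T$; the application case $T = UV$ uses that, $T$ being $\beta$-normal, $U$ cannot start with a $\lambda$-abstraction, so $\lambda_l(U) = \epsilon$ and $lloc_M$ flows through to the operand whenever it finds nothing in the operator. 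Since $N$ is $\beta$-normal (by the spinal-innermost condition) and $x$ is involved in the generalized redex $(\lambda x, A)$, applying the lemma to $N$ gives $lloc_M(N) \ne \bot$ and hence $lloc_M((\lambda x . N) A) = lloc_M(\lambda x . N) = lloc_M(N) \ne \bot$.

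To propagate this non-$\bot$ value from the redex position up to the root of $M$, I analyse the tree path step by step. Along $\lambda$-descent and operator-descent ancestors, non-$\bot$-ness is inherited directly. The key observation is that every operand-descent $U^* V^*$ on the path (with the redex sitting inside $V^*$) must have $U^*$ neither a $\lambda$-abstraction \emph{nor} an application with $\lambda_l(U^*) \ne \epsilon$; otherwise a spinal-innermost redex would open strictly before $(\lambda x . N) A$ in source order, contradicting its leftmost property. Indeed, if $U^* = \lambda z . U'$ then $U^* V^*$ is itself a standard redex whose body $U'$ either is already $\beta$-normal---making $U^* V^*$ a spinal-innermost redex opening before ours---or contains redexes, in which case one locates a spinal-innermost redex inside $U'$ that is still strictly earlier in source than our redex. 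If $U^*$ is an application with $\lambda_l(U^*) \ne \epsilon$, unfolding $U^* = U_1 \ldots U_q$ via the inductive definition of $\lambda_l$ forces $U_1$ to be a $\lambda$-abstraction of arity $\geq q$, so $U_1 U_2$ is a standard redex inside $U^*$ and the same sub-case analysis applies. In the remaining legal operand-descent case, $\lambda_l(U^*) = \epsilon$ makes $lloc_M(U^* V^*)$ equal to $lloc_M(U^*)$ when the latter is non-$\bot$, and otherwise to $lloc_M(V^*)$, which is non-$\bot$ by the induction on the path; either way the parent is non-$\bot$. This yields $lloc_M(M) \ne \bot$, contradicting $M$ in qnf.

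For (ii), I proceed by strong induction on $|M|$. If $M$ is in $\beta$-normal form there is nothing to prove; otherwise (i) says the leftmost spinal-innermost redex $(\lambda x . N) A$ is trivial and firing it yields $M' = M[(\lambda x . N) A \mapsto N]$ with $|M'| < |M|$. It then suffices to show $M'$ is again in qnf, so the induction hypothesis provides a leftmost spinal-innermost reduction of $M'$ to its $\beta$-normal form, and prefixing the trivial step $M \to M'$ gives the required reduction for $M$. The preservation of qnf hinges on a by-product of the propagation analysis above: $lloc_M$ at the redex position in $M$ must already be $\bot$ (otherwise non-$\bot$ would bubble up to $lloc_M(M)$), so $lloc_M(N) = \bot$. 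Since $x \notin FV(N)$, the generalized redexes affecting variables in $N$ and in the surrounding context are unchanged, giving $lloc_{M'}(N) = \bot$ and $\lambda_l(N) = \lambda_l((\lambda x . N) A)$ preserved at the redex position; the identical upward propagation in $M'$ therefore yields $lloc_{M'}(M') = lloc_M(M) = \bot$. The hardest step will be the path analysis in part (i), and specifically the $U^*$-application sub-case, where careful bookkeeping with the inductive definition of $\lambda_l$ is required to expose the hidden standard redex $U_1 U_2$ whose existence contradicts leftmost-ness.
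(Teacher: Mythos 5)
Your proof is correct and follows essentially the same route as the paper's: your bubble-up lemma is exactly the paper's implicit ``bottom-up evaluation of $lloc$'' through the $\beta$-normal body $N$, and your upward propagation from the redex to the root is the contrapositive of the paper's structural induction on $M$, with the same key case analysis ruling out $\lambda_l(U^*)\neq\epsilon$ at operand-descents via leftmost-ness. Part (ii) likewise matches the paper's argument, merely replacing ``trivial reductions create no new redexes'' by a size-decrease induction and spelling out the preservation of \emph{qnf}, which the paper dismisses as easy to see.
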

\proofatend
\begin{compactitem}

\item[(i)]
By induction on $M$. Suppose $M = (\lambda x . N) A$. By assumption $N$ is in beta-normal form therefore if $x$ occurs freely in $N$ it is involved in the generalized redex $(\lambda x, A)$ of $M$ and by bottom-up evaluation of $lloc$ this gives $lloc_M(M) = x$ which contradict the assumption that $M$ is in \emph{qnf}.

Suppose $M = \lambda z . U$ then $U$ is in \emph{qnf} by Property~\ref{prop:lloc_properties}(iv) and the result follows by the induction hypothesis
since the redexes of $M$ are those of $U$ and the free variables of $M$ are a subset of those of $U$.

Suppose $M = U V$. We consider the three sub-cases depending on where the redex is located within $U V$:
\begin{compactitem}
  \item[(1)] The redex $(\lambda x . N) A$  is in $U$. Since $M$ is in \emph{qnf}, by Property~\ref{prop:lloc_properties}(iii) so is $U$, we can thus conclude by the induction hypothesis.

  \item[(2)] The redex $(\lambda x . N) A$  is in $V$.
If $\lambda_l(U) = \epsilon$ then by definition of $lloc$ we have $lloc_M(UV) = lloc_M(V) = \bot$ so we conclude by the induction hypothesis.
If $\lambda_l(U) \neq \epsilon$, then either $U$ contains a redex or $U$ is an abstraction. But $U$ cannot be a redex since by assumption $N$ does not contain any redex, and $U$ being abstraction would make $M = U V$ itself a redex which would contradict the fact that redex $(\lambda x . N) A$ in $V$ is the leftmost redex in $M$.

  \item[(3)] $U = \lambda x . N$ and $V = A$. This is the base case of the induction already treated above.
\end{compactitem}

\item[(ii)]
By (i) each reduction of the leftmost spinal-innermost reduction strategy is trivial and it is easy to see that $lloc$ is preserved by trivial reduction, therefore each term in the reduction sequence is in \emph{qnf}.
Since a trivial reduction cannot possibly create new $\beta$-redexes, the reduction sequence necessarily terminates and yields the beta-normal form.
\end{compactitem}
\endproofatend

Observe that if $M$ is in \emph{qnf}, its redexes are not all necessarily trivial, but the non-trivial redexes get eliminated by the trivial reductions:
\begin{example}
The term $M = (\lambda x . u) ((\lambda z .z) y)$ is in \emph{qnf} and trivially reduces to $u$ but contains the non-trivial redex $(\lambda z .z) y$.
\end{example}

\begin{example}
    Take $M = (\lambda x . z ((\lambda w y . y)x)) U$ for any term $U$.
    Then $M$ is in quasi-head normal form with one prime spine redex $(x,U)$.
    Performing one head reduction gives $M \rightarrow_h z ((\lambda w y. y) U)$
    which brings the \emph{hoc} in head position.
$M$ is also in quasi normal form. Indeed $lloc_M(M) = lloc_M (\lambda w y . y) = \bot$. Reducing the term with the leftmost spinal-innermost reduction strategy gives  $M \rightarrow (\lambda x . z (\lambda y . y)) U
    \rightarrow  z (\lambda y . y) $ which yields a beta-normal form.
\end{example}

\begin{theorem}[Soundness of leftmost linear reduction]
\label{thm:soundness_leftmostlinearred}
If $M \rightarrow_{l} N$ then $M$ and $N$ are beta-equivalent.
\end{theorem}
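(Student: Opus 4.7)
The plan is to prove the stronger lemma that \emph{linearly firing any generalized redex preserves $\beta$-equivalence}, from which the theorem follows as an immediate corollary: by Definition~\ref{def:leftmostlinearreduction}, the step $M \llred N$ linearly fires the \emph{lloc} redex, which is one particular generalized redex. Explicitly, the lemma states: for any generalized redex $(\lambda x, A)$ in $M$ and any occurrence $v$ of the variable bound by $\lambda x$, the term $M'$ obtained by linearly substituting $A$ for $v$ satisfies $M =_\beta M'$.

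To prove the lemma, I exhibit a common $\beta$-reduct of $M$ and $M'$. By the structure of generalized redexes, $M = C[UV]$ for some context $C[\cdot]$, with $V = A$ and $\lambda_l(U) = \lambda x \cdot l$, and the occurrence $v$ necessarily lies inside $U$ (within the scope of $\lambda x$). The key intermediate claim, proved by structural induction on $U$ following the recursive definition of $\lambda_l$ (Definition~\ref{dfn:generalized_redex}), is that $U \rightarrow^*_\beta \lambda x . U''$ for some $U''$. The base case $U = \lambda x . U'$ is immediate. In the recursive case $U = U_1 V_1$ with $\lambda_l(U_1) = \lambda y \cdot \lambda x \cdot l$, I apply the induction hypothesis to obtain $U_1 \rightarrow^*_\beta \lambda y . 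U_1''$, perform one standard $\beta$-step to reach $U_1''[V_1/y]$, and then re-apply the induction hypothesis to expose $\lambda x$ in the reduct.

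Applying this reduction sequence inside the context $C[\cdot]$ gives $M \rightarrow^*_\beta C[(\lambda x . U'') V] \rightarrow_\beta C[U''[V/x]] =: T$. Running the same sequence on $M'$, which differs from $M$ only by a single occurrence $v$ being replaced by $A$, yields $C[U''[A/v'][V/x]]$ where $v'$ denotes the image of $v$ in $U''$. Since $V = A$ and the outer substitution $[V/x]$ rewrites every remaining occurrence of $x$ bound by $\lambda x$ in $U''$ (including $v'$), the two composed substitutions agree: $U''[A/v'][V/x] = U''[V/x]$. Hence $M \rightarrow^*_\beta T$ and $M' \rightarrow^*_\beta T$, so $M =_\beta M'$.

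The main technical obstacle is in the induction step of the intermediate claim: after the $\beta$-step $(\lambda y . U_1'') V_1 \rightarrow_\beta U_1''[V_1/y]$, I must verify that the reduct still satisfies $\lambda_l = \lambda x \cdot l'$ for some $l'$, so that the induction hypothesis can be re-invoked. This rests on the observation that $y$ cannot appear at the start of the spine of $U_1''$---otherwise $\lambda_l(U_1'')$ would begin with $\epsilon$ rather than $\lambda x$---so the substitution $[V_1/y]$ leaves the outer $\lambda x$ of the spine intact. Capture-avoidance when $A$'s free variables could be captured by lambdas lying between $\lambda x$ and $v$ is handled by standard $\alpha$-renaming.
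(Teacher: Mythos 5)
Your strategy is sound and lands on essentially the same underlying idea as the paper's proof, but packaged differently. The paper adapts the Danos--Regnier argument behind Theorem~\ref{thm:danosreigner_headlinred}: it builds the chain $r_0,\ldots,r_p$ of \emph{consecutive} generalized redexes ending at the \emph{lloc} redex (a notion that bakes in both containment and leftmost-ness) and argues that $M$ and $N$ reach a common term under leftmost-outermost reduction. You instead prove the strictly stronger lemma that linearly firing \emph{any} generalized redex preserves $\beta$-equivalence, reading the chain of enclosing redexes directly off the recursive definition of $\lambda_l$ (Definition~\ref{dfn:generalized_redex}): fire the spine redexes of the operator $U$ to surface $\lambda x$, fire $(\lambda x.U'')V$ as an ordinary redex, and observe that the global substitution $[V/x]$ subsumes the earlier linear substitution of the single occurrence $v$ (using that $x$ is not free in $A$). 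This buys a self-contained argument that never invokes leftmost-ness, at the cost of one occurrence-tracking fact that you (like the paper's sketch) leave implicit and should state: the designated occurrence $v$ survives uniquely---neither duplicated nor erased---through the spine reductions, because the operands of the fired spine redexes all lie outside the scope of $\lambda x$ and hence cannot contain $v$.

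One step needs restating. In the intermediate claim you ``re-apply the induction hypothesis'' to $U_1''[V_1/y]$, which is not a structural subterm of $U$, so the induction as written does not go through. The standard fix is to strengthen the claim to: for every $U$, $U \rightarrow^*_\beta \lambda\overline{z}.U''$ where $\overline{z}=\lambda_l(U)$. The structural induction is then clean: the abstraction case prepends one binder to the list, and in the application case $U_1V_1$ with $\lambda_l(U_1)=\lambda y\cdot l$ the hypothesis gives $U_1\rightarrow^*_\beta\lambda y\,\overline{w}.U_1''$ with $\overline{w}=l$, after which a single $\beta$-step yields $\lambda\overline{w}.(U_1''[V_1/y])$, whose lambda list is $l$ as required---no second appeal to the hypothesis is needed.
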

\begin{proof}
The proof is like that of Theorem~\ref{thm:danosreigner_headlinred}. We adapt the notion of \emph{consecutive redex} from \cite{danos-head} to  leftmost-linear reduction: Let $r = (\lambda x, V)$ and  $s =(\lambda y, W)$ be two generalized redexes. We say that $r$ contains $s$ if the node $\lambda y$ lies in the scope of $\lambda x$. We say that $r$ and $s$ are \emph{consecutive} if $r$ contains $s$; no other generalized redex other than $r$ and $s$ is contained in $r$ and contains $s$; and $s$ is the leftmost generalized redex contained in $r$ (\ie, any other redex $t$ contained in $r$ resides on the left of node $\lambda y$ in the usual tree representation of the term). Let $r_0, \ldots, r_p$ denote the consecutive generalized redexes for $p\geq 0$ where $r_0$ is not contained in any other generalized redexes and $r_p$ is the \emph{lloc redex}. (This sequence necessarily exists by definition of \emph{lloc}.) An induction on $p$ shows that $M$ and $N$ reduce to the same term by exactly $p-1$ standard reduction steps (\ie, the strategy reducing the leftmost outermost redex), observing that at every step, no redex can be introduced above or on the left of the \emph{lloc}.
\end{proof}


\begin{property}
\label{property:lambdalist_linearred}
    Let $M_1, M_2$ be two lambda terms.
    \begin{enumerate*}[noitemsep,label=(\roman*)]
        \item If $M_1 \rightarrow_\beta M_2$ then $|\lambda_l(M_2)| \geq
        |\lambda_l(M_1)|$;
        \item If $M_1 \llred M_2$ then $|\lambda_l(M_2)| \geq
        |\lambda_l(M_1)|$;
        \item If $M_1$ trivially reduces to $M_2$ then $\lambda_l(M_2) =
        \lambda_l(M_1)$.
    \end{enumerate*}
\end{property}
\begin{proof}
    (i) and (ii) Performing substitution in an operand of an application does not affect its lambda-list; while substituting a variable in operator position in the application, effectively appends the lambda-list of the substituted term to the lambda-list of the application.
    (iii) follows trivially by definition of $\lambda_l$.
\end{proof}
\begin{example}
Take $M_1 = (\lambda x . x) (\lambda y z . y)$. We have $\lambda_l(M_1) = \epsilon$ but $M \rightarrow_\beta \lambda y z . y = M_2$,
$M_1 \llred (\lambda x . (\lambda y z . y)) (\lambda y z . y) = M_3$ and
and $\lambda_l(M_2) = \lambda_l(M_3) = (\lambda y, \lambda z)$.
Take $N = (\lambda x . U) V$ where $x$ does not occur in $U$ then $N$ trivially reduces to $U$ and $\lambda_l(N) = \lambda_l(U)$.
\end{example}

\begin{property}
\label{prop:qnf_betanf_empty_lambdalist}
    \begin{enumerate}[noitemsep,label=(\roman*)]
    \item If $M$ is in beta-nf and $\lambda_l(M) = \epsilon$ then $M$ is not an abstraction.
    \item If $M$ is in \emph{qnf} and $M$ has a beta-nf then its beta-normal form is not an abstraction.
    \end{enumerate}
\end{property}
\begin{proof}
(i) If $M$ were an abstraction $\lambda x. N$ we would have $\lambda_l(M) = \lambda x$ which contradicts the assumption.
(ii) By Proposition\ref{prop:qnf_nf}(ii), we have that $M$ trivially reduces to
its beta-nf so by Property~\ref{property:lambdalist_linearred},
$M$ and its normal form must have the same lambda list. We can then conclude using (i).
\end{proof}

\begin{theorem}[Completeness of leftmost linear reduction]
\label{thm:completeness_leftmostlinearred}
If $M$ has a beta-normal form then its left-most linear reduction terminates (and thus yields a quasi normal form).
\end{theorem}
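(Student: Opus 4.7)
The plan is to proceed by strong induction on the size of the beta-normal form $N$ of $M$, combining Danos--Regnier's completeness result for head-linear reduction (Theorem~\ref{thm:danosreigner_headlinred}) with the recursive structure of $\llred$.

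As long as a term in the reduction sequence is not in quasi-head normal form, its head occurrence is involved in a generalized redex and so the $\llred$-step coincides with an $\hlred$-step. Since $M$ has a beta-normal form, its head reduction terminates, so by Theorem~\ref{thm:danosreigner_headlinred} does its head-linear reduction. After finitely many $\llred$-steps we therefore reach a term $M' = \lambda x_1 \ldots x_k . y\, A_1\, \cdots A_n$ in quasi-head normal form. By Theorem~\ref{thm:soundness_leftmostlinearred} together with Church--Rosser, the beta-normal form of $M'$ is still $N$, which must then be of the shape $\lambda x_1 \ldots x_k . y\, N_1\, \cdots N_n$ with each $A_i$ reducing to $N_i$. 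If $n=0$ then $gr(M') = \emptyset$, so $lloc_{M'}(M') = \bot$ and we are done.

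For $n\geq 1$, a routine unfolding of the inductive definitions of $gr$ and $\lambda_l$---noting that $\lambda_l(y\, A_1 \cdots A_{i-1}) = \epsilon$ because $y$ is a variable---shows that $gr(M') = gr(A_1) \uplus \cdots \uplus gr(A_n)$, so no generalized redex of $M'$ crosses the boundary of any $A_i$; combined with Property~\ref{prop:lloc_properties}(i) this gives $lloc_{M'}(A_i) = lloc_{A_i}(A_i)$. By Property~\ref{prop:qnf_longapply}, $lloc_{M'}(M')$ is then the lloc of the leftmost $A_j$ that still has one, and each such $\llred$-step fires a redex lying entirely within $A_j$, leaving the other $A_i$'s untouched. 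Consequently the $\llred$-reduction starting from $M'$ decomposes into the $\llred$-reductions of $A_1, \ldots, A_n$ performed in order.

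Applying the induction hypothesis to each $A_i$, whose beta-normal form $N_i$ satisfies $|N_i| < |N|$, yields that each such inner reduction terminates in a qnf $A_i'$. Concatenating these finite reductions produces $M \llred^* \lambda x_1 \ldots x_k . y\, A_1'\, \cdots A_n'$, in which every argument of $y$ is in qnf; one more appeal to Property~\ref{prop:qnf_longapply} shows that this final term has $lloc = \bot$, hence is the required quasi normal form. The main obstacle is the locality verification of the third paragraph---checking that firing the $lloc$-redex of $M'$ really amounts to firing an isolated $\llred$-step of one of $y$'s arguments---since this is precisely what lets the induction hypothesis be invoked on the smaller terms $A_i$.
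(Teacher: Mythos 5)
Your proof breaks down at exactly the point you flag as "the main obstacle", and the break is not repairable by a routine verification. First, a quasi-head normal form is \emph{not} of the shape $\lambda x_1\ldots x_k.\,y\,A_1\cdots A_n$: head-linear reduction stops as soon as the \emph{hoc} ceases to be involved in a generalized redex, which happens while prime redexes may still sit on the spine (Theorem~\ref{thm:danosreigner_headlinred} explicitly speaks of a qhnf having $n$ prime redexes; the paper's example $(\lambda x . z ((\lambda w y . y)x)) U$ is a qhnf that is not a hnf). So after the first phase you are left with a term whose spine interleaves abstractions and applications, with the hoc arguments ${\sf Args}(M')$ scattered along it. Second, and more damagingly, even once the hoc arguments are located, $gr(M')$ is \emph{not} $gr(A_1)\uplus\cdots\uplus gr(A_n)$: a variable occurring inside some $A_i$ may be bound by a spine abstraction and hence be involved in a generalized redex of $M'$ whose argument lies \emph{outside} $A_i$. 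Take $M' = (\lambda x.\,z\,x)\,U$: this is a qhnf with hoc $z$ and hoc argument $A_1 = x$; as a standalone term $lloc_{A_1}(A_1)=\bot$, yet $lloc_{M'}(A_1)=x$, and the $\llred$-step imports a copy of $U$ into $A_1$. So the $\llred$-sequence of $M'$ does not decompose into independent $\llred$-sequences of the $A_i$'s, the equality $lloc_{M'}(A_i)=lloc_{A_i}(A_i)$ you derive from the claimed disjointness is false, and the induction hypothesis on $|N_i|<|N|$ cannot be invoked on the $A_i$ as standalone terms.

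This non-locality is precisely why the paper's proof is as heavy as it is. It inducts on the size of $M$ itself (not of its normal form), reduces to the qhnf case as you do, and then runs a structural induction (H2) down the spine combined with a further induction (H3) on the number of pending hoc arguments; the strictly smaller terms fed to the outer induction hypothesis are manufactured by surgery on $M$ — replacing a non-variable spine operator by a fresh variable, or deleting the last hoc argument — together with an argument that these surgeries preserve both the existence of a beta-normal form and the relevant portion of the linear reduction sequence. If you want to keep an induction on the normal form, you would need an analogue of that surgery to cut the cross-boundary generalized redexes before recursing; as written, the decomposition step is unsound.
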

The technical proof is in appendix.
\proofatend
Suppose $M$ has a beta-normal form then by a standard result of lambda calculus its head-reduction terminates, therefore by Theorem~\ref{thm:danosreigner_headlinred} the $\hlred$-reduction of $M$ terminates and yields its quasi-\emph{head}-normal form. Since $\hlred$ coincides with $\llred$, for terms not in \emph{qhnf}, we can without loss of generality assume that $M$ is in \emph{qhnf}.
We prove by induction on the \emph{size} of $M$ that
\begin{quote}
{\bf (H1)} If $M$ has a beta-nf then its linear reduction sequence terminates.
\end{quote}
For the variable case, $\llred$ trivially terminates.
If $M = \lambda x . U$ then $U$ must also admit a beta-nf so we can conclude by induction. Suppose $M = U V$. If $M$ is in \emph{qnf} then we conclude immediately. Otherwise, $M$ must have a \emph{lloc} occurrence that is distinct from the \emph{hoc} occurrence.
\begin{enumerate}[noitemsep,label=(\roman*),leftmargin=1pc]
\item Suppose that the \emph{lloc} of $M$ is in $U$ ($lloc_M(U) \ne \bot$).
    We show by finite \emph{structural} induction on subterms $T$ of $U$ that:
    \begin{quote}
    {\bf (H2)} If $lloc_M(T) \neq \bot$ and $hoc(M)$ is in $T$ then there are at most a finite number of steps in the linear reduction sequence of $M$ that involve a linear substitutions in (reducts of) $T$.
    \end{quote}

    \begin{enumerate}[noitemsep,leftmargin=0pt,label=-]
        \item $T$ is a variable. Since $M$'s \hoc\ is in $T$, it must be $T$ itself. Because $M$ is in quasi-head-normal form, the \hoc~is not involved in any generalized redex and therefore $T$ is not involved in any substitution of the linear reduction sequence of $M$.

        \item $T =\lambda x . T'$ then the \emph{hoc} and \emph{lloc} must be in $T'$ and we can conclude by the induction hypothesis (H2).

        \item $T = T_1 T_2$. By definition, the \emph{hoc} is necessarily in $T_1$.

        \begin{enumerate}[noitemsep,label=(\alph*)]

            \item Suppose that $\emph{lloc}$ is in $T_2$. Then we have $lloc_M(T_1) =\bot$ therefore none of the steps in the linear reduction sequence of $M$ will involve substitutions in $T_1$ (there is no  variable in $T_1$ involved in a  general redex therefore subsequent linear reduction steps can only introduce general redexes on the right of $T_1$).

            \begin{compactitem}
                \item Suppose $T_1$ is a variable $x$ then it must be the \emph{hoc} and we have $T = x~T_2$.

                    We show the result by induction (H3) on the number of pending arguments of the \emph{hoc}.

                    \begin{compactitem}
                    \item Base: $z$ has a single argument which must be $T_2$.

                    By Theorem~\ref{thm:danosreigner_headlinred}, reducing all the prime redexes in $M$ yields its head-normal form
                    with $x$ as the head variable.
                    Let $\gamma$ denotes the sequence of substitutions performed on $M$ to reduce the prime redexes, so that $\gamma(M)$ is the head normal form of $M$.
                    Then $\gamma(T_2)$ will becomes the (unique) argument of $x$ in the head normal form. And because $M$ has a beta-nf, the term $\gamma(T_2)$ must also have a beta-nf.

                    Because $z$ has a single hoc argument, all the subsequent linear substitutions in $M$ must necessarily occur under $T_2$ and its reducts: the \emph{lloc} will never get out of $T_2$.
                    (Indeed, otherwise by the inductive definition of \emph{lloc} there would be a subterm $U' V'$ of $M$ where $U'$ contains $z T_2$ and $\lambda_l(U;) \neq \epsilon$, which by definition of {\sf Args} would imply that $V'$ is another hoc argument, contradicting the assumption that the hoc has a single argument.)

                    Therefore if we consider the term $M'$ obtained by replacing
                    the subterm $T$ by just $T_2$, the $\llred$-reduction sequence of $M$ coincides with that of $M'$.

                    Now $M'$ must necessarily have a beta-normal form. Otherwise it's standard beta-reduction sequence would not terminate, which would imply that $\gamma(T_2)$ does not terminate, which contradicts the fact that $\gamma(T_2)$ is the argument of the head variable in the head normal form!

                    Thus, since $M'$ is by definition smaller than $M$,by the induction hypothesis H1, its linear-reduction sequence terminates. We can then conclude since we have shown that
                    the linear reduction sequences of $M$ and $M'$ coincide.

                    \item Inductive case: suppose the result holds for terms with $n \geq 1$ \hoc\ arguments, and suppose ${\sf Args}(M) = n+1$.

                    Let $U_{n+1} V_{n+1}$ denote the subterm of $M$ such that $V_{n+1}$ is the last \hoc\ argument of $M$. We consider the term $M'$ obtained from $M$ by removing the last \hoc\ argument of $M$ (\ie~replacing $U_{n+1} V_{n+1}$ by $U_{n+1}$).
                    Observe that the last argument of the hoc cannot be possibly be argument of any generalized redex, therefore the linear reduction of $M$ must  coincide with that of $M'$ up until the \hoc\ of the reduct of $M$ lies under the last \hoc\ argument $U_{n+1}$.

                    By induction hypothesis (H3), the $\llred$-reductions sequences of $M'$ involves a finite number of substitutions under $T_2$. Consequently, the same holds in $M$.
                    \end{compactitem}


                \item  Suppose $T_1$ is not a variable. Let $z$ denote a fresh variable in $M$, and $M' = M[z/T_1]$ be the term obtained by replacing the subterm $T_1$ in $M$ by the fresh variable $z$.

                Since $T_1$ is not a variable, $M'$ must be strictly smaller than $M$. And since $M$ has a beta-nf, so must $M'$, thus by the induction hypothesis (H1), the $\llred$ reduction sequence of $M'$ terminates. Consequently, there is only a finite number of steps in the linear reduction of $M'$ that involve substitutions in reducts of $z~T_2$.

                Clearly, the linear reduction steps under $T_2$ in $M'$ coincide with those of $T_2$ in $M$. Hence, we can conclude that at most a finite number of substitutions take place under $T= T_1 T_2$ in the linear reduction sequence of $M$.
            \end{compactitem}

            \item Suppose that $\emph{lloc}$ is in $T_1$. Then by the induction hypothesis (H2) there is a finite number of steps in the linear reduction sequence of $M$ that involve (reducts of) $T_1$. Let us consider the first reduct $R$ of $M$ in the $\llred$ reduction sequence such that the \emph{lloc} is not under a reduct of $T_1$.  Observe that in the linear reduction sequence of $T$, the subterm $T_2$ remains unmodified as long as the \emph{lloc} remains in (the reduct) of $T_1$.

            At $R$, if the \emph{lloc} is not under $T_2$ then none of the further steps in the linear reduction sequence of $R$ will involve $T_2$ (\ie, subsequent substitutions will occur outside of $T$ -- on its right,  more precisely). We can therefore conclude.

            If instead at $R$ the \emph{lloc} lies under the subterm $T_2$: $lloc_R(T2) \ne\bot$, then $T_1$ cannot be a variable, otherwise it would be both the \emph{hoc} and the \emph{lloc} of $M$ which contradict the assumption. We can therefore conclude by applying the same reasoning as case (a) above to the reduct $R$.
        \end{enumerate}

    \end{enumerate}

    \item Suppose that the lloc is in $V$.
    Then by definition of the \emph{lloc} for the application case, we must have $lloc_M(U) =\bot$ and $\lambda_l(U)=\epsilon$.
    Hence $U$ is a term in \emph{qnf} and by Proposition~\ref{prop:qnf_nf} it must
    have a beta-normal form. By Property~\ref{prop:qnf_betanf_empty_lambdalist}
    $U$'s normal form is not an abstraction. Hence since $M$ has a normal form so must $V$ and we have $\betanf{M} = \betanf{U} \betanf{V}$.
    By the induction hypothesis (H1) the $\llred$-reduction of $V$ must terminate.
    We can then conclude by observing that the linear-substitutions performed in the
    linear-reduction sequence of $M$ map one-to-one to those in the linear-reduction sequence of $V$ (\ie, if $V \llred V_2 \llred \ldots V_i$ then $M = U V \llred U V_2 \llred \ldots U V_i$ for some $i\geq 0$)
\end{enumerate}
\endproofatend


\section{Correctness of ULC normalization}
\label{sec:correctness_ulc_normalization}
Correctness of the evaluation procedure relies, like in the STLC case, on a Path Characterization Theorem.
Unlike STLC, however, we show this characterization without appealing to the Game Semantics correspondence, since its untyped counterpart remains a conjecture (Conjecture~\ref{conj:ulc_corresp}). The proof relies instead on a soundness result showing that $\sim$-equivalence classes are preserved by \emph{leftmost-linear reduction} from Section~\ref{sec:leftmostlinearred}. It is based on a bisimulation argument requiring careful analysis of the traversals structure.

\begin{definition}
    \label{def:spinaldescent_pendingarglookup}
A subsequence of a traversal is a
\begin{enumerate*}[nosep,label=(\roman*)]
\item \defname{spinal descent} if it is a path in the computation tree consisting solely of lambda and $@$-nodes, and where each lambda node (except the first occurrence if it is a lambda node) is the $0$th child of the preceding $@$-node.
\item \defname{pending argument lookup} if it consists of an alternation of ghost lambda nodes and ghost variables nodes, starting with an external ghost lambda node and terminated by a structural internal lambda node in $\NodesLmd$.
\item \defname{branching descent} if it consists of an alternation of external structural lambda nodes and external structural variable nodes.
\end{enumerate*}
\end{definition}

\begin{definition}[lloc of a node]
For any tree node $n$ we write $lloc_M(n)$ to denote $lloc_M(N)$ where $N$ is the subterm of $M$ rooted at $n$ and $lloc$ denotes the leftmost linear occurrence from Def.~\ref{def:leftmostlinearreduction}.
\end{definition}

\begin{property}
\label{prop:strand_spinaldescent}
Let $t$ be a traversal of $M$ and $u$ be a subsequence of $t$ of the form
$u = \underline{\lambda \overline{x}} \cdot v \cdot \underline{x}$
for some subsequence $v$ of $t$, external structural variable node $x$ and external structural lambda node $\lambda \overline{x}$.
If $v$ is empty then the node following $u$ in $t$, if it exists, is necessarily a structural child lambda node of $x$ (\ie, it cannot be a ghost node).
\end{property}
\begin{proof}
 If $v$ is empty then the strand ending at $x$
 is just $\underline{\lambda \overline{x}} \cdot \underline{x}$ therefore
 $t$'s arity threshold is precisely $|x|$, in which case rule \rulenamet{IVar} can only visit a structural child of $x$.
\end{proof}

\begin{definition}[Strand types]
    \label{def:strandtypes} We distinguish two types of strands:
    \begin{itemize}[nosep]
        \item Structural argument strands (S) of the form $\underline{\lambda\overline{\eta}} \cdot u \cdot \underline{x}$
            where
            \begin{itemize}[nosep]
            \item $\lambda\overline{\eta}$ is an external structural lambda node verifying $lloc_M(\lambda\overline\eta) = \bot$,
            \item $u$ is a (possibly empty) \emph{spinal descents} of internal nodes,
            \item $x$ is an external structural variable node;
            \end{itemize}

        \item Ghost argument strands ($G$) of the form $\underline{\ghostlmd} \cdot  v \cdot \lambda\overline{y} \cdot u \cdot \underline{x}$
        where
        \begin{itemize}[nosep]
            \item $\ghostlmd$ is an external ghost lambda node,
            \item $v$ is a \emph{pending parameter lookup} (ghost nodes) of length shorter than $u$,
            \item $\lambda\overline{y}$ is an internal lambda node verifying $lloc_M(\lambda\overline{y}) = \bot$,
            \item x is an external structural variable node.
        \end{itemize}
    \end{itemize}
\end{definition}

\begin{proposition}[Quasi-normal forms strand decomposition]
\label{prop:qnf_strand_decomposition}
Any maximal traversal $t_{\max}$ of a \emph{qnf} $M$
consists of a succession of $S$ and $G$ strands determined by the following transition system with states $\{ S, G \}$:
\begin{center}
\begin{tabular}{r|ll l}
State (strand type) & Transition & Current strand & Next strand \\
\hline \hline
Initial state & $\rightarrow S$ & $\epsilon$ & $\underline{\lambda\overline{\eta_1}} \cdot u_1 \cdot \underline{x_1}$ \\
\hline
\multirow{2}{*}{S} & $S \rightarrow G$ & \multirow{2}{*}{$\underline{\lambda\overline{\eta_1}} \cdot u_1 \cdot \underline{x_1} $}
& $\underline{\lambda\overline{\eta_2}} \cdot u_2 \cdot \underline{x_2}$  \\
& $S \rightarrow S$ & & $\underline{\ghostlmd} \cdot  v \cdot \lambda\overline{y_2} \cdot u_2 \cdot \underline{x_2}$ \\
\hline
\multirow{2}{*}{G} & $G \rightarrow S$ & \multirow{2}{*}{$\underline{\ghostlmd} \cdot  v \cdot \lambda\overline{y_1} \cdot u_1 \cdot \underline{x_1}$}
& $\underline{\lambda\overline{\eta_2}} \cdot u_2 \cdot \underline{x_2}$
\\
  & $G \rightarrow G$ & & $\underline{\ghostlmd} \cdot v \cdot \lambda\overline{y_2} \cdot u_2 \cdot \underline{x_2}$
\end{tabular}
\end{center}
where $\lambda\overline{y_2}$ is justified by a $@$-node in $u_1$.
\end{proposition}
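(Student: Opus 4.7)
The plan is to proceed by induction on the prefix length of $t_{\max}$, constructing the strand decomposition one strand at a time while tracking whether the current state is $S$ or $G$. The initial strand begins at the root $\lambda\overline{\eta_1}$ of $\ctree(M)$ via $\rulefont{Root}_\normalizing$; since $M$ is in \emph{qnf}, we have $lloc_M(\lambda\overline{\eta_1}) = lloc_M(M) = \bot$, which is exactly the constraint required for an $S$-strand. Successive applications of the structural rules $\rulefont{App}_\normalizing$, $\rulefont{Lam}^@_\normalizing$, and $\rulefont{Lam}^{\sf var}_\normalizing$ then trace a spinal descent through internal $@$- and 0th-child $\lambda$-nodes, terminating at the first external structural variable occurrence $x_1$. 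A crucial sub-claim is that this descent cannot be interrupted by $\rulefont{Var}_\normalizing$ on an internal variable before reaching $x_1$: such an interruption would witness an internal variable bound by the operator of a spine $@$-node (hence involved in a generalized redex), and by the inductive clauses of $lloc_M$ this would propagate a non-$\bot$ value up to $lloc_M(M)$, contradicting the qnf hypothesis. This closes the initial $S$-strand $\underline{\lambda\overline{\eta_1}} \cdot u_1 \cdot \underline{x_1}$ and the system enters state $S$.

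For the inductive step, assume the traversal up to an external variable $x_1$ has already been decomposed into valid $S$- and $G$-strands. Rule $\rulefont{IVar}_\normalizing$ then emits a lambda $\alpha$ justified by $x_1$ with label $1 \leq i \leq \arth(t_{\leq x_1})$. In the concrete case $i \leq |x_1|$, the node $\alpha$ is the $i$-th external structural child $\lambda\overline{\eta_2}$ of $x_1$. Since $\lambda\overline{\eta_2}$ sits in an operand position of a variable-headed application $x_1 A_1 \cdots A_{|x_1|}$ inside $M$, the clause $lloc_M(UV) = lloc_M(V)$ (applicable because $\lambda_l$ at each variable-headed sub-application is $\epsilon$) propagates $lloc_M(\lambda\overline{\eta_2}) = \bot$ from the qnf of $M$, and the base-case argument yields a new $S$-strand. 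In the eta-expanded case $i > |x_1|$, $\alpha$ is a fresh ghost lambda initiating a \emph{pending argument lookup} of alternating $\rulefont{Lam}^\ghostlmd_\normalizing$ and eta-expanded $\rulefont{Var}_\normalizing$ applications. Proposition~\ref{prop:weaving}(iii) applies (since $i \leq \arth(t_{\leq x_1})$) and guarantees that this chain materialises at a structural internal lambda $\lambda\overline{y_2}$ whose justifier is an $@$-node lying in the preceding strand's spinal descent $u_1$, with the ghost sub-sequence $v$ strictly shorter than $u_1$. Continuing from $\lambda\overline{y_2}$ by the base-case argument then produces the spine $u_2$ and reaches the next external structural variable $x_2$, completing a $G$-strand.

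The main technical obstacles are (a) proving that no spinal descent is interrupted by a non-trivial $\rulefont{Var}_\normalizing$ jump, and (b) maintaining the $lloc_M = \bot$ invariants at each new strand's head lambda (in particular at the internal materialised $\lambda\overline{y_2}$). Both are consequences of the qnf hypothesis propagated through the tree via the inductive clauses of $lloc_M$: a violation would lift a non-$\bot$ $lloc$ back through the enclosing application spine to contradict $lloc_M(M) = \bot$. Precisely identifying $\lambda\overline{y_2}$'s justifier as an $@$-node of $u_1$, together with the length bound on $v$, are immediate structural consequences of Proposition~\ref{prop:weaving}(i) and (iii).
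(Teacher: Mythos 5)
Your overall strategy matches the paper's proof: an induction over the traversal, a case split at each external variable occurrence on whether the next node is a structural child (concrete $\rulefont{IVar}$) or a ghost lambda (eta-expanded case), an appeal to the Weaving proposition to locate the materialized lambda $\lambda\overline{y_2}$ and bound the lookup $v$, and propagation of the $lloc_M = \bot$ invariant to keep each spinal descent uninterrupted. The initial-strand and concrete-case arguments are sound and essentially identical to the paper's (the latter via Property~\ref{prop:qnf_longapply}).

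There is, however, a genuine gap in the $G$-strand case, and it is precisely the hardest step of the proof. You assert that $lloc_M(\lambda\overline{y_2}) = \bot$ because ``a violation would lift a non-$\bot$ $lloc$ back through the enclosing application spine to contradict $lloc_M(M)=\bot$.'' That lifting is not automatic. By the third clause of Definition~\ref{def:leftmostlinearreduction}, $lloc_M(UV) = \bot$ whenever $lloc_M(U) = \bot$ and $\lambda_l(U) \neq \epsilon$, \emph{regardless} of $lloc_M(V)$; so $lloc_M(UV)=\bot$ does not entail $lloc_M(V)=\bot$. The materialized node $\lambda\overline{y_2}$ is the $k_r$th operand of an application node $@_r$ whose operator prefix is an abstraction $\lambda\overline{\xi_r}.A_0$ applied to the operands preceding $\lambda\overline{y_2}$, and the qnf hypothesis alone does not rule out that this prefix still has a non-empty head lambda-list --- in which case $lloc_M$ of the surrounding application would be $\bot$ even if $lloc_M(\lambda\overline{y_2})\neq\bot$, and no contradiction would arise. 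The paper closes this gap with a finite induction along the pending argument lookup showing $|\lambda_l(\lambda\overline{\xi_i})| < k_i$ for each ghost justification label $k_i$ (using Property~\ref{prop:ghost_justifier_arity} to get $k_i > |@_i|$ and $k > |x|$), which forces the lambda-list of the operator prefix at $@_r$ to be empty; only then does the second clause of the $lloc$ definition apply and propagate a putative non-$\bot$ value of $lloc_M(\lambda\overline{y_2})$ up the spine. Without this arithmetic the invariant at the head of each $G$-strand is unjustified and your induction cannot continue.
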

\proofatend
By case analysis on the current state:
\begin{enumerate}
\item[(Init)] The first strand is obtained by applying the rule \rulenamet{Root} followed by repeated applications of \rulenamet{App} and \rulenamet{Lam}, which yields $u$, a path from the root $\lambda\overline{\eta}$ to some variable node $x$ in the tree of $M$, so that $u$ is a spinal descent consisting only of lambda nodes and application nodes.
Suppose that $x$ is an internal variable then by definition it is
involved in some generalized redex and therefore we have $lloc_M(\underline{\lambda\eta}) = x$, but since $M$ is in \emph{qnf} we have $lloc_M(\lambda \overline{\eta}) = lloc_M(M) = \bot$ which gives a contradiction.
Hence $x$ is an external variable which ends the first strand of type $S$.

\item[(Structural)] The previous strand is of form $\underline{\lambda\overline{\eta_1}} \cdot u_1 \cdot \underline{x_1} $.

If the node following $x_1$ is a structural node $\lambda\overline{\eta_2}$ then by rule \rulenamet{IVar}, it's a child lambda node of $x_1$.
Because $u_1$ is a spinal descent, the subterm rooted at $\lambda \overline\eta_1$ is of the form $\lambda \overline\eta_1 . x_1 A_1 \ldots A_q$ for some $q\geq1$, therefore we have $lloc_M(\lambda \overline\eta_1 . x_1 A_1 \ldots A_q) = \bot$. Since $x_1$ is external, by Property~\ref{prop:qnf_longapply} we have $lloc_M(A_j)=\bot$ for all $1\leq j\leq q$, so in particular $lloc_M(\lambda\overline{\eta_2}) = \bot$. The same logic used in the \emph{Initialization} case lets us conclude that $u_2$ is a spinal descent to external node $x_2$. We have shown that the next strand is of type $S$.
\\

Now suppose instead that the node following $x_1$ is a ghost node then by Property~\ref{prop:strand_spinaldescent}, $u_1$ is necessarily non-empty: we have $u_1 =
@_1 \cdot\lambda\overline{\xi_1} \cdots @_q\cdot \lambda\overline{\xi_q}$ for some $q\geq 1$. The subterm at $@_r$,  represented on Figure~\ref{fig:strand_decomposition_proof_induction}, is of the form
$$(\lambda\overline{\xi_r}. A_0) A_1 \ldots A_{k_{r}-1}\ (\lambda\overline{y_2}. \ldots)A_{k_{r}+1}\ \ldots A_{|@_r|} .$$
By Proposition~\ref{prop:weaving}(iii), $x_1$ is followed by a \emph{pending parameter lookup} $v$ necessarily shorter than $u_1$ and terminated by a structural lambda node $\lambda\overline{y_2}$ justified by some application node $@_r$ in $u_1$ for some $1\leq r \leq q$:
\begin{align*}
 t &= \cdots \Pstr[10pt]{
(l){\underline{\lambda\overline{\eta}}} \cdot
{@_1} \cdot \lambda\overline{\xi_1} \cdot
\cdots
(atr){@_r} \lambda\overline{\xi_r} \cdot
\cdots
{@_q} \cdot\lambda\overline{\xi_q} \cdot
(x){\underline{x}}\cdot
(gl-x,25:k){\underline\ghostlmd}
\cdot v
\cdot
(ly-atr,25:k_r){\lambda\overline{y_2}}
 } \\
v &= \ghostvar_{k_q} \cdot \ghostlmd_{k_q} \cdot
\ghostvar_{k_{q-1}} \cdot \ghostlmd_{k_{q-1}} \cdot
\cdots
\ghostvar_{k_{r-1}} \cdot \ghostlmd_{k_{r-1}} \cdot
\ghostvar_{k_{r}}
\end{align*}
where, for all $i$ ranging from $r$ to $q$, ghost variable
$\ghostvar_{k_i}$ and ghost lambda $\ghostlmd_{k_i}$ point respectively to $\lambda\overline{\xi_i}$ and $@_i$ with label $k_i\geq1$ defined by:
\begin{align*}
  k_i &= k - |x| + |\lambda\overline{\xi}_q| + \sum_{i\leq j < q} (|\lambda\overline{\xi}_{j}| - |@_{j+1}|)
\\
   &= k - |x| +
  |\lambda\overline{\xi}_{i}| +
  \sum_{i< j\leq q} (|\lambda\overline{\xi}_j| - |@_j|)
\end{align*}

Since all the nodes in $v$ are ghost nodes, by Property~\ref{prop:ghost_justifier_arity} their justification label is greater than the justifier's arity; the opposite holds for $\lambda\overline{y_2}$ which is a structural node. Therefore:
\begin{align}
k > |x| \label{eqn:k_greater_than_x} \\
k_i > |@_i| & \quad\mbox{for all $r+1\leq i \leq q$ } \label{eqn:qnfdecomp_kj}\\
k_r \leq |@_r|
\end{align}

To prove that $lloc(\lambda\overline{y_2}) =\bot$ we first show that for all $i$ ranging from $r+1$ to $q$ we have
 $k_i>|\lambda_l(\lambda\overline{\xi_i})|$. We prove the result by a finite induction proceeding bottom-up from the lower tree node
 $\lambda\overline{\xi_q}$ up to $\lambda\overline{\xi_r}$.

\emph{Base case:} We have $i=q$. The subterm
rooted at $\lambda\overline{\xi_q}$ is in head normal form
$\lambda\overline{\xi_q}. x \cdots$ therefore its head lambda-list consists precisely of the lambda abstractions $\lambda\overline{\xi_q}$, and thus
$|\lambda_s(\lambda\overline{\xi_q})| = |\lambda\overline{\xi_q}|$.
We have $k_q = k -|x| + |\lambda\overline{\xi_q}|$ and by Equation~\ref{eqn:k_greater_than_x} we have $k>|x|$, hence $k_q >|\lambda\overline{\xi_q}|$.

\emph{Induction case:} Let $r+1\leq i \leq q$. We suppose that the result holds for $i$ we show that it must hold for $i-1$. By definition, the head lambda-list at $\lambda\overline{\xi_i}$ is precisely given by the variables in the abstraction concatenated with the head lambda-list at $@_{i+1}$, thus $|\lambda_l(\lambda\overline{\xi_i})| =
|\lambda\overline{\xi_i}| + |\lambda_l(@_{i+1})|$. The lambda-list at $@_i$, is by definition, obtained by popping $|@_i|$ lambdas from the lambda-list at $\lambda\overline{\xi_i}$; or is the empty list if the arity is greater than the number of pending lambdas. In terms of lengths this means:
\begin{align*}
    |\lambda_l(@_i)| &= \max(0, |\lambda_l(\lambda\overline{\xi_i})| - |@_i|) \\
     &= \max\left( 0, |\lambda_l(@_{i+1})| + |\lambda\overline{\xi_i}| - |@_i| \right) & \mbox{for $r\leq i \leq q-1$.}
\end{align*}

Furthermore by definition of $k_{i-1}$ we have:
\begin{equation}
k_{i-1} = k_i + |\lambda\overline{\xi_{i-1}}| - |@_i| \label{eqn:ki_minus_one}
\end{equation}
Hence,
\begin{align*}
    |\lambda_l(\lambda\overline{\xi_{i-1}})| - k_{i-1}
    &= |\lambda\overline{\xi_{i-1}}| +|\lambda_l(@_i)| - k_{i-1} &\mbox{(Def~of $\lambda_l$)}\\
    &= |\lambda\overline{\xi_{i-1}}| + \max\left( 0, |\lambda_l(\lambda\overline{\xi_i})| - |@_i| \right) - k_{i-1} &\mbox{(Def~of $\lambda_l$)}\\
    &= \max\left( |\lambda\overline{\xi_{i-1}}| - k_{i-1},    |\lambda\overline{\xi_{i-1}}| + |\lambda_l(\lambda\overline{\xi_i})| - |@_i| - k_{i-1} \right) \\
    &= \max\left(  |@_i|- k_i, |\lambda_l(\lambda\overline{\xi_i})| - k_i \right) &\mbox{(Eqn~\ref{eqn:ki_minus_one})}
\end{align*}
By the induction hypothesis we have $|\lambda_l(\lambda\overline{\xi_i})| < k_i$, and by Equation~\ref{eqn:qnfdecomp_kj} we have $ |@_i|< k_i$, therefore the maximum of the two quantities in the last equation is negative which shows
the desired result $|\lambda_l(\lambda\overline{\xi_{i-1}})| < k_{i-1}$.

\begin{figure}[htbp]
\centering
\begin{tikzpicture}[baseline=(root.base),level distance=6.5ex,inner ysep=0.5mm,sibling distance=15mm]
    \node (root){}
    child[dashed] {node{$\lambda\overline\eta$}
        child[solid] {node{$@_1$}
            child{node{$\ldots$}
                child{node{$@_{r}$}
                    child{node{$\lambda\overline{\xi_r}$}
                        child[dashed]{node{$@_{q-1}$}
                            child[solid]{node{$\lambda\overline{\xi_{q-1}}$}
                                child[solid]{node{$@_q$}
                                    child{node{$\lambda\overline\xi_q$}
                                        child{node {$x_1$}
                                            child[dashed,level distance=5ex]{node{}}}
                                        edge from parent node[above left]{$0$}
                                    }
                                    child{node{\ldots}}
                                    child[dashed]{node{}}
                                }
                                edge from parent node[above left]{$0$}
                            }
                            child[solid]{node{\ldots}}
                            child[dashed]{node{}}
                        }
                        edge from parent node[above left]{$0$}
                    }
                    child[dashed]{node{\ldots}}
                    child{node{$\lambda\overline{y_2}$}
                        child[dashed,level distance=5ex]{node{}}
                        edge from parent node[left]{$k_r$}
                    }
                    child[dashed]{node{\ldots}}
                    child{node{} edge from parent node[right]{$|@_r|$}}
                }
                edge from parent node[above left]{$0$}
            }
            child{node{\ldots}}
            child[dashed]{node{}}
        }
    };
    \end{tikzpicture}
\caption{Relevant sub-tree for a traversal strand of type $G$ (assuming $q>r+1$).}
\label{fig:strand_decomposition_proof_induction}
\end{figure}

We now show that $lloc_M(\lambda\overline{y_2})=\bot$.
By assumption we have that $lloc_M(\lambda\overline{\eta}) =\bot$ so since $@_r$ occurs in the spine of $M$, by definition of $lloc_M$ we must have $lloc_M(@_r) =\bot$, and thus also $lloc_M(\lambda\overline{\xi_r}) = \bot$. We have just shown that $|\lambda_l(\lambda\overline{\xi_r}| < k_r$, consequently, the sub-term $(\lambda\overline{\xi_r} . A_0) A_1 \ldots A_{k_r-1}$ has more operands than pending lambdas in the operator's lambda list. By definition of lambda list this implies that $\lambda_l((\lambda\overline{\xi_r} . A_0) A_1 \ldots A_{k_r-1})$ is empty. Hence, we necessarily have $lloc(\lambda\overline{y_2}) = \bot$, otherwise by definition of $lloc$ we would have
 $lloc((\lambda\overline{\xi_r} . A_0) A_1 \ldots A_{k_r-1} (\lambda\overline{y_2}. \ldots)) = lloc(\lambda\overline{y_2}) \ne \bot$, which subsequently implies $lloc(\lambda\overline{\eta_1})\ne\bot$, contradicting the assumption.
The sequence $u_2$ is shown to be a spinal descent by the same argument used in the \emph{structural argument} case above, using the fact that $lloc_M(\lambda\overline{y_2})=\bot$. We have thus shown that the next strand is of type $G$.

\item[(Ghost)] If the node following $x_1$ is a structural node then, like we have shown in the previous case, the next strand is necessarily a spinal descent of type $S$.

If the node following $x_1$ is a ghost.
Once again, the weaving property shows that the ghost node after $x_1$ is necessarily followed by a \emph{pending parameter lookup} terminated by structural lambda node $\lambda\overline{y_2}$ pointing to some node in $v \cdot \lambda\overline{y_1} \cdot u_1$.
But since $v$ consists solely of ghost node, the justifier is necessarily in $u_1$. We can thus conclude with the same argument as in the previous case that the next strand is necessarily of type
$G$.
\end{enumerate}
\endproofatend

Figure~\ref{fig:qnf_strand_decomposition_statemachine} represents the more fine-grained state machine underpinning the stand decomposition result. It has three states representing each possible type of external node: lambda, ghost lambda or input variable. The bottom two states correspond to the start of a new strand in the traversal. The solid-arrows represent consecutive applications of deterministic traversal rules while dashed-arrows represent the two possible non-deterministic choices in rule \rulenamet{IVar}: one for structural child lambda nodes (left) and one for ghost children (right).
\begin{figure}[htbp]
\centering
\begin{tikzpicture}[->,>=stealth',shorten >=1pt,auto,node distance=3cm,
    semithick]
\tikzstyle{every state}=[fill=red,draw=none,text=white]

\node[initial,state]
    (L) {$\lambda$};
\node[state,fill=white,draw=black,text=black]
    (V) [above right of=L, xshift=1cm] {$IVar$};
\node[state,node distance=6cm]
    (G) [right of=L] {$\ghostlmd$};

\path (L) edge [bend right]          node[right,text width=3cm]{Descent to linear head-occurrence} (V)
      (V) edge [bend right, dashed]  node[left,text width=3cm] {Picked structural argument ($k\leq|x|$)} (L)
          edge [bend left, dashed]   node[text width=2cm] {Picked ghost argument ($k>|x|$)} (G)
      (G) edge                       node {Look-up pending argument} (L)
;
\end{tikzpicture}
\caption{State machine underpinning the \emph{qnf} strand decomposition.}
\label{fig:qnf_strand_decomposition_statemachine}
\end{figure}
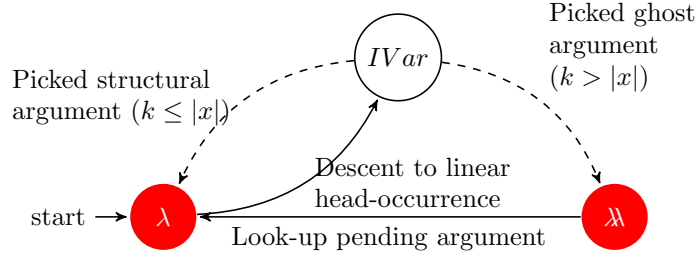

\begin{proposition}
\label{prop:qnf_traversals_are_finite}
Let $t \in \travsetnorm(M)$ for some \emph{quasi normal} term $M$. We have:
\begin{enumerate}[label=(\alph*), nosep]
\item $lloc_M(\alpha) = \bot$ for all external lambda node $\alpha \in \NodesLmd$ occurring in $t$.
\item $t$ does not contain any internal variable node (\ie, all the internal nodes are $@$ nodes);
\item $t$ is finite.
\end{enumerate}
\end{proposition}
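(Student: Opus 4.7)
The plan is to organize the argument around the strand decomposition established in Proposition~\ref{prop:qnf_strand_decomposition}. Since any $t \in \travsetnorm(M)$ is a prefix of some maximal traversal $t_{\max}$, and $t_{\max}$ decomposes uniquely as a succession of S-strands and G-strands, properties of $t_{\max}$ restrict immediately to $t$ as a prefix.

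For part (a), every external structural lambda in $t_{\max}$ must appear as the leading node $\lambda\overline{\eta_j}$ of some S-strand: the only other external lambdas in the decomposition are the ghost nodes $\underline{\ghostlmd}$ heading G-strands, which are not in $\NodesLmd$. The construction of S-strands in Proposition~\ref{prop:qnf_strand_decomposition} explicitly asserts $lloc_M(\lambda\overline{\eta_j}) = \bot$ for every such head, which gives (a). Part (b) is obtained from the same enumeration: an S-strand contributes only $@$-nodes and internal structural lambda nodes from a spinal descent, while a G-strand contributes ghost lambdas and ghost variables from a pending parameter lookup together with a single internal structural lambda $\lambda\overline{y}$ before the next spinal descent. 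None of these are structural variable nodes in $\NodesVar$, so (b) is immediate.

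For part (c), each individual strand is finite: spinal descents are paths in the finite computation tree of $M$, and by Proposition~\ref{prop:weaving} any pending parameter lookup has length strictly shorter than the preceding spinal descent. It therefore suffices to bound the number of strands. I would exploit (a)---which shows that every S-strand is rooted at a subterm of $M$ that is itself in \emph{qnf} by hereditary preservation of the $lloc = \bot$ condition---and then proceed by structural induction on $M$, dispatching on whether $M$ is a variable, abstraction, or application. The base case ($M$ a variable) yields immediately $\arth(t) = 0$, so no IVar extension is possible. The inductive step reduces the continuation of the traversal after each strand to a normalizing traversal of a strictly smaller qnf subterm of $M$, following the same strategy as the completeness proof of Theorem~\ref{thm:completeness_leftmostlinearred}.

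The principal obstacle lies in (c): making rigorous the claim that each strand transition genuinely decreases the remaining qnf structure still to be explored. This requires careful case analysis on the four possible strand transitions (S$\to$S, S$\to$G, G$\to$S, G$\to$G), tracking how $\arth$ evolves across them, and in particular showing that a G-strand's pending parameter lookup consumes exactly one pending argument of the preceding qnf subterm before the traversal re-enters a strictly smaller qnf subterm at the internal lambda $\lambda\overline{y}$.
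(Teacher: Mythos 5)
Your parts (a) and (b) are fine and match the paper: both are read off directly from the strand decomposition of Proposition~\ref{prop:qnf_strand_decomposition}, exactly as you do.

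Part (c) has a genuine gap, and you have flagged it yourself: the whole argument rests on "each strand transition genuinely decreases the remaining qnf structure still to be explored," which you do not establish. The structural-induction-on-$M$ route you sketch is also harder than it looks: the continuation of a traversal of $M$ after it enters a subterm $N$ is \emph{not} a normalizing traversal of $N$ viewed as a standalone term, because justification pointers (for free variables of $N$) and the arity-threshold computation both reach back across the subterm boundary into the ambient traversal. So the reduction to "a strictly smaller qnf subterm" would itself need a non-trivial lemma, and the bookkeeping of $\arth$ across the four transitions that you defer is precisely where the difficulty lives. The paper avoids all of this with a much more elementary observation that you are missing: in every transition of the strand decomposition ($\rightarrow S$, $S\rightarrow S$, $S\rightarrow G$, $G\rightarrow S$, $G\rightarrow G$), the terminal external variable $x_2$ of the new strand is a \emph{descendant in the computation tree of $M$} of the head node of the previous strand (for $S\rightarrow G$ this uses that $\lambda\overline{y_2}$ points into the spinal descent $u_1$, and for the $G$-transitions that $u_1, u_2$ are tree paths). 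Hence an infinite traversal would yield a strictly descending infinite chain of nodes, i.e.\ an infinite path in the finite tree $\ctree(M)$ --- a contradiction. No induction on $M$, no tracking of $\arth$, and no appeal to the machinery of Theorem~\ref{thm:completeness_leftmostlinearred} is needed; the depth argument closes (c) directly from the decomposition you already invoke for (a) and (b).
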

\begin{proof}
(a) and (b) are direct consequences of Proposition~\ref{prop:qnf_strand_decomposition}.
(c) Let's consider the cases of the strand decomposition of Proposition~\ref{prop:qnf_strand_decomposition}
observe that for the two transitions $\rightarrow S$ and $S \rightarrow S$, because $u_1$ and $u_2$ are spinal descents, the last node in the next strand ($x_1$ or $x_2$) must be a descendant (in the computation tree of $M$) of the first node ($\lambda\overline{\eta_1}$) from the previous strand.
For transition $S\rightarrow G$, since both $u_1$ and $u_2$ are spinal descents, there are also paths in the tree, therefore since $\lambda\overline{y_2}$ points to $u_1$, $x_2$ must also be a descendant of $\lambda\overline{\eta_1}$. For transitions
$G \rightarrow S$ and $G \rightarrow G$, since $u_1$ and $u_2$ are spinal descents, $x_2$ is a descendant of $\lambda\overline{y_2}$, itself a descendant of $\lambda\overline{y_1}$.

Suppose $t$ is infinite then from the above strand decomposition we can construct an infinite path in the tree of $M$ which contradicts the fact that $M$ is a finitary term.
\end{proof}

\begin{lemma}[Generalized redex argument lookup]
\label{lemma:genredex_lookup}
Let $x$ be a variable occurrence involved in a generalized redex in $gr(M)$.
\begin{enumerate}[label=(\roman*), nosep]
    \item The path from the parent node of the redex argument down to $x$'s binder is of the form:
$ @_r \cdot \lambda\overline{x_r} \cdots
@_2 \cdot \lambda\overline{x_2} \cdot @_1 \cdot \lambda\overline{x}$
for $r\geq0$, where $@_r$ denotes the parent node of the redex argument, and $\lambda\overline{x}$ denotes $x$'s binder, such that $\lambda x$ is the $i$th lambda in the bulk lambdas $\lambda\overline{x}$ for $i\geq0$.

\item The argument of $x$ is precisely given by the $i_k$th child of node $@_k$ where $k$ is the smallest index verifying $i_k<|@_k|$ where: $i_1 = i$ and $i_{j+1} = i_j - |@_j| + |\lambda\overline{x_{j+1}}|$ for $j\geq 0$.
\end{enumerate}
\end{lemma}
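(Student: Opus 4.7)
The plan is to prove (i) and (ii) simultaneously by induction on the number of ancestors climbed above $\lambda\overline{x}$. The structural fact driving the argument is the following: because the computation tree merges consecutive abstractions into a single bulk $\lambda$-node and consecutive applications into a single $@$-node, a bulk lambda $\lambda\overline{x}$ can contribute its lambdas to the head list $\lambda_l$ of an enclosing operator only if every ancestor on the relevant upward path alternates between an $@$-node (with the subtree on the path sitting as its $0$th child) and a bulk $\lambda$-node. Note moreover that the $0$th child of an $@$-node is, by the very definition of the computation tree, always a $\lambda$-node.

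For the base case $r=1$, I would first argue that $\lambda\overline{x}$'s parent must be an $@$-node $@_1$ having $\lambda\overline{x}$ in $0$th-child position: otherwise $\lambda\overline{x}$'s lambdas never reach the $\lambda_l$ of any operator, so $x$ could not lie in any generalized redex, contradicting the hypothesis. Unrolling the inductive clauses of $gr$ and $\lambda_l$ on the flattened application represented by $@_1$ then shows that when $i\leq |@_1|$ the pair $(\lambda x, C_i)$, with $C_i$ the $i$th child of $@_1$, belongs to $gr(M)$, which matches the claim with $k=1$ and $i_1=i$.

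For the inductive step, assume $i_j > |@_j|$ for all $j<k$. Iterating the application clause $\lambda_l(UV) = \text{tail}(\lambda_l(U))$ $|@_j|$ times on the flattened application at $@_j$ places $\lambda x$ in $\lambda_l$ of the subterm rooted at $@_j$ at position $i_j - |@_j| \geq 1$. For $x$ to remain eligible for a generalized redex higher up, $@_j$'s parent must be a bulk lambda $\lambda\overline{x_{j+1}}$; any other configuration (root with no further ancestor, operand-position of a higher $@$-node, or child of a variable) severs $\lambda x$ from every enclosing operator's $\lambda_l$ and contradicts the standing hypothesis. Prepending $\lambda\overline{x_{j+1}}$ then shifts $\lambda x$'s position to $i_{j+1} = i_j - |@_j| + |\lambda\overline{x_{j+1}}|$, yielding the recurrence of (ii) and forcing the alternating path shape of (i). The existence of a terminating $k$ with $i_k\leq|@_k|$ follows from the hypothesis that $x$ actually lies in a generalized redex, since otherwise the ascent would either exhaust ancestors or exit operator-position without ever pairing $\lambda x$ with an operand.

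The main obstacle I expect is the case analysis at each step ruling out the non-operator-position configurations of ancestors: one has to argue carefully from the computation tree's merging conventions and the inductive clauses of $\lambda_l$ and $gr$ that any deviation from the alternating $@_j,\lambda\overline{x_{j+1}}$ pattern immediately removes $\lambda x$ from every enclosing head lambda list and hence from $gr(M)$.
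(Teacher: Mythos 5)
Your proposal is correct and follows essentially the same route as the paper: an upward induction from $x$'s binder tracking the position of $\lambda x$ in the head lambda list $\lambda_l$, where each $@$-node pops its arity, each bulk lambda pushes its arity (giving the recurrence $i_{j+1} = i_j - |@_j| + |\lambda\overline{x_{j+1}}|$), and the alternating operator-position path is forced because any other configuration would drop $\lambda x$ from every enclosing $\lambda_l$ and contradict membership in $gr(M)$. The only cosmetic difference is that the paper establishes the spinal-descent shape of the path (part (i)) as a separate preliminary observation before running the induction for (ii), whereas you fold both into a single induction.
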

\proofatend
(i) By definition of generalized redexes and lambda lists (Def.~\ref{dfn:generalized_redex}), $@_r$ necessarily occurs in the path from $\lambda\overline{x}$ to the root. Further the path from $@_r$ to $\lambda\overline{x}$ is necessarily a spinal descent (\ie, it contains only lambda and application nodes, and no variable node). Indeed, if it contained a variable node $z$ then the lambda-list at the subterm rooted at $z$ would be empty, in which case no argument could possibly form a generalized redex with $\lambda x$.

(ii) We show by induction on $k$ that $x$ is the variable abstracted by the $i_k$th lambda in $\lambda_l(\lambda\overline{x_k})$, which implies the result since by definition $\lambda x$ forms a generalized redex with the $i_k$th argument of $@_k$.
Base case $k=1$. By assumption, $\lambda x$ is the $i$th lambda in the bulk lambda $\lambda\overline{x}$. Let $k\geq 1$. By the induction hypothesis $\lambda x$ is the $i_{k-1}$ lambda in $\lambda_l(\lambda\overline{x_{k-1}})$.
substitution involved iBy definition of $k$ we have $i_{k-1}\geq|@_{k-1}|$, therefore by definition of $\lambda_l$, the long-application $@_{k-1}$ has the effect of popping exactly $|@_{k-1}|$ arguments from the lambda-list, and the following abstraction $\lambda\overline{x_k}$ then adds $|\lambda\overline{x_k}|$ more lambdas in front of the lambda list. Hence $\lambda x$ gets moved to position $i_{k-1} - |@_{k-1}| + |\lambda\overline{x_k}| = i_k$ in the lambda list at $\lambda\overline{x_k}$.
\endproofatend

\paragraph{Traversal bisimulation}

We define $\travset^\dagger(M)$ as the subset of traversals of $\travsetnorm(M)$ ending with either a structural node or an external ghost variable (\ie, not ending with an internal ghost variable nor a ghost lambda).
Observe that every traversal in $\travsetnorm$ has a traversal extension in $\travset^\dagger$.
Indeed, if $t \in \travsetnorm$ ends with an internal ghost variable or a ghost lambda then repeatedly applying
rules \rulenamet{Var} and \rulenamet{Lmd} yields a traversal ending with an external node.

\begin{definition}[Traversal bisimulation]
\label{def:bisimilar_terms}
Let $M$ and $N$ be two terms with variables in $\mathcal{V}$. Let $\phi\colon \justseqset(N) \rightarrow\justseqset(M)$ be a function from the justified sequences of $N$ to justified sequences of $M$. We define the state transition system $(X, R_\phi, \rightarrow_X)$ as:
\begin{itemize}[nosep]
    \item \emph{States}: $X$ is the disjoint union $\travset^\dagger(M) + \travset^\dagger(N)$.

    \item \emph{Binary relation $R_\phi\subseteq X \times X$} defined for any traversals $t, u \in X$ as:
    $ t~R_\phi~u$ just if $t\in\travset^\dagger(M)$, $u\in\travset^\dagger(N)$ and $\phi(u) = t$.

    \item \emph{Transitions} For all $t, t' \in \travset^\dagger(M)$ we write $t \rightarrow_M t'$ just if $t'$ is a minimal extension of $t$ that remains in $\travset^\dagger(M)$ (\ie, $t'$ extends $t$ using rules of Table~\ref{tab:normalizing_trav_rules} and any occurrence strictly between the last occurrence in $t$ and $t'$ is an either internal ghost variable or a ghost lambda node.) We define $\rightarrow_N$  identically and write $t \rightarrow_X t'$ if either $t \rightarrow_M t'$ or $t \rightarrow_N t'$ holds for $t, t' \in X$.
\end{itemize}

$M$ and $N$ are \defname{$\phi$-bisimilar} if $R_\phi$ defines a bisimulation over $X$ with respect to $\rightarrow_X$, that is:
\begin{itemize}[nosep]
    \item (B1) for all $t_1, t_2 \in \travset^\dagger(M)$, $u_1 \in \travset^\dagger(N)$ with $t_1~R_\phi~u_1$ and $t_1 \rightarrow_M t_2$ there is $u_2 \in \travset^\dagger(N)$ such that $u_1 \rightarrow_N u_2$ and $t_2~R_\phi~u_2$.
    \item (B2) for all $t_1 \in \travset^\dagger(M)$, $u_1, u_2 \in \travset^\dagger(N)$ with $t_1~R_\phi~u_1$ and $u_1 \rightarrow_N u_2$ there is $t_2 \in \travset^\dagger(M)$ such that $t_1 \rightarrow_M t_2$ and $t_2~R_\phi~u_2$.
\end{itemize}
\end{definition}

\begin{lemma}
\label{lem:bisimulation_isomorphism}
If $M$ and $N$ are $\phi$-bisimilar;
the restriction of $\phi$ to $\coresymbol(\travset^\dagger(N))$ is injective;
and there is a mapping $\rho: \VarSet^*(N) \rightarrow \VarSet^*(M)$ such that for any variable names $\alpha\in\VarSet^*(N)$,  $\phi$ and $\coresymbol_\alpha$ commute modulo $\rho$:
(\ie, for all $u$, $ \coresymbol_{\rho(\alpha)}(\phi(u)) = \phi(\coresymbol_{\alpha}(u))$), then $\coresymbol(\travset(N)) \cong \coresymbol(\travset(M))$.
\end{lemma}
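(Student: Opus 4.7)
The plan is to exhibit a structure-preserving bijection $\coresymbol(\travset(N)) \to \coresymbol(\travset(M))$ induced by $\phi$, combining the three hypotheses: the bisimulation yields surjectivity of $\phi$ on $\travset^\dagger$; the commutativity with $\coresymbol_\alpha$ transfers this surjectivity to cores; and the injectivity of $\phi$ on cores is directly assumed.

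First, I would reduce the goal from $\travset$ to $\travset^\dagger$ by showing $\coresymbol(\travset(N)) = \coresymbol(\travset^\dagger(N))$, and symmetrically for $M$. For the non-trivial inclusion, take any $t \in \travset(N) \setminus \travset^\dagger(N)$; by definition $t$ terminates with an internal ghost variable or a ghost lambda. By case analysis on Definition~\ref{def:coreprojection}, both situations leave $\coresymbol$ unchanged: an internal ghost variable has arity $0$ so $pop_0$ is the identity, and an internal ghost lambda falls into the case $n\in\ghostlmd\cap\IntNodes$ with no modification of sequence or pending stack. Iteratively truncating such trailing occurrences yields a prefix in $\travset^\dagger$ with identical core.

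Second, I would use the bisimulation, anchored at the matched pair of singleton root traversals (for which $\phi$ must agree on the root in any reasonable setup, possibly requiring explicit verification when the lemma is applied), to show that the restriction of $\phi$ to $\travset^\dagger(N)$ surjects onto $\travset^\dagger(M)$. Every $\travset^\dagger$ traversal is reachable from the root through finitely many $\rightarrow$-steps, so the surjection follows by induction on the step count, using (B1) to lift each $\rightarrow_M$ step and (B2) for the reverse direction. Specialising the commutativity hypothesis at $\alpha = \epsilon$ then gives $\coresymbol_{\rho(\epsilon)}(\phi(u)) = \phi(\coresymbol(u))$, showing that $\phi$ carries cores of $\travset^\dagger(N)$ onto cores of $\travset^\dagger(M)$ up to the constant relabeling $\rho(\epsilon)$ of the outermost external lambda. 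Since $\cong$ compares sequences only by their structure (types and justification pointers) and ignores labels, this relabeling is absorbed by the equivalence. Surjectivity on cores then follows from the surjectivity on $\travset^\dagger$; injectivity on cores is assumed outright; and Step~1 bridges $\travset^\dagger$ back to $\travset$ to yield the claimed isomorphism.

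The main obstacle will be the careful bookkeeping around how the pending-lambdas stack propagates through $\coresymbol$ and $\phi$ in the commutativity argument, together with verifying that the relabeling $\rho(\epsilon)$ affects only the labels (not the types nor the justification pointers) of the resulting sequence, so that the induced map is genuinely a structure-preserving bijection rather than merely a content-preserving one; the initial-state matching in Step~2 is the other subtle point, as the bisimulation definition itself does not explicitly single out an initial pair.
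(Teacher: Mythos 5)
Your proposal is correct and follows essentially the same route as the paper's proof: use the reachability of every $\travset^\dagger$ traversal from the root via $\rightarrow$-steps together with (B1)/(B2) to show $\phi$ restricts to a well-defined surjection $\travset^\dagger(N)\to\travset^\dagger(M)$, transfer this to cores via the commutativity hypothesis, invoke the assumed injectivity, and finally bridge back from $\travset^\dagger$ to $\travset$ using the fact that the trailing internal ghost occurrences do not affect $\coresymbol$. The only differences are presentational (you front-load the $\travset$-to-$\travset^\dagger$ reduction and make explicit the initial-pair anchoring and the $\rho(\epsilon)$ relabeling, both of which the paper leaves implicit).
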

\begin{proof}
Consider $\phi$ as a function from $\coresymbol(\travset^\dagger(N))$ to $\coresymbol(\travset^\dagger(M))$.
It is \emph{well-defined}: $\travset^\dagger(N)$ is by definition the transitive closure of the empty traversal by $\rightarrow_N$, so for any traversal $u$ of $\travset^\dagger(N)$, repeatedly applying (B2) for each step of its $\rightarrow_N$ derivation shows that $t = \phi(u) \in \travset^\dagger(M)$. By commutativity of $\coresymbol$ and $\phi$ this yields $\coresymbol(t)=\phi(\coresymbol(u)) \in \coresymbol(\travset^\dagger(M))$.
Similarly, it is \emph{surjective} by (B1) because $\travset^\dagger(M)$ is the transitive closure of the empty traversal by $\rightarrow_M$. Finally it is \emph{injective} by assumption.
Hence we have $\coresymbol(\travset^\dagger(N)) \cong \coresymbol(\travset^\dagger(M))$, and because $\coresymbol$ eliminates internal nodes from traversals, we also have $\coresymbol(\travset(N)) \cong \coresymbol(\travset(M))$.
\end{proof}

The sets of normalizing traversals before and after linear reduction are identical, up to some mapping between nodes of the term and its reduct.
\begin{proposition}[Traversals are sound for linear reduction]
\label{prop:ulctrav_impl_linear_reduction}
Let $M$ be an untyped term.
If $M \llred N$ then $\coresymbol(\travsetnorm(M)) \cong \coresymbol(\travsetnorm(N))$.
\end{proposition}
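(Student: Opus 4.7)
The plan is to apply Lemma~\ref{lem:bisimulation_isomorphism} by exhibiting an explicit bisimulation map $\phi : \justseqset(N) \to \justseqset(M)$ between the traversal spaces. Let $(\lambda x, A)$ be the \emph{lloc} redex of $M$ and let $y$ denote the specific variable occurrence of $x$ substituted by $A$ to produce $N$. Since $y$ is the leftmost linear occurrence, $y$ sits in an operand position within $M$ and is hereditarily justified by an $@$-node of the computation tree, so any traversal of $M$ reaching $y$ must continue via the copy-cat rule $\rulename{Var}$, possibly with ghost lookups as described in Proposition~\ref{prop:weaving}.

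I would define $\phi$ occurrence-wise. Nodes of $N$ outside the substituted copy of $A$ are sent to the corresponding nodes of $M$ via the obvious tree-embedding. For the substituted copy of $A$ in $N$, occurrences are sent to the original $A$ in $M$, but $\phi$ additionally \emph{inserts} in front of the first occurrence of this copy a canonical \emph{detour}: the sequence of internal nodes that a traversal of $M$ must produce to navigate from $y$ to the root of $A$. Lemma~\ref{lemma:genredex_lookup} gives the static path structure $@_r \cdot \lambda\overline{x_r} \cdots @_1 \cdot \lambda\overline{x}$ from the parent of $A$ down to $y$'s binder, and the traversal rules together with the weaving mechanism of Proposition~\ref{prop:weaving} dictate the precise sequence of copy-cat jumps and ghost pairs that implement the lookup when the label exceeds intermediate arities.

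Next I would verify conditions (B1) and (B2) of the bisimulation by case analysis on the last rule applied. Structural rules $\rulename{Root}$, $\rulename{App}$, $\rulename{Lam}$ and the input-variable rule $\rulename{IVar}$ applied outside the modified region commute with $\phi$ trivially since the enabling relation is unchanged there. The substantive case is the lookup of the redex argument itself: in $N$ this corresponds to a $\rulename{Lam}$ step entering the substituted copy of $A$, whereas in $M$ it corresponds to reaching $y$, descending by $\rulename{Lam}$ to the first operand position, and then triggering the deterministic chain of $\rulename{Var}$ (possibly interleaved with $\rulename{Lam^\ghostlmd}$) that realises the lookup to $A$. Both directions (B1) and (B2) follow because this chain is uniquely determined by the redex structure. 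Then I would check commutativity $\coresymbol_{\rho(\alpha)}(\phi(u)) = \phi(\coresymbol_{\alpha}(u))$ modulo a renaming $\rho$ induced by variable substitution: all nodes in the detour are internal (being hereditarily justified by the redex's $@$-node) and hence are discarded by $\coresymbol$, leaving $\phi$ acting as a pointer-preserving bijection on external occurrences. Injectivity of $\phi$ on cores is then immediate.

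The hard part will be the bookkeeping of justification pointers across the detour: pointers originating inside the substituted copy of $A$ in $N$ that target occurrences of binders above (in particular free variables of $A$ captured further up in $M$) must be re-targeted by $\phi$ in a way that respects the enabling relation of $M$ through the generalized redex structure. The side conditions for ghost nodes in Table~\ref{tab:normalizing_trav_rules} must also be checked on the inserted weaving, in particular the arity-threshold bound on $\rulename{IVar_\normalizing}$ which is unaffected because all inserted occurrences are internal. Once this bookkeeping is settled, Lemma~\ref{lem:bisimulation_isomorphism} yields $\coresymbol(\travsetnorm(M)) \cong \coresymbol(\travsetnorm(N))$.
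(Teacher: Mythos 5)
Your overall strategy is the one the paper uses: construct a node map $\Phi$ from $N$ to $M$, lift it to a map $\phi$ on justified sequences that inserts the statically determined argument-lookup detour, verify the bisimulation conditions rule by rule, check that $\phi$ is injective on cores and commutes with $\coresymbol$ modulo a renaming, and conclude with Lemma~\ref{lem:bisimulation_isomorphism}. Your use of Lemma~\ref{lemma:genredex_lookup} and Proposition~\ref{prop:weaving} to pin down the detour, and the observation that the detour consists of internal occurrences erased by $\coresymbol$, match the paper's treatment of its first case.

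There is, however, a genuine gap: your construction of $\phi$ presupposes that the substituted copy of $A$ appears in $\ctree(N)$ as a distinct subtree, in front of whose root one can ``insert the detour''. Because computation trees use bulk lambda and application nodes, this is only the shape of the reduct when $x$ has at least one operand \emph{and} $A$ is a genuine abstraction (the paper's Case~1, where $x$ becomes a fresh application node $@_x$ with operator $\lambda\overline{y}'.R'$). Two further cases need separate treatment. If $A$'s root is a dummy lambda ($A=\lambda.R$ with root $r$ of $R$), then $r$ is \emph{fused} with $x$: $r$'s children are prepended to $x$'s children, there is no separate node to place the detour in front of, a single occurrence $r'$ of $N$ must correspond to the whole segment $x\cdot(\text{lookup})\cdot\lambda\cdot r$ of $M$, and the arity of the fused node changes ($|r'|=|x|+|r|$), which must be checked against the arity threshold. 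If $x$ is unapplied, $x$'s parent lambda merges with $A$'s root into $\lambda\overline{z}\,\overline{y}'$; this changes a \emph{lambda} arity and therefore interacts with the pending-lambda accumulator in the definition of $\coresymbol$ --- this is precisely where the non-canonical (pending-lambda) encoding is needed to make $\phi$ and $\coresymbol$ commute. Your claims that commutation is easy and that the arity threshold is ``unaffected because all inserted occurrences are internal'' are correct only in Case~1. Relatedly, you flag the re-targeting of pointers for free variables of $A$ as ``the hard part'' but leave it open; the paper discharges it in the \rulenamet{Lam^{\sf var}} case by capture-avoidance (no variable of $R'$ is bound by the spinal descent from $@_a$ to $@_x$), so the P-view path is preserved by $\Phi$. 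Filling in these three pieces would complete the argument.
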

\proofatend
Suppose that $M \llred N$. Let $x$ denote the leftmost linear variable occurrence in $M$ and
let $(\lambda x, A)$ be the generalized redex involving it, so that $N$ is the term obtained after
firing the generalized redex $(\lambda x, A)$. A property of linear head reduction is that the reduct contains the original
term itself, modulo alpha conversion and relabelling of the \emph{lloc} node. This induces an implicit map $\Phi$ from nodes
of the reduct $N$ to those of $M$:
$$\Phi : \ExtendedNodes(N)  \rightarrow \ExtendedNodes(M) $$
which maps nodes from the substituted subterm in $N$ to the corresponding nodes under argument $A$ in $M$;
and every node in $N$ that is not involved in the substitution, to its counterpart in $M$.

The substitution involved in the leftmost linear reduction affects the computation tree in three possible ways depending on the structure and number of $x$'s operands:
\begin{description}[itemindent=1em]
    \item[Case 1] $x$ has at least one operand ($|x|>0$) and $A$ is an abstraction;
    \item[Case 2] $x$ has at least one operand ($|x|>0$) and $A$ is not an abstraction;
    \item[Case 3] $x$ is unapplied ($|x|=0$).
\end{description}

%

We proceed by showing that for each of the three cases, there is an injection $\phi$ such that $M$ and $N$ are $\phi$-bisimilar.

\begin{description}[itemindent=0em,leftmargin=0cm]
    \item[Case 1]
    The reduction is represented on Figure~\ref{fig:firing_genredex_effect_on_tree_with_operands_and_lambda_argument}
    where $x$ has some operand ($|x|>0$), and $A$ is an abstraction $\lambda\overline{y}. R$.

    Let $\lambda\overline{x}$ denote $x$'s binding lambda node in $M$, $i$ denote $x$'s binding index, and $@_a$ denote the parent node of the subterm $A$.

    After reduction:
    (i) Node $x$ gets replaced by an application node $@_x$ and its existing children are preserved.
    (ii) The subterm $\lambda\overline{y}'. R'$, obtained from $R$ by renaming variable afresh to avoid variable capture, becomes the operator child of $@_x$.  (Note that we write $\lambda\overline{y}'$ to distinguish the root node of $R'$ from the root $\lambda\overline{y}$ of $R$ in $N$, even though the bound variable names are $\overline{y}$ for both nodes.)

    We thus have $\Phi(@_x) = x$, $\Phi(\lambda\overline{y}') = \lambda\overline{y}$, and $\Phi$ maps each node from subterm $\lambda\overline{y}'. R'$  to the corresponding node under $\lambda\overline{y}.R$.

\begin{figure}[htbp]
    \centering
    \begin{tikzpicture}[baseline=(root.base),level distance=5ex,inner ysep=0.5mm,sibling distance=8mm]
    \node (root){}
            child[dashed]{node{$@_{a}$}
                child[solid]{node{$\lambda\overline{\xi}_a$}
                    child[dashed]{node{$@$}
                        child{node{$\lambda\ldots$}
                            child[solid]{node{$@$}
                                child[solid]{node{$\lambda\overline{z}$}
                                    child[solid,level distance=5ex]{node {$x$}
                                        child{node{$B_1$}}
                                        child{node{$\ldots$}}
                                        child{node{$B_q$}}
                                        }
                                    edge from parent node[above left]{$0$}
                                }
                                child[dashed]{node{}}
                            }
                        }
                        child[dashed]{node{}}
                    }
                    edge from parent node[above left]{$0$}
                }
                child[dashed]{node{\ldots}}
                child[solid]{node{$\lambda\overline{y}$}
                    child{node{$R$}}
                }
                child[dashed]{node{\ldots}}
            }
    ;
    \node (root2) at (8,0) {}
        child[dashed]{node{$@_{a}$}
            child[solid]{node{$\lambda\overline{\xi}_a$}
                child[dashed]{node{$@$}
                    child{node{$\lambda\ldots$}
                        child[solid]{node{$@$}
                            child[solid]{node{$\lambda\overline{z}$}
                                child[solid,level distance=5ex]{node {$@_x$}
                                    child{node{$\lambda\overline{y}'$}
                                        child{node{$R'$}}
                                    }
                                    child{node{$B_1$}}
                                    child{node{$\ldots$}}
                                    child{node{$B_q$}}
                                    }
                                edge from parent node[above left]{$0$}
                            }
                            child[dashed]{node{}}
                        }
                    }
                    child[dashed]{node{}}
                }
                edge from parent node[above left]{$0$}
            }
            child[dashed]{node{\ldots}}
            child[solid]{node{$\lambda\overline{y}$}
                child{node{$R$}}
            }
            child[dashed]{node{\ldots}}
        }
    ;
    \draw[<-] ([xshift=2.5cm,yshift=-2cm]root.east) -- +(2cm,0)
    node[midway, above] {$\Phi$};
    \end{tikzpicture}
    \caption{(Case 1) Reduction $M\llred N$ where node $x$ has at least one operand, and $A$ is a non-dummy
     abstraction $\lambda\overline{y} . R$. Linearly firing  generalized redex $(\lambda x, A)$ for $x$ in $M$ (left)
     yields $N$ (right) where $R'$ is a copy of $R$ with fresh names to avoid variable capture.
    Function $\Phi$ maps nodes from the right tree to the left tree.}
    \label{fig:firing_genredex_effect_on_tree_with_operands_and_lambda_argument}
    \end{figure}
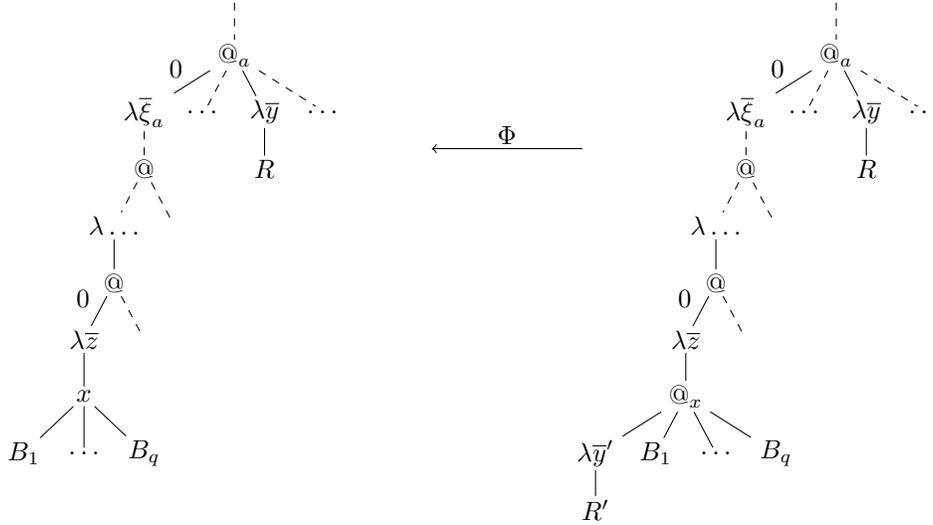

Observe that $\Phi$ has the following properties:
\begin{enumerate}[label=(\roman*)]
    \item $\Phi$ preserves the parent-child relation for lambda and variable nodes: for any lambda or variable node $n$, the child nodes of $n$'s image by $\Phi$ are the image of the $n$'s children. (This is not true for application node $@_x$, however);

    \item $\Phi$ is naming-consistent: any two variable nodes sharing the same name  in $N$ map to variable nodes with same name in $M$;

    \item $\Phi$ preserves variable binding: it maps the binder lambda node of a variable node $z$ in $N$ to the binder in $M$ of $\Phi(z)$;

    \item hence $\Phi$ preserves the enabling relation: if $n_1 \enables_i n_2$ in $N$ for $i\geq 0$, $n_1,n_2 \in \Nodes(N)$ then $\Phi(n_1) \enables_i \Phi(n_1)$ as long as $(n_1,n_2) \neq (@_x,\lambda\overline{y}')$.

    \item $\Phi$ preserves node types: images of external and internal nodes are respectively external and internal nodes; images of variable/@-nodes are variable/@-nodes; images of lambda nodes are lambda nodes; images of ghost and structural nodes are respectively ghost and structural;

    \item $\Phi$ preserves node arity. In particular $|x| = |@_x|$ since $@_x$'s operator does not contribute to its arity;

    \item $\Phi$ is not injective: since for any node in $R'$ there is a corresponding node in $R$ that is mapped to the same node in $M$.
\end{enumerate}

Let $\justseqset^P$ denote the subset $\justseqset$
consisting of justified sequences verifying the P-view-path correspondence:
the P-view at every occurrence of an internal node is precisely the path from the root to that node in the computation tree. By Property~\ref{prop:tree_path_charact} we have $\travset \subseteq \justseqset^P$.

We define $\phi$ as an extension of $\Phi$ to justified sequences of $\justseqset^P$ of $N$ as follows:
(i) maps all other occurrences to their image by $\Phi$;
(ii) it preserves all justification pointers except for $@_x$ and $\lambda\overline{y}'$;
(iii) inserts in-between each occurrence of $x$ and its immediately following lambda-node, the corresponding \emph{argument lookup} of $x$. Formally $\phi = \phi_1 \circ \phi_2$ where
\begin{align*}
\phi_1 \colon & \justseqset^P(N) \rightarrow \justseqset(M) \\
& \epsilon \mapsto \epsilon  \\
& t \cdot n \mapsto \phi(t) \ \cdot \Phi(n) &\mbox{$\Phi(n)$ has same link label and length as $n$, for $n\not\in \{ @_x, \lambda\overline{y} \}$}\\
 & t \cdot @_x \mapsto \phi(t) \ \cdot x &\mbox{$x$ points to the first occurrence of $x$'s binder in $\pview{t}$} \\
& t \cdot \lambda\overline{y}' \rightarrow \phi(t) \ \cdot \lambda\overline{y} &\mbox{$\lambda\overline{y}$ points to the last occurrence of $@_a$ in $\pview{t}$.}
\end{align*}
and $\phi_2$ is the function that inserts the argument lookup sequence of ghost nodes between each occurrence of $x$ and $\lambda\overline{y}$ in a justified sequence. Observe that the argument lookup of $x$ is entirely statically determined and thus can be reconstructed from any (visible) traversal $t_{\leq x}$ of $M$ by repeatedly instantiating rules \rulenamet{Lmd} and \rulenamet{Var} starting from $t_{\leq x}$ until reaching node $\lambda\overline{y}$.

The following example illustrates how $\phi$ is calculated on a generic
traversal $u$ of $N$ involving the lloc variable $x$:
\begin{eqnarray*}
    u &=& \Pstr[10pt]{ \ldots (aa){@_a} \cdot (la){\bulklambda{\xi_a}} \cdot \ldots (a1){@_1} \cdot (l1){\bulklambda{\xi_1}} \cdot @ \ldots \bulklambda{z} \cdot (ax){@_x} \cdot (ly-ax,20:0){\bulklambda{y}'} }
    \\
    \phi_1(u) &=& \Pstr[10pt]{ \ldots (aa){@_a} \cdot (la){\bulklambda{\xi_a}} \cdot \ldots (a1){@_1} \cdot (l1){\bulklambda{\xi_1}} \cdot @ \ldots \bulklambda{z} \cdot (x-l1,30:i) x \cdot (ly-a,25){\bulklambda{y}} }
    \\
    \phi(u) = \phi_2(\phi_1(u)) &=& \Pstr[10pt]{ \ldots (aa){@_a} \cdot (la){\bulklambda{\xi_a}} \cdot \ldots (a1){@_1} \cdot (l1){\bulklambda{\xi_1}} \cdot @ \ldots \bulklambda{z} \cdot (x-l1,30:i) x \cdot \ghostlmd \cdot \ghostvar \ldots \ghostlmd \cdot \ghostvar \cdot (ly-a,20:i_a){\bulklambda{y}} }
\end{eqnarray*}

Some immediate properties:
\begin{itemize}
\item $\phi_1$, and $\phi$ are well defined since by the P-view-path property, $x$'s binder and (resp.~$@_a$) must necessarily occurs in the P-view at $x$ (resp.~$\lambda\overline{y}'$).

\item The restriction of $\phi$ to $\coresymbol(\travset^\dagger(N))$
coincides precisely with the element-wise structure-preserving\footnote{Recall that two justified sequences over two distinct terms share the same structure just if their constituting nodes are of the same kind (variable/@ nodes or lambda node), have the same link labels, and same pointer lengths.} extension of $\Phi$ over justified sequences.
This is because $\phi_1$ only alters justification pointers of internal nodes ($@_x$ and $\lambda\overline y$) and $\phi_2$ only inserts internal nodes, while justified sequences in $\coresymbol(\travset^\dagger(N))$ do not contain any external node.

\item The restriction of $\phi$ to $\coresymbol(\travset^\dagger(N))$, is injective.
Suppose that $\coresymbol(u)\neq\coresymbol(v)$ for $u,v\in\travset^\dagger(N)$ then we must have $u\neq v$.
Let $s$ denote the longest common prefix of $u$ and $v$; and $n_1, n_2$ the nodes immediately following $s$ in $u$ and $v$
respectively. We have $s \rightarrow_N s \cdot n_1$ and $s \rightarrow_N s\cdot n_2$ with $n_1\ne n_2$ therefore $n_1, n_2$ are necessarily traversed
by the non-deterministic rule \rulenamet{IVar}. Hence they must both be external nodes and therefore they are not eliminated after applying $\coresymbol$: the last occurrence of $\coresymbol(s\cdot n_1)$ and $\coresymbol(s\cdot n_2)$ are different. Since $\Phi$ maps distinct external sibling nodes in $N$ to distinct sibling nodes in $M$, we necessarily have
$\phi(\coresymbol(s\cdot n_1))\neq \phi(\coresymbol(s\cdot n_2))$.

\item $\coresymbol$ and $\phi$ commute modulo the implicit renaming function $\rho: \VarSet^*(N) \rightarrow \VarSet^*(M)$ underlying $\Phi$. Formally, for any variable names $\alpha\in\VarSet^*$ occurring in $N$ we have:
$$
\phi(\coresymbol_\alpha(u)) = \coresymbol_{\rho(\alpha)}(\phi(u)) \ .
$$
This is shown by an easy induction and by assuming, without loss of generality, that variable names are not reused in either $M$ or $N$: each variable name is introduced by at most one lambda node in each term. (This condition can always be met via alpha-conversion.)
\end{itemize}

We now show that $M$ and $N$ are in $\phi$-bisimulation by induction on the traversals rules.

\emph{Base case}: $\rulename{Root}_\normalizing$ The empty traversal in $M$ is trivially the image by $\phi$ of the empty traversal in $N$. Same for the two singleton traversals consisting of the root nodes of $M$ and $N$.

\emph{Induction}: Let $u\in \travset^\dagger(N)$, $t\in\travset^\dagger(M)$ with $t = \phi (u)$ and $m, n$ denote respectively the last node in $t$ and $u$. We prove both directions by case analysis on the \emph{first} traversal rule used to extend $t$ (resp.~$u$).

Observe that the result holds trivially up until the traversal in $M$ reaches occurrence $x$ since one can use the exact same rules used to traverse $M$ in order to traverse the corresponding node in $N$, and reciprocally for traversals of $N$ until reaching node $@_x$. The crucial part of the induction takes place when traversing the \emph{lloc} variable $x$ (or $@_x$ in $N$).

\begin{itemize}[itemindent=0.5em, leftmargin=0.5em]
    \item $\rulename{App}_\normalizing$ We have $m = \Phi(n)$.

    (B1) Suppose $t$ extends to $t'$ with the application rule then $m, n \in\NodesApp$. Then the last occurrence $m'$ in $t'$ is the unique node verifying $m \enables_0 m'$.  Since $\Phi$ does not map $@_x$ to an application node we necessarily have  $n \neq @_x$. Thus $\phi$ preserves the enabling relation at $n$, and we can use the application rule in $N$ to extend $u$ into $u' = u \cdot n'$ where $n \enables_0 n'$ and $\phi(n') = m'$.

    (B2) Suppose $u'$ extends traversal $u$ with the application rule then the last occurrence $n$ in $u$ is an application node.
    If $n\neq@_x$ then $\phi$ preserves the enabling relation at $n$ and therefore the application rule can be used in $M$ to simulate $u'$.

    Otherwise, $n=@_x$ and since $t = \phi(u)$ we must have $m = \Phi(@_x) = x$.
    and $u' = u \cdot \lambda\overline{y}'$.
    Let $@_j$ and $\lambda\overline{\xi_j}$ for $1\leq j\leq a$, $a\geq 1$ denote respectively the application and lambda nodes in the spinal descent
    from $@_a$ to $x$'s binder. So that $\lambda\overline{\xi_1}$ is $x$'s binder, $@_1$ its parent,  and $@_a$ is $\lambda\overline{y}$'s parent.
    By Lemma~\ref{lemma:genredex_lookup}, $@_a$ is precisely the lowest application node in the spinal descent
    from $@_a$ to $\lambda\overline{\xi_1}$ verifying $|@_a| >  i_a$,
        where $i_a = i + \sum_{1\leq j< a} (|\lambda\overline{\xi_j}| - |@_j|)$.
    This formula precisely captures the ``pending argument lookup'' algorithm implicitly
    implemented by the rules \rulenamet{Lmd} and \rulenamet{Var}:
    with exactly $i_a$ instantiations of those two rules, traversal $t$ gets extended by $2(i_a-1)$ ghost nodes followed
    by the root of $A$:
            $$ t \rightarrow_M t \cdot v \cdot \lambda\overline{y}$$
    where  $v$ is a \emph{pending argument lookup}.
    Therefore by definition of $\phi$ if $v$ denote the argument lookup of $x$ we have:
    $\phi(u')
    = \phi(u \cdot \lambda\overline{y}')
    = \phi(u) \cdot \phi(\lambda\overline{y}')
    = t \cdot v \cdot \lambda\overline{y} \in \travset^\dagger(M)$.

    \item $\rulename{Lam}^@_\normalizing$
    (B1) follows from the fact that parent-child relation is preserved by $\Phi$
    (B2) Suppose $n \neq \lambda\overline{z}$ then it follows again from the fact that parent-child relation is preserved by $\Phi$.
    Otherwise, we have $n = \lambda\overline{z}$ and $u \rightarrow_N u \cdot \lambda\overline{y}'$ can be simulated in $M$ using rule $\rulename{Lam}^{\sf var}$: $t \rightarrow_M t \cdot x$.

    \item $\rulename{Lam}^{\sf var}_\normalizing$
    We have already shown that  $\phi$ preserves the local child-parent relationship for lambda nodes; node types, and arity of lambda nodes.
    It remains to show that any $\enables$-enabler of $m$ that is visible at  $u$ is mapped to an enabler of $n$ that is visible in $\phi(u)$.

    Since the last node of $t$ and $u$ are structural lambda nodes, their P-view do not contain any ghost node nodes, and in particular the argument lookups inserted by $\phi_2$ are not involved in the P-view calculation. Hence $\pview{\phi(u)} = \pview{\phi_1(u)}$.
    But by the P-view path characterization of traversals $\pview{\phi_1(u)}$ is precisely the path in $M$ from $\Phi(n)$ to the root of $M$,
    and $\pview{u}$ is the path in $N$ from $n$ to the root of $N$.

    If $n$ is not under $\lambda\overline{y}'.R'$ then the path to the root is not impacted by the reduction and therefore the path in $M$ and $N$ are identical modulo $\phi$: we thus have $\pview{\phi_1(u)} = \phi_1(\pview{u})$.
    Otherwise if $n$ is under $\lambda\overline{y}'.R'$ then the path from $m$ to the root in $M$ is precisely the element-wise image by $\Phi$ of the path from $n$ to the root in $N$ with the exclusion of the spinal descent from $@_a$ to $@_x$. But by construction of $N$, none of the variable in $R'$ are bound by the lambda nodes in the spinal descent.

    Hence, the rule $\rulename{Lam}^{\sf var}_\normalizing$ used to extend  $u$ can also be used to extend $t$, and reciprocally.

    \item $\rulename{Lam^\ghostlmd_\normalizing}$
    This case is excluded since $t$ and $u$ belong to $\travset^\dagger(M)$ and
    $\travset^\dagger(N)$ respectively and therefore their last occurrence cannot be a ghost lambda.

    \item $\rulename{Var_\normalizing}$
    (B2) Since $\Phi$ maps external variable nodes to external variable nodes,preserves the parent-child relation for variable nodes and $\phi$ preserves the justification pointer of external variable nodes, if this rule can extend traversal $u$ with node $n'$ then the same rule can be used to extend $t$ with $\phi(n')$.

    (B1) Suppose that $t$ ends with \emph{lloc} variable $x$ in which case $t' = t \cdot \lambda\overline{y}$, then
        we can just take $u = u' \cdot @_x$ and conclude exactly like we did for case $\rulename{App}_\normalizing$ (B2).
    Otherwise, if $m\neq x$, we conclude like in case (B1).

    \item $\rulename{IVar_\normalizing}$ $\Phi$ preserves node types, arity of variable nodes. It remains to show that it preserves \emph{arity threshold} of traversals: In an argument lookup, all occurrences are ghost with arity $0$ therefore they do not affect the arity threshold calculation. Hence $\arth(\phi(u))=\arth(\phi_1(u))$. Further $\Phi$ preserves arity and maps external and internal nodes to external and internal nodes respectively, thus $\arth(\phi_1(u))=\phi_1(\arth(u))$.
    Hence the rule $\rulename{Var_\normalizing}$ used to extend $u$ can be used to extend $t$, and reciprocally.

\end{itemize}

\item[Case 2] where $x$ has at least one operand ($|x|>0$) and $A$ is an application ($|\overline{y}|=0$). The root lambda node of $A$ is therefore a dummy lambda: $A = \lambda. R$. Let $r$ denote the root node of $R$, which is either a variable if $R$ is in head-normal form, or an $@$-node otherwise.
The tree of the reduct $N$ is obtained from $\ctree(M)$
by substituting $x$'s label with that of $r$ (modulo variable renaming to avoid capture if $r$ is a variable node) and prepending $r$'s children to existing $x$'s children. Other nodes, including $x$'s parent lambda node, remain untouched. Let $r'$ denote this updated node in the reduct $N$.
This transformation yields an implicit mapping function $\Phi : \ExtendedNodes(N)\rightarrow \ExtendedNodes(N)$ such that $\Phi(r') = x$.
Although $\Phi$ does not preserve the arity of $x$, it does not affect the arity threshold of traversals since $|r'| = |x| + |r|$.
We can then extend $\Phi$ to a function $\phi$ on justified sequence and show by induction that it yields a bisimulation similarly to the previous case.

\item[Case 3] where $x$ is unapplied ($|x|=0$).
After reduction, node $x$ gets replaced by subterm $A$ and the label of $x$'s parent lambda node gets concatenated with the label of $A$'s root node:
the parent lambda node $\lambda\overline{z}$ of $x$ in $M$ becomes $\lambda\overline{z}\overline{y}'$ in the reduct $N$.
Like in the first case, we can define a function $\phi$ such that
 traversals of $M$ of the form $\ldots \lambda\overline{z}\cdot x \cdot v \cdot \lambda\overline{y} \cdot \ldots$, where $v$ is an argument lookup, are images of traversals in $N$ of the form $\ldots \cdot \lambda\overline{z}\overline{y}' \ldots$. Once again, although the arity of $x$ is not preserved, this does not affect the arity threshold because $|\lambda{\overline{z} \overline{y}'}| =
 |\lambda{\overline{z}}| +  |\lambda{\overline{y}}|$.
\end{description}


Hence $M$ and $N$ are in $\phi$-bisimulation, thus because $\phi$ is injective and commutes with $\coresymbol$, by Lemma~\ref{lem:bisimulation_isomorphism} the two sets of traversals are isomorphic.
\endproofatend

\begin{proposition}[Traversals are sound for trivial reduction]
\label{prop:ulctrav_sound_for_trivialreduction}
    Let $M$ be an untyped term containing only trivial redexes, and $N$ the reduct obtained by firing the \emph{leftmost spinal-innermost} trivial redex, then $\coresymbol(\travsetnorm(M)) \cong \coresymbol(\travsetnorm(N))$.
\end{proposition}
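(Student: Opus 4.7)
The approach is to mirror the bisimulation argument of Proposition~\ref{prop:ulctrav_impl_linear_reduction}, exploiting the fact that firing a trivial redex performs no substitution but merely contracts the tree by removing the redex's argument and the corresponding slot in the bulk lambda. First I would identify the redex: since all redexes of $M$ are trivial, Property~\ref{prop:lloc_properties}(vi) places $M$ in quasi-normal form, and the leftmost spinal-innermost (and hence trivial) redex corresponds to a subterm of the compressed form $(\lambda\overline{u}x\overline{v}.R')\,A_1\cdots A_j\cdots A_q$ with $j=|\overline{u}|+1$ and $A_j=A$, where $x$ does not occur in $R'$. The reduct $N$ replaces this subterm with $(\lambda\overline{u}\overline{v}.R')\,A_1\cdots A_{j-1}A_{j+1}\cdots A_q$. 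I would define the structural map $\Phi\colon \ExtendedNodes(N)\rightarrow\ExtendedNodes(M)$ sending the truncated bulk $\lambda\overline{u}\overline{v}$ to $\lambda\overline{u}x\overline{v}$ and the $(q-1)$-ary $@$-node in $N$ to the $q$-ary $@$-node in $M$ (with binding and operand indices shifted appropriately to account for the removed $x$-slot and $A_j$-slot), and acting identically on every other node. Crucially, $\Phi$ preserves node types, ghost/structural status, and the arity of every node ever visited by a traversal.

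The key intermediate fact is an \emph{unreachability property}: no traversal of $M$ ever visits the argument $A_j$ nor any variable bound by the $x$-slot of the bulk $\lambda\overline{u}x\overline{v}$. The latter is immediate since $x\not\in R'$; the former follows because rule \rulenamet{Var} or \rulenamet{IVar} can reach $A_j$ only through a preceding variable occurrence justified by the enclosing $@$-node with label $j$, which in turn requires a variable bound by $\lambda\overline{u}x\overline{v}$ at position $|\overline{u}|+1$---of which there is none. Proposition~\ref{prop:qnf_traversals_are_finite}(b) offers an alternative derivation, since $M$ is in \emph{qnf} and traversals of a \emph{qnf} avoid internal variable nodes altogether. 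Using this unreachability, I would lift $\Phi$ to a pointer-length-preserving $\phi\colon\justseqset(N)\rightarrow\justseqset(M)$, adjusting only link labels to compensate for the two index shifts.

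The rest follows the template of Proposition~\ref{prop:ulctrav_impl_linear_reduction}: I would establish $\phi$-bisimilarity (Definition~\ref{def:bisimilar_terms}) by induction on the rules of Table~\ref{tab:normalizing_trav_rules}. The structural and copycat rules transfer routinely because $\phi$ respects the enabling relation modulo the shifts, and the unreachability property ensures that the forbidden indices never arise in traversals. For rule \rulenamet{IVar} the only verification is that the arity threshold is preserved by $\phi$, which holds because $\Phi$ preserves arities on visited nodes and ghost occurrences contribute $0$. Injectivity of $\phi$ restricted to $\coresymbol(\travsetnorm(N))$ and commutation of $\phi$ with $\coresymbol$ modulo variable renaming are inherited by the same reasoning used in the linear-reduction case; Lemma~\ref{lem:bisimulation_isomorphism} then yields the desired isomorphism. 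The main obstacle I anticipate is the careful bookkeeping of the two index shifts when the removed slots lie in the middle of long bulks or long applications. Once this is under control, the argument is actually simpler than in Proposition~\ref{prop:ulctrav_impl_linear_reduction}, since no subterm is duplicated---only two tree positions are contracted.
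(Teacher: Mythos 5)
Your overall strategy (structural map $\Phi$, lift to $\phi$, bisimulation, then Lemma~\ref{lem:bisimulation_isomorphism}) is the same as the paper's, and your treatment is essentially the paper's \emph{Case $r>1$}: the argument $A_1$ is unreachable, the $@$-node survives with one fewer operand, the bulk lambda loses one slot, and the arity threshold is preserved because the $\pm 1$ changes to $|@|$ and $|\lambda\overline{x}|$ cancel in the $\delta_j$'s. (Two small corrections there: a standard $\beta$-redex always pairs the \emph{outermost} lambda of the operator with the \emph{first} operand, so $j=1$ and the ``middle of the bulk'' bookkeeping you anticipate never arises; and $\Phi$ does \emph{not} preserve the arity of the visited $@$-node and its child lambda --- both drop by one --- so the arity threshold is preserved by cancellation, not by arity preservation.)

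The genuine gap is the case where the application has a \emph{single} operand ($r=1$). There the reduct is $\lambda\overline{\eta}.((\lambda x_1\ldots x_n.T))$ with the $@$-node and $A_1$ deleted, and in the computation tree the now-consecutive abstractions collapse into one bulk node $\lambda\overline{\eta}x_2\ldots x_n$. Your $\Phi$ cannot ``act identically on every other node'' here: a single node of $N$ corresponds to the three consecutive occurrences $\lambda\overline{\eta}\cdot @\cdot\lambda\overline{x}$ of $M$, so $\phi$ must expand occurrences rather than map them element-wise; the arity of $\lambda\overline{\eta}$ changes from $|\overline{\eta}|$ to $|\overline{\eta}|+n-1$; and ghost/structural status is \emph{not} preserved, since a ghost variable of $M$ pointing to $\lambda\overline{\eta}$ with label $i$ in the range $|\overline{\eta}|<i\leq|\overline{\eta}|+n-1$ materializes into the structural variable $x_i$ of $N$. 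This is precisely where the paper has to do real work: the commutation of $\phi$ with $\coresymbol$ is established by an explicit computation with the pending-lambdas accumulator of Definition~\ref{def:coreprojection} (showing that the core projection implements the merging of $\lambda\overline{\eta}$, $@$, $\lambda\overline{x}$ into $\lambda\overline{\eta}x_2\ldots x_n$), and the bisimulation step must handle ghost materialization. Your closing claim that the argument is ``simpler than Proposition~\ref{prop:ulctrav_impl_linear_reduction} since only two tree positions are contracted'' glosses over exactly this case, which is the technical heart of the proof.
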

\proofatend
Let's consider the tree structure at the trivial redex on Figure~\ref{fig:reducing_trivialredex}. It is of the form $(\lambda \overline{x} . T) A_1 \ldots A_r$ for $r\geq1$ where $(x_1, A_1)$ forms the leftmost trivial redex.
    Two cases must be considered:

    \begin{figure}[htbp]
        \centering
        \begin{tikzpicture}[baseline=(root.base),level distance=5ex,inner ysep=0.5mm,sibling distance=13mm]
        \node (root_rGt1_M){}
                child[dashed]{node{$\lambda\overline\eta$}
                    child[solid]{node{$@$}
                        child[solid]{
                            node{$\lambda x_1 x_2 \ldots x_n$}
                            child[solid,level distance=5ex]
                                {node {$T$}}
                            edge from parent node[above left]{$0$}
                        }
                        child{node{$A_1$}}
                        child[dashed]{node{$\ldots$}}
                        child{node{$A_r$}}
                    }
                };
        \node (root_rGt1_N) at (7,0) {}
                child[dashed]{node{$\lambda\overline\eta$}
                    child[solid]{node{$@$}
                        child[solid]{
                            node{$\lambda x_2 \ldots x_n$}
                            child[solid,level distance=5ex]
                                {node {$N$}}
                            edge from parent node[above left]{$0$}
                        }
                        child{node{$A_2$}}
                        child[dashed]{node{$\ldots$}}
                        child{node{$A_r$}}
                    }
                };
        \node (root_rEq1_M) at (0,-3) {}
                child[dashed]{node{$\lambda\overline\eta$}
                    child[solid]{node{$@$}
                        child[solid]{
                            node{$\lambda x_1 x_2 \ldots x_n$}
                            child[solid,level distance=5ex]
                                {node {$T$}}
                            edge from parent node[above left]{$0$}
                        }
                        child{node{$A_1$}}
                    }
                };
        \node (root_rEq1_N) at (7,-4) {}
                child[dashed]{node{$\lambda\overline\eta~x_2 \ldots x_n$}
                    child[solid]{node{$N$}
                    }
                }
        ;
        \draw[->] ([xshift=1.5cm,yshift=-1cm]root_rGt1_M.east) -- +(3.5cm,0)
        node[midway, above] {trivial reduction};
        \draw[->] ([xshift=1.5cm,yshift=-1.5cm]root_rEq1_M.east) -- +(3.5cm,0)
            node[midway, above] {trivial reduction} ;
        \end{tikzpicture}

        \caption{Reducing a (leftmost spinal-innermost) trivial redex. Case $r>1$ (top) and $r=1$ (bottom).}
\label{fig:reducing_trivialredex}
    \end{figure}
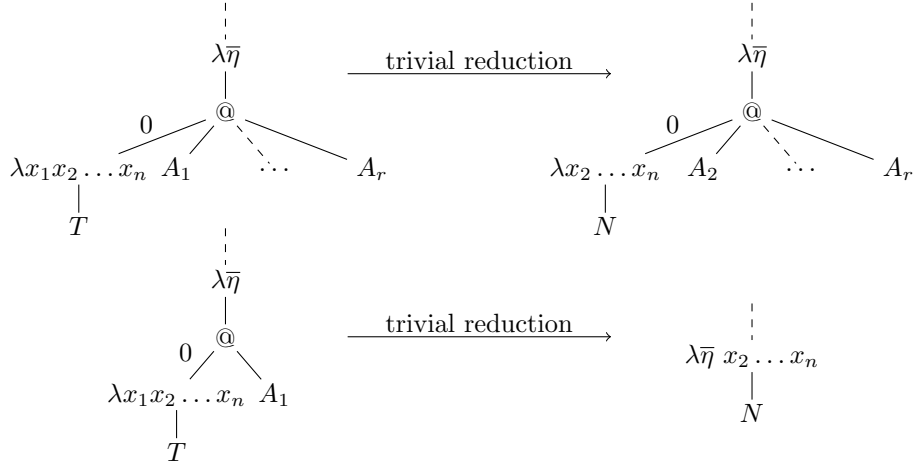

    \begin{description}
    \item[Case $r>1$] Because the structure of the tree is preserved after reduction, there is an implicit injection $\Phi$ from nodes of the reduct $N$ to nodes of $M$. Let $\phi$ denote its element-wise structure-reserving  extension to justified sequences.

    Firstly, $M$ and $N$ are in $\phi$-bisimulation. We omit the details, the key observations are:
    (i) Since $x_1$ does not occur in $T$, the term $A_1$ will never be traversed in $N$.
    (ii) Only nodes hereditarily justified by $\lambda x_1 \ldots x_n$ see their label (or link label) changed after applying $\Phi$. But since those are external nodes, they get filtered out by the core projection $\coresymbol$.
    (iii) The arity of node $\lambda x_2 \ldots x_n$ increases by $1$ after applying $\Phi$, but it does not impact the arity threshold of traversals because the arity of the application node also increases by $1$.

    Secondly, since $\Phi$ is injective, so is its element-wise extension to justified sequences.

    Finally, $\phi$ and $\coresymbol$ commute. This is shown by an easy induction, observing that nodes $@$ and $\lambda x_2\ldots x_n$ necessarily occur consecutively in a traversal of $N$, and because $|@| = |\Phi(@)| -1$ and $|\lambda x_2\ldots x_n| = |\Phi(\lambda x_2\ldots x_n)| - 1$, the core projection is preserved modulo $\Phi$.

    Hence by Lemma~\ref{lem:bisimulation_isomorphism} the two sets of traversals are isomorphic.

    \item[Case $r=1$] Depicted on Figure~\ref{fig:reducing_trivialredex} (bottom). The tree structure is altered: the application node and the argument $A_1$ are eliminated in the reduct. There is an implicit injection $\Phi$ from nodes of $N$ to nodes of $M$. We extent this function to \emph{occurrences} of nodes as follows. Let's introduce the abbreviation $\overline{y} = \overline\eta x_2 \ldots x_n$.
    \begin{align*}
        \Phi\colon O_4(N) &\rightarrow O_4(M) \\
        (n,p,d,k) &\longmapsto (\Phi(n),p,d,k) & \mbox{if $n\neq\lambda\overline{y}$} \\
        (\lambda\overline{y},p,d,k) &\longmapsto (\lambda\overline\eta,(x_2 \ldots x_n \cdot p),d,k)
    \end{align*}
    So in particular, $\Phi$ maps occurrences of $\lambda\overline{y}$ to occurrences of $\lambda\overline\eta$ with binding variables $x_2 \ldots x_n$ appended to $\overline\eta$.
    \begin{equation}
        \Phi(\lambda\overline{y})
        =
        \lambda\overline\eta^{[x_2\ldots x_n]}
        \label{eqn:phi_maps_lambday_to_lambdaeta}
    \end{equation}
    Observe also that $\Phi$ commutes with the rebinding operation: because in the case of $\lambda\overline{y}$, $\Phi$ appends variable names to the left whereas the rebinding operation appends variables names to the right.

    We extend $\Phi$ to justified sequences by taking the element-wise extension $\phi$ with the following additional modification: any occurrence of node $\lambda\overline{y}$ in a traversal of $N$ gets expanded by $\phi$ into the three occurrences $\lambda\overline\eta \cdot @ \cdot \lambda\overline{x}$ in $M$.

    We have the following properties:
    \begin{itemize}
    \item $\phi$ is injective. It follows from the fact that $\Phi$ is itself injective.

    \item $\phi$ and $\coresymbol$ commute. This is because $\phi$ carries over the list of pending lambdas and the core symbol operation adequately implements merging of nodes $\lambda\overline\eta$ and
    $\lambda\overline x_1 \ldots x_n$ into
    $\lambda\overline\eta x_2 \ldots x_n$. Formally:

    Let $u$ be a traversal of $\travset^\dagger(N)$
    ending with $\lambda\overline{y}$ that is
    $u = u_1 \cdot \lambda\overline{y} $ for some traversal $u_1$.
    The sequence $\phi(u)$ is of the form
    $\phi(u) = \phi(u_1) \cdot \lambda\overline{\eta} \cdot @ \cdot \lambda\overline{x}$. Let $\overline\alpha$ be the list of pending lambdas at $u_{\leq \lambda\overline{y}}$. There are two cases:

    Suppose $\lambda\overline\eta$ is internal. We have:
    \begin{align*}
        \coresymbol_{\overline\alpha}(\phi(u))
        &=  \coresymbol_{\overline\alpha}(\phi(u_1) \cdot \lambda\overline{\eta} \cdot @ \cdot \lambda\overline{x})
            & \mbox{(def.~of $\phi(u)$)}
        \\
        &=  \coresymbol_{\overline{x}\overline\alpha}(\phi(u_1) \cdot
        \lambda\overline\eta \cdot @)
            & \mbox{(def.~of $\coresymbol$)}
        \\
        &=  \coresymbol_{pop_1(\overline{x}\overline\alpha)}(\phi(u_1) \cdot
        \lambda\overline{\eta})
            & \mbox{(def.~of $\coresymbol$ and $|@|=1$)}
        \\
        &=  \coresymbol_{\overline\eta \cdot pop_1(\overline{x}\overline\alpha)}(\phi(u_1))
            & \mbox{(def.~of $\coresymbol$)}
        \\
        &=  \coresymbol_{\overline{y}\overline\alpha}(\phi(u_1))
            & \mbox{(def.~of $\overline{y}$)}
        \\
        &=  \phi(\coresymbol_{\overline{y}\overline\alpha}(u_1))
            & \mbox{(by induction hypothesis)}
        \\
        &= \phi(\coresymbol_{\overline\alpha}(u_1 \cdot \lambda\overline{y})) & \mbox{(def of $\coresymbol$)} \\
        &=  \phi(\coresymbol_{\overline\alpha}(u))
            & \mbox{(def.~of $u$).}
    \end{align*}

    Suppose $\lambda\overline\eta$ is external. We have:
    \begin{align*}
        \coresymbol_{\overline\alpha}(\phi(u))
        &=  \coresymbol_{\overline\alpha}(\phi(u_1) \cdot \lambda\overline{\eta} \cdot @ \cdot \lambda\overline{x})
            & \mbox{(def.~of $\phi(u)$)}
        \\
        &=  \coresymbol_{\overline{x}\overline\alpha}(\phi(u_1) \cdot
        \lambda\overline\eta \cdot @)
            & \mbox{(def.~of $\coresymbol$)}
        \\
        &=  \coresymbol_{pop_1(\overline{x}\overline\alpha)}(\phi(u_1) \cdot \lambda\overline\eta)
            & \mbox{(def.~of $\coresymbol$ and $|@|=1$)}
        \\
        &=  \coresymbol_{x_2\ldots x_n \overline\alpha}(\phi(u_1) \cdot \lambda\overline\eta)
        & \mbox{(def.~of $pop$)}
        \\
        &=  \coresymbol_{\epsilon}(\phi(u_1)) \cdot \lambda\overline\eta^{[x_2\ldots x_n \overline\alpha]}
            & \mbox{(def.~of $\coresymbol$)}
    \end{align*}
    And:
    \begin{align*}
        \phi(\coresymbol_{\overline\alpha}(u))
        &= \phi(\coresymbol_{\overline\alpha}(u_1 \cdot \lambda\overline{y}))
        & \mbox{(def.~of $u$)}
    \\
        &= \phi(\coresymbol_{\epsilon}(u_1) \cdot \lambda\overline{y}^{[\overline\alpha]})
        & \mbox{(def.~of $\coresymbol$)}
    \\
        &= \phi(\coresymbol_{\epsilon}(u_1)) \cdot \phi(\lambda\overline{y}^{[\overline\alpha]})
        & \mbox{(def.~of $\phi$)}
    \\
        &=  \coresymbol_{\epsilon}(\phi(u_1)) \cdot \phi(\lambda\overline{y}^{[\overline\alpha]})
        & \mbox{(commutativity, by I.H.)}
    \\
        &=  \coresymbol_{\epsilon}(\phi(u_1)) \cdot \phi(\lambda\overline{y})^{[\overline\alpha]}
        & \mbox{($\phi$ commutes with rebinding)}
    \\
    &=  \coresymbol_{\epsilon}(\phi(u_1)) \cdot (\lambda\overline\eta^{[x_2\ldots x_n]})^{[\overline\alpha]}
        & \mbox{(by Eqn.~\ref{eqn:phi_maps_lambday_to_lambdaeta})}
    \\
        &=  \coresymbol_{\epsilon}(\phi(u_1)) \cdot\lambda\overline\eta^{[x_2\ldots x_n \overline\alpha]}
        & \mbox{(rebinding composition)}
    \end{align*}
    Hence $\phi(\coresymbol_{\overline\alpha}(u)) = \coresymbol_{\overline\alpha}(\phi(u))$.

    \item $\rightarrow_M$ can simulate $\rightarrow_N$ and reciprocally. This is shown by induction on the traversal rules. One interesting case is when ghost nodes in $M$ get materialized into structural nodes in $N$:
    Consider the traversal of $M$ for some $i\geq1$:
    $t = \Pstr[10pt]{
        (l){\lambda\overline\eta} \cdot (a)@ \cdot (lx-a,25:0){\lambda\overline{x}} \cdot \ldots \cdot (gl-a,30:i)\ghostlmd \cdot (gv-l,32:{|\eta|-1+i})\ghostvar }
    $.
    If $i>n$ then corresponding traversal in $N$
    also ends with a ghost variable:
    $u = \Pstr[10pt]{
        (l){\lambda\overline\eta x_2 \ldots x_n} \cdot \ldots \cdot  (gv-l,32:{|\eta|-1+i})\ghostvar }
    $
    If $i\leq n$ then the ghost nodes $\ghostvar$ gets materialized into variable occurrence $x_i$ in $N$:
    $u = \Pstr[10pt]{
        (l){\lambda\overline\eta x_2\cdots x_n} \cdot \ldots \cdot  (gv-l,32:{|\eta|+i-1}){x_i} }
    $
    \end{itemize}
    We can then conclude by Lemma~\ref{lem:bisimulation_isomorphism} that the two sets of traversals are isomorphic.
    \end{description}
\endproofatend

\begin{example}
    The term
    $M = \lambda x. (\lambda y z.z) u$ from Example~\ref{examp:ghost_materialization} trivially reduces to $N = \lambda x z . z$. Take the traversal $t$ from Example~\ref{examp:ghost_materialization}, then we have $\coresymbol(t) = \Pstr[10pt]{ (l){\lambda x z} \cdot (gv-l,32:2)\ghostvar }$ which is equivalent to
$\Pstr[10pt]{(l){\lambda x^{[z]}} \cdot (gv-l,32:2)\ghostvar }$, itself equivalent to traversal $\Pstr[10pt]{ (l){\lambda x^{[z]}} \cdot (gv-l,32:2){z} }$ in $N$.
\end{example}

\begin{corollary}
\label{cor:qnf_and_nf_traveset_invariant}
If $M$ is in \emph{qnf} and has beta-normal form $N$ then $\coresymbol(\travsetnorm(M)) \cong \coresymbol(\travsetnorm(N))$.
\end{corollary}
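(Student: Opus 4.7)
The plan is to chain together the two main soundness results already established in this section, using Proposition~\ref{prop:qnf_nf} as the glue that connects quasi-normal forms to beta-normal forms via a sequence of steps amenable to Proposition~\ref{prop:ulctrav_sound_for_trivialreduction}. Since no single application of linear reduction is needed (we are inside a \emph{qnf}), Proposition~\ref{prop:ulctrav_impl_linear_reduction} will not be used here; instead the argument is a finite iteration of the trivial-reduction invariance.

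First, I would invoke Proposition~\ref{prop:qnf_nf}(ii) to obtain a finite reduction sequence
\[
M = M_0 \rightarrow M_1 \rightarrow \cdots \rightarrow M_k = N
\]
produced by the leftmost spinal-innermost strategy. By Proposition~\ref{prop:qnf_nf}(i), each step fires a \emph{trivial} redex at the leftmost spinal-innermost position, and as noted in the proof of Proposition~\ref{prop:qnf_nf}(ii), the \emph{qnf} property is preserved by trivial reduction; consequently every intermediate term $M_i$ is itself in \emph{qnf}. Termination is guaranteed because trivial reductions strictly decrease the size of the term (they simply discard the bound variable and its argument without duplicating any subterm), and because $M$ has a beta-normal form.

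Next, I would apply Proposition~\ref{prop:ulctrav_sound_for_trivialreduction} at each step to obtain an isomorphism
\[
\coresymbol(\travsetnorm(M_i)) \;\cong\; \coresymbol(\travsetnorm(M_{i+1}))
\]
for each $0 \leq i < k$. Composing these $k$ isomorphisms (isomorphism of sets of justified sequences being transitive) yields the desired
\[
\coresymbol(\travsetnorm(M)) \;\cong\; \coresymbol(\travsetnorm(N)).
\]

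The main obstacle I expect is a mild mismatch of hypotheses: Proposition~\ref{prop:ulctrav_sound_for_trivialreduction} is stated for terms \emph{containing only trivial redexes}, whereas a \emph{qnf} may contain non-trivial redexes buried inside arguments of the head variable (cf.~the example $(\lambda x.u)((\lambda z.z)y)$). Inspecting the proof of that proposition, however, shows that the bisimulation is constructed purely from the local tree structure around the \emph{fired} redex and the induced map $\Phi$ on nodes; non-trivial redexes elsewhere in the term are untouched by $\Phi$ and play no role. Thus the result applies whenever the leftmost spinal-innermost redex itself is trivial, which is exactly what Proposition~\ref{prop:qnf_nf}(i) supplies at every step of the reduction sequence. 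Clarifying this point (or weakening the hypothesis of Proposition~\ref{prop:ulctrav_sound_for_trivialreduction} accordingly) is the only subtlety; once it is addressed the corollary follows immediately from the chain of isomorphisms above.
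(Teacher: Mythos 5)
Your proof takes essentially the same route as the paper: Proposition~\ref{prop:qnf_nf} supplies the leftmost spinal-innermost trivial reduction sequence from $M$ down to $N$, Proposition~\ref{prop:ulctrav_sound_for_trivialreduction} is invoked at each step, and the resulting isomorphisms are composed. The hypothesis mismatch you flag (that proposition is stated for terms containing \emph{only} trivial redexes, while a \emph{qnf} may still harbour non-trivial redexes inside head arguments, as the paper's own example $(\lambda x.u)((\lambda z.z)y)$ shows) is real and is glossed over in the paper's one-line proof as well; your resolution---that the bisimulation is built locally around the fired redex, so only the triviality of that redex matters---is the right way to patch it.
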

\begin{proof}
By Proposition~\ref{prop:qnf_nf}(ii) \emph{qnf} trivially reduce to their beta-nf: there is a reduction sequence $M = M_1 \rightarrow_\beta M_2 \ldots \rightarrow_\beta M_q= N$, $q\geq1$  where in each reduction step, $M_{i+1}$ is obtained by reducing the leftmost spinal-innermost standard redex of $M_i$ (of the form $(\lambda x . U) T$ such that $U$ does not contain any redex). By Proposition~\ref{prop:qnf_nf}(i), each such redex is trivial. By Proposition~\ref{prop:ulctrav_sound_for_trivialreduction} the set $\core{\travsetnorm}$ is an invariant (up to isomorphism) throughout this trivial reduction sequence, which yields the desired result.
\end{proof}

\begin{proposition}
\label{prop:ulc_travnorm_finite}
Let $M$ be an untyped term with a beta-normal form. Then
\begin{enumerate*}[label=(\roman*)]
\item All traversals in $\travsetnorm(M)$ are finite;
\item The set $\travsetnorm(M)$ is finite.
\end{enumerate*}
\end{proposition}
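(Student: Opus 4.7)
The plan is to reduce to the case of the beta-normal form $N$ of $M$, establish finiteness there, and transfer finiteness back to $M$ via the bisimulations set up in the preceding propositions.

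First I would handle $N$. Since $N$ is beta-normal, Property~\ref{prop:lloc_properties}(ii) ensures it is a quasi normal form, so Proposition~\ref{prop:qnf_traversals_are_finite}(c) yields (i) for $N$: every $t\in\travsetnorm(N)$ is finite. For (ii), each rule of Table~\ref{tab:normalizing_trav_rules} has bounded non-determinism---the structural and copy-cat rules are deterministic given the last occurrence, and in $\rulefont{IVar_\normalizing}$ the link label is explicitly bounded by $\arth(t)$, a finite quantity for any finite $t$---so the prefix tree of $\travsetnorm(N)$ is finitely branching. Combined with (i), König's lemma gives finiteness of $\travsetnorm(N)$.

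Next I would lift the result to $M$. By Theorem~\ref{thm:completeness_leftmostlinearred} the leftmost linear reduction of $M$ terminates at a qnf $M'$, and by Proposition~\ref{prop:qnf_nf}(ii) $M'$ trivially reduces to $N$, giving a finite reduction chain $M = M_0 \llred \cdots \llred M_k = M' \rightarrow \cdots \rightarrow M'_p = N$. At each step, the proofs of Propositions~\ref{prop:ulctrav_impl_linear_reduction} and~\ref{prop:ulctrav_sound_for_trivialreduction} establish more than the stated core-isomorphism: they construct a $\phi$-bisimulation (Def.~\ref{def:bisimilar_terms}) between $\travset^\dagger(M_i)$ and $\travset^\dagger(M_{i+1})$, where $\phi$ acts occurrence-wise and, at each occurrence corresponding to the reduced redex, inserts a pending \emph{argument lookup} strand of ghost nodes whose length is uniformly bounded by the spinal distance in $M_i$ from $@_a$ to the bound-variable lambda. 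Hence if $u\in\travset^\dagger(M_{i+1})$ is finite so is $\phi(u)$; and iteratively applying bisimulation condition (B1) starting from the empty traversal shows that every $t\in\travset^\dagger(M_i)$ equals $\phi(u)$ for some $u$, so $\phi$ is surjective and $|\travset^\dagger(M_i)|\leq|\travset^\dagger(M_{i+1})|$. Iterating along the chain from $N$ back to $M$, finiteness of $\travset^\dagger(N)$ transfers to finiteness of $\travset^\dagger(M)$ with finite elements. Finally, any traversal in $\travsetnorm(M)$ is either a prefix of a $\travset^\dagger(M)$-traversal or extends by a bounded internal-ghost suffix to one, so both (i) and (ii) transfer from $\travset^\dagger(M)$ to $\travsetnorm(M)$.

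The main obstacle is that Propositions~\ref{prop:ulctrav_impl_linear_reduction} and~\ref{prop:ulctrav_sound_for_trivialreduction} are phrased only as isomorphisms of traversal cores, so to make the argument rigorous one must revisit their proofs to extract (a) that the map $\phi$ is well-defined at the level of full $\travset^\dagger$-traversals; (b) that (B1) yields surjectivity of $\phi$ onto $\travset^\dagger(M_i)$; and (c) that the argument-lookup strand inserted by $\phi$ has length uniformly bounded by a constant depending only on the reducing term, so that finite inputs produce finite outputs. Given these three extraction steps, the rest of the argument is a straightforward chain of inequalities on the cardinalities of the $\travset^\dagger$-sets along the reduction sequence.
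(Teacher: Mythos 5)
Your proof is correct and follows the same backbone as the paper's: terminate the leftmost linear reduction at a quasi-normal form (Theorem~\ref{thm:completeness_leftmostlinearred}), invoke finiteness of qnf traversals (Proposition~\ref{prop:qnf_traversals_are_finite}), pull the result back along the reduction chain via Proposition~\ref{prop:ulctrav_impl_linear_reduction}, and derive (ii) from (i) by bounded non-determinism of the rules. Two differences are worth recording. First, you continue past the quasi-normal form $M'$ all the way to the beta-normal form $N$ through the trivial-reduction chain and Proposition~\ref{prop:ulctrav_sound_for_trivialreduction}; that extra leg is redundant, since $M'$ is itself a qnf and Proposition~\ref{prop:qnf_traversals_are_finite} applies to it directly, which is where the paper stops. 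Second, you are more careful than the paper on the transfer step: Proposition~\ref{prop:ulctrav_impl_linear_reduction} as \emph{stated} only gives $\coresymbol(\travsetnorm(M)) \cong \coresymbol(\travsetnorm(N))$, an isomorphism of traversal cores, while the paper's one-line proof reads it as an isomorphism of the traversal sets themselves. You rightly identify that the usable content is the $\phi$-bisimulation on $\travset^\dagger$ constructed inside that proposition's proof, and that one must extract (a) that $\phi$ sends finite traversals to finite traversals because the inserted argument-lookup strands have bounded length, and (b) surjectivity onto $\travset^\dagger(M_i)$ from repeated application of (B1) (for a putative infinite traversal of $M_i$ this yields an infinite chain, hence an infinite traversal of the reduct, giving the contradiction). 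Your closing observation that every $\travsetnorm$-traversal extends by a bounded internal-ghost suffix into $\travset^\dagger$ closes the remaining small gap that the bisimulation is defined only on $\travset^\dagger$. In short, your argument is a slightly longer but more rigorously justified rendering of the paper's proof.
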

\begin{proof}
(i) If $M$ has a normal form then by Theorem~\ref{thm:completeness_leftmostlinearred} its linear reduction sequence yields a quasi-normal form $Q$ and by Theorem~\ref{thm:soundness_leftmostlinearred} $Q$ and $M$ are beta-equivalent. Thus by Prop.~\ref{prop:ulctrav_impl_linear_reduction} the sets of traversals of $M$ and $Q$ are isomorphic. So if $M$ has an infinite traversal then so does $Q$, which contradicts Proposition~\ref{prop:qnf_traversals_are_finite}.
(ii) Traversals rules defining $\travsetnorm$ all have bounded non-determinism therefore by (i) $\travsetnorm$ is finite.
\end{proof}


\begin{proposition}
Algorithm~\ref{algo:ulc_normalization_by_traversals} terminates.
\end{proposition}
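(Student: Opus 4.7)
The plan is to reduce the termination claim directly to Proposition~\ref{prop:ulc_travnorm_finite}, which has already done all the heavy lifting. Since the input term $M$ is assumed to admit a beta-normal form, Proposition~\ref{prop:ulc_travnorm_finite} gives us both (i) that every normalizing traversal of $M$ is finite, and (ii) that the set $\travsetnorm(M)$ is itself finite. These two facts together are exactly what is needed to ensure each sub-step of the algorithm completes in finite time.

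I would walk through the two numbered steps of Algorithm~\ref{algo:ulc_normalization_by_traversals} in order. For step 1, the enumeration of maximal normalizing traversals is a tree-search over the rule system of Table~\ref{tab:normalizing_trav_rules}; each rule application has finite branching (the only rule with unbounded non-determinism, $\rulefont{IVar}$, is explicitly capped by the arity threshold $\arth(t)$, which by the remark after Definition~\ref{dfn:arity-threshold} is computable in linear time from the current traversal). By Proposition~\ref{prop:ulc_travnorm_finite}(i) each branch of the search is finite, and by Proposition~\ref{prop:ulc_travnorm_finite}(ii) there are only finitely many maximal branches, so K\"onig's Lemma (or a trivial direct argument) shows the enumeration terminates. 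For each enumerated traversal $t$, computing $\core{t}$ and then $\pview{\core{t}}$ is a finite computation on a finite sequence.

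For step 2, we apply Algorithm~\ref{algo:termtree_readout_from_justitied_paths}. Its input is the finite set $\mathcal{P}$ produced in step 1, and its main loop iterates depth-first over the nodes of the induced labelled-tree $(P,L)$. Since $\mathcal{P}$ is finite and each sequence in it is finite, $P$ is finite, and the inner lookups (computing $\max\{i \mid L(v) = (\theta, n, i)\}$, reading $id(b)$, etc.) are all finite searches over $P$. Hence the readout algorithm terminates.

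The only subtle point — and the one deserving a sentence of comment — is that step 1 requires enumerating maximal traversals, which a priori needs to detect when no rule extension applies. This is decidable at each step because all rule premises are effectively checkable on the finite prefix $t$ built so far: the enabling relation is a static function of $\ctree(M)$, visibility in the O-view and P-view is computed by a finite traversal of $t$, and the arity-threshold bound restricts $\rulename{IVar_\normalizing}$ to finitely many instantiations. Combining these observations with Proposition~\ref{prop:ulc_travnorm_finite} yields termination.
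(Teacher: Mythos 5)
Your argument is correct and follows essentially the same route as the paper: both reduce termination to Proposition~\ref{prop:ulc_travnorm_finite} (finiteness of $\travsetnorm(M)$ and of each traversal in it). Your additional remarks on bounded branching, decidability of rule premises, and termination of the readout step merely flesh out details the paper leaves implicit.
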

\begin{proof}
By Proposition~\ref{prop:ulc_travnorm_finite} there is a finite number of traversals in $\travsetnorm$ and each of them can be obtained with finitely many applications of traversal rules.
\end{proof}

\begin{theorem}[ULC Paths Characterization]
\label{thm:path_charact_ulc}
Let $M$ be an untyped term with normal form $T$.
$$\travsetnorm(M)/{\sim} \ \cong\ \pathset(T) \ .$$
\end{theorem}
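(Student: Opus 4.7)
The plan is to chain three isomorphisms: reduce $M$ via leftmost linear reduction to a quasi-normal form $Q$, trivially reduce $Q$ to the beta-normal form $T$, and finally observe that for $T$ the normalizing traversals biject canonically with the tree paths $\pathset(T)$.

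For the first two stages I would appeal to results already established. Since $M$ admits a beta-normal form, Theorem~\ref{thm:completeness_leftmostlinearred} guarantees that the leftmost linear reduction of $M$ terminates at some quasi-normal form $Q$ via a finite sequence $M = M_0 \llred M_1 \llred \cdots \llred M_p = Q$. Iterating Proposition~\ref{prop:ulctrav_impl_linear_reduction} along this sequence yields $\coresymbol(\travsetnorm(M)) \cong \coresymbol(\travsetnorm(Q))$. By Theorem~\ref{thm:soundness_leftmostlinearred}, $Q$ is beta-equivalent to $M$ and hence shares the beta-normal form $T$; Corollary~\ref{cor:qnf_and_nf_traveset_invariant} then gives $\coresymbol(\travsetnorm(Q)) \cong \coresymbol(\travsetnorm(T))$. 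Composition yields a structure-preserving bijection between the sets of traversal cores of $M$ and $T$.

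It remains to show $\travsetnorm(T)/{\sim} \cong \pathset(T)$. Since $T$ is in beta-normal form, every subterm has shape $\lambda\overline{x}.y\,s_1\cdots s_m$ with $y$ a variable, so $\ctree(T)$ contains no $@$-node and every node is external. Moreover $T$ is in particular a quasi-normal form (by Property~\ref{prop:lloc_properties}(ii)), so Proposition~\ref{prop:qnf_strand_decomposition} applies: but since the spinal descents $u$ in every strand are forced to be empty in the absence of $@$-nodes, the $S\to G$ and $G\to\cdot$ transitions are excluded, and every strand reduces to the structural form $\underline{\lambda\overline{\eta}}\cdot\underline{x}$ of length $2$. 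Consequently every normalizing traversal of $T$ is ghost-free, and $\travsetnorm(T)$ coincides as a set of justified sequences with $\pathset(T)$. On such tree-path sequences $\coresymbol$ acts as the identity (no internal nodes to strip and no pending lambdas to accumulate), and so does $\pview{\cdot}$ (each lambda node is justified by its immediate predecessor along the path, trivialising every pointer-jump in the P-view recursion). Distinct tree paths therefore have distinct structures, whence the $\sim$-equivalence classes of $\travsetnorm(T)$ biject canonically with $\pathset(T)$.

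Chaining the three bijections gives $\travsetnorm(M)/{\sim} \cong \pathset(T)$, concretely realised by the core P-view map sending each equivalence class to a path in $\ctree(T)$. The main technical subtlety I foresee lies in the last stage: carefully establishing that in the normal-form case the strand decomposition of Proposition~\ref{prop:qnf_strand_decomposition} degenerates to a chain of length-$2$ structural strands, so that \rulenamet{IVar_\normalizing} cannot introduce ghost lambdas and traversals of $T$ identify directly with prefixes of tree paths.
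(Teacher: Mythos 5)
Your proposal is correct and follows essentially the same route as the paper's proof: the base case observes that a beta-normal form has no $@$-nodes, hence all nodes are external, the arity threshold collapses so that no ghost nodes can be introduced, and $\coresymbol$ and $\pview{\cdot}$ act as identities, giving $\travsetnorm(T)=\pathset(T)$; the general case chains Theorem~\ref{thm:completeness_leftmostlinearred}, Proposition~\ref{prop:ulctrav_impl_linear_reduction} and Corollary~\ref{cor:qnf_and_nf_traveset_invariant} exactly as the paper does. Your appeal to the strand decomposition of Proposition~\ref{prop:qnf_strand_decomposition} in the final stage is a slightly heavier (but equivalent) way of making the paper's direct observation that rule \rulenamet{Var} is never used on a normal form.
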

\begin{proof}
Let's first assume that $M$ is in beta-normal form. Then its computation tree cannot contain any $@$-node and all nodes must be external.
Hence the arity threshold of a traversal ending with external variable $x$ is precisely $|x|$, the arity of $x$. A trivial induction on the traversal rules shows, that since the \rulenamet{Var} is never used, we necessarily have  $\travsetnorm(M) = \pathset(M)$. Further, since all nodes are externals, the projection $\coresymbol$ and $\pview{\_}$ functions both coincides with the identity function thus $\pview{\core{\travsetnorm(M)}} = \travsetnorm(M) = \pathset(M)$.

Otherwise, since $M$ has a beta-nf, by Theorem~\ref{thm:completeness_leftmostlinearred} its \emph{leftmost linear reduction sequence} terminates and yields a \emph{qnf}. By Prop.~\ref{prop:ulctrav_impl_linear_reduction} the set of traversals is preserved by \emph{linear reduction} (up to an isomorphism) thus the set of traversals of the \emph{qnf} is isomorphic to the set of traversals of $M$. Corollary~\ref{cor:qnf_and_nf_traveset_invariant} permits us to conclude.
\end{proof}

\paragraph{Soundness} of the normalization procedure follows from Theorem~\ref{thm:path_charact_ulc} and the fact that a term is uniquely characterized by its set of justified paths $\pathset$ (Property~\ref{prop:tree_path_charact}).


\section{Conclusion and further directions}

We presented a novel method to evaluate untyped lambda terms by generalizing the theory of traversals, originally introduced in a typed setting, to the untyped lambda calculus. We introduced the leftmost linear reduction, a generalization of the non-standard \emph{head-linear reduction} strategy\cite{danos-head}, and showed soundness of the evaluation procedure by showing that traversals implement leftmost linear reduction.

\subsection*{Connection with Berezun-Jones traversals of ULC}

Daniil Berezun and Neil Jones were first to introduce a notion of traversals for the untyped lambda calculus \cite{JonesBerezunLLL}. Their presentation has notable differences with ours. Firstly, they require two types of pointers: binding pointers and control pointers, instead of a single justification pointer for imaginary traversals.
The control pointers would be what Danos-Regnier call the ``price to pay for having no stacks or environments'' in their definition of the argument lookup transition of the Pointer Abstract Machine\cite{danos-head}.

In imaginary traversals, however, node occurrence have at most one justification pointer. Instead, we pay for the absence of stacks and environment by introducing ghost occurrences. In particular,
we performs the necessary ``gymnastics to find the argument of a subterm''\cite{danos-head} by traversing the `pending argument lookup` (Def.~\ref{def:spinaldescent_pendingarglookup}) consisting of ghost occurrences, until
a structural node gets materialized\footnote{The ghost materialization loop  implemented by
rule \rulenamet{Lam^\ghostlmd} and \rulenamet{Var} essentially calculates the quantity ``$|\alpha|+i-|m|$'' at each iteration, which is the exact same quantity ($r-a+l$) used in step 2. (b) of the PAM transition from \cite{danos-head}.}

Another difference is that Berezun-Jones define a single traversal for a given term, whereas our definition involves one traversal for every path in the resulting normal form. The `read-out' algorithm to reconstruct the normal form from the traversals (to be defined in a forthcoming paper by the same authors) involves recursively applying rewriting rules to the traversal. In contrast, our read-out algorithm reconstructs the tree representation of the normal from from tree-path extracted from each maximal traversal.

The connection between the two presentations lies in their respective correspondence with head-linear reduction: we have shown that imaginary traversals essentially implement the recursive application of the head-linear reduction. Berezun and Jones established a similar correspondence in their forthcoming paper.

\subsection*{Game semantics connection}

In the typed setting, Danos \etal showed the connection between game models of typed lambda and \emph{head-linear reduction} and proposed \emph{Pointer Abstract Machine (PAM)} as an abstract machine implementation of this reduction~\cite{danosherbelinregnier1996}.
More recently, the theory of traversals \cite{OngLics2006} was shown to correspond to Game Semantics in a very concrete way for typed languages like the simply-typed lambda calculus, PCF and Idealized Algol \cite{BlumPhd}: the set of traversals is in one to one correspondence with the game denotation of a term where all internal moves are revealed. We conjecture that such Game Semantic correspondence can be generalized to untyped terms using the game model of the untyped lambda calculus introduced in Andrew Ker's thesis~\cite{KerThesis} which denotes lambda-terms by \emph{effectively almost everywhere copycat} strategies (Conjecture~\ref{conj:ulc_corresp}).





\subsection*{Acknowledgement}
This work got inspired by the slides of the Galop 2016 presentation by Daniil Berezun and Neil D. Jones on Partial Evaluation and Normalization by Traversals \cite{berezunjones_partialevalbytraversals}. It benefited from email communications with the authors as well as memorable discussion with Neil D. Jones during his visit at Microsoft Research in May 2017.

\bibliographystyle{abbrv}
\bibliography{ulc-trav}

\begin{appendices}
\section{Tables}

\newgeometry{left=1cm,right=2cm, bottom=2cm, top=2cm}
\begin{landscape}
\thispagestyle{empty}
\begin{table}
\resizebox{1\hsize}{!}{$t_{11} = \Pstr[0.7cm]{(n0){\lambda }\ (n1){@}\ (n2-n1){\lambda t}\ (n3-n2){t}\ (n4-n1){\lambda s_2 z_2}\ (n5-n4){s_2}\ (n6-n3){\lambda n a x}\ (n7-n6){n}\ (n8-n5){\lambda }\ (n9-n4){s_2}\ (n10-n3){\lambda n a x}\ (n11-n10){n}\ (n12-n9){\lambda }\ (n13-n4){z_2}\ (n14-n3){\lambda a}\ (n15-n14){a}\ (n16-n13){{\ghostlmd^{1}}}\ (n17-n12){{\ghostvar^{1}}}\ (n18-n11){\lambda s z}\ (n19-n10){a}\ (n20-n9){{\ghostlmd^{2}}}\ (n21-n8){{\ghostvar^{1}}}\ (n22-n7){\lambda s z}\ (n23-n6){a}\ (n24-n5){{\ghostlmd^{2}}}\ (n25-n4){{\ghostvar^{3}}}\ (n26-n3){\lambda z_0}\ (n27-n26){z_0}\ (n28-n25){{\ghostlmd^{1}}}\ (n29-n24){{\ghostvar^{1}}}\ (n30-n23){\lambda }\ (n31-n22){s}\ (n32-n21){{\ghostlmd^{1}}}\ (n33-n20){{\ghostvar^{1}}}\ (n34-n19){\lambda }\ (n35-n18){s}\ (n36-n17){{\ghostlmd^{1}}}\ (n37-n16){{\ghostvar^{1}}}\ (n38-n15){{\ghostlmd^{1}}}\ (n39-n14){{\ghostvar^{2}}}\ (n40-n13){{\ghostlmd^{2}}}\ (n41-n12){{\ghostvar^{2}}}\ (n42-n11){{\ghostlmd^{2}}}\ (n43-n10){{\ghostvar^{4}}}\ (n44-n9){{\ghostlmd^{4}}}\ (n45-n8){{\ghostvar^{3}}}\ (n46-n7){{\ghostlmd^{3}}}\ (n47-n6){{\ghostvar^{5}}}\ (n48-n5){{\ghostlmd^{5}}}\ (n49-n4){{\ghostvar^{6}}}\ (n50-n3){{\ghostlmd^{6}}}\ (n51-n2){{\ghostvar^{4}}}\ (n52-n1){{\ghostlmd^{4}}}\ (n53-n0){{\ghostvar^{3}}}\ (n54-n53){{\ghostlmd^{1}}}\ (n55-n52){{\ghostvar^{1}}}\ (n56-n51){{\ghostlmd^{1}}}\ (n57-n50){{\ghostvar^{1}}}\ (n58-n49){{\ghostlmd^{1}}}\ (n59-n48){{\ghostvar^{1}}}\ (n60-n47){{\ghostlmd^{1}}}\ (n61-n46){{\ghostvar^{1}}}\ (n62-n45){{\ghostlmd^{1}}}\ (n63-n44){{\ghostvar^{1}}}\ (n64-n43){{\ghostlmd^{1}}}\ (n65-n42){{\ghostvar^{1}}}\ (n66-n41){{\ghostlmd^{1}}}\ (n67-n40){{\ghostvar^{1}}}\ (n68-n39){{\ghostlmd^{1}}}\ (n69-n38){{\ghostvar^{1}}}\ (n70-n37){{\ghostlmd^{1}}}\ (n71-n36){{\ghostvar^{1}}}\ (n72-n35){{\ghostlmd^{1}}}\ (n73-n34){{\ghostvar^{1}}}\ (n74-n33){{\ghostlmd^{1}}}\ (n75-n32){{\ghostvar^{1}}}\ (n76-n31){{\ghostlmd^{1}}}\ (n77-n30){{\ghostvar^{1}}}\ (n78-n29){{\ghostlmd^{1}}}\ (n79-n28){{\ghostvar^{1}}}\ (n80-n27){{\ghostlmd^{1}}}\ (n81-n26){{\ghostvar^{2}}}\ (n82-n25){{\ghostlmd^{2}}}\ (n83-n24){{\ghostvar^{2}}}\ (n84-n23){\lambda }\ (n85-n6){x}\ (n86-n5){{\ghostlmd^{3}}}\ (n87-n4){{\ghostvar^{4}}}\ (n88-n3){{\ghostlmd^{4}}}\ (n89-n2){{\ghostvar^{2}}}\ (n90-n1){{\ghostlmd^{2}}}\ (n91-n0){{\ghostvar^{1}}}\ (n92-n91){{\ghostlmd^{1}}}\ (n93-n90){{\ghostvar^{1}}}\ (n94-n89){{\ghostlmd^{1}}}\ (n95-n88){{\ghostvar^{1}}}\ (n96-n87){{\ghostlmd^{1}}}\ (n97-n86){{\ghostvar^{1}}}\ (n98-n85){\lambda }\ (n99-n22){s}\ (n100-n21){{\ghostlmd^{1}}}\ (n101-n20){{\ghostvar^{1}}}\ (n102-n19){\lambda }\ (n103-n18){s}\ (n104-n17){{\ghostlmd^{1}}}\ (n105-n16){{\ghostvar^{1}}}\ (n106-n15){{\ghostlmd^{1}}}\ (n107-n14){{\ghostvar^{2}}}\ (n108-n13){{\ghostlmd^{2}}}\ (n109-n12){{\ghostvar^{2}}}\ (n110-n11){{\ghostlmd^{2}}}\ (n111-n10){{\ghostvar^{4}}}\ (n112-n9){{\ghostlmd^{4}}}\ (n113-n8){{\ghostvar^{3}}}\ (n114-n7){{\ghostlmd^{3}}}\ (n115-n6){{\ghostvar^{5}}}\ (n116-n5){{\ghostlmd^{5}}}\ (n117-n4){{\ghostvar^{6}}}\ (n118-n3){{\ghostlmd^{6}}}\ (n119-n2){{\ghostvar^{4}}}\ (n120-n1){{\ghostlmd^{4}}}\ (n121-n0){{\ghostvar^{3}}}}$}
\resizebox{1\hsize}{!}{$t_{121} =
\Pstr[0.7cm]{(n0){\lambda }\ (n1){@}\ (n2-n1){\lambda t}\ (n3-n2){t}\ (n4-n1){\lambda s_2 z_2}\ (n5-n4){s_2}\ (n6-n3){\lambda n a x}\ (n7-n6){n}\ (n8-n5){\lambda }\ (n9-n4){s_2}\ (n10-n3){\lambda n a x}\ (n11-n10){n}\ (n12-n9){\lambda }\ (n13-n4){z_2}\ (n14-n3){\lambda a}\ (n15-n14){a}\ (n16-n13){{\ghostlmd^{1}}}\ (n17-n12){{\ghostvar^{1}}}\ (n18-n11){\lambda s z}\ (n19-n10){a}\ (n20-n9){{\ghostlmd^{2}}}\ (n21-n8){{\ghostvar^{1}}}\ (n22-n7){\lambda s z}\ (n23-n6){a}\ (n24-n5){{\ghostlmd^{2}}}\ (n25-n4){{\ghostvar^{3}}}\ (n26-n3){\lambda z_0}\ (n27-n26){z_0}\ (n28-n25){{\ghostlmd^{1}}}\ (n29-n24){{\ghostvar^{1}}}\ (n30-n23){\lambda }\ (n31-n22){s}\ (n32-n21){{\ghostlmd^{1}}}\ (n33-n20){{\ghostvar^{1}}}\ (n34-n19){\lambda }\ (n35-n18){s}\ (n36-n17){{\ghostlmd^{1}}}\ (n37-n16){{\ghostvar^{1}}}\ (n38-n15){{\ghostlmd^{1}}}\ (n39-n14){{\ghostvar^{2}}}\ (n40-n13){{\ghostlmd^{2}}}\ (n41-n12){{\ghostvar^{2}}}\ (n42-n11){{\ghostlmd^{2}}}\ (n43-n10){{\ghostvar^{4}}}\ (n44-n9){{\ghostlmd^{4}}}\ (n45-n8){{\ghostvar^{3}}}\ (n46-n7){{\ghostlmd^{3}}}\ (n47-n6){{\ghostvar^{5}}}\ (n48-n5){{\ghostlmd^{5}}}\ (n49-n4){{\ghostvar^{6}}}\ (n50-n3){{\ghostlmd^{6}}}\ (n51-n2){{\ghostvar^{4}}}\ (n52-n1){{\ghostlmd^{4}}}\ (n53-n0){{\ghostvar^{3}}}\ (n54-n53){{\ghostlmd^{1}}}\ (n55-n52){{\ghostvar^{1}}}\ (n56-n51){{\ghostlmd^{1}}}\ (n57-n50){{\ghostvar^{1}}}\ (n58-n49){{\ghostlmd^{1}}}\ (n59-n48){{\ghostvar^{1}}}\ (n60-n47){{\ghostlmd^{1}}}\ (n61-n46){{\ghostvar^{1}}}\ (n62-n45){{\ghostlmd^{1}}}\ (n63-n44){{\ghostvar^{1}}}\ (n64-n43){{\ghostlmd^{1}}}\ (n65-n42){{\ghostvar^{1}}}\ (n66-n41){{\ghostlmd^{1}}}\ (n67-n40){{\ghostvar^{1}}}\ (n68-n39){{\ghostlmd^{1}}}\ (n69-n38){{\ghostvar^{1}}}\ (n70-n37){{\ghostlmd^{1}}}\ (n71-n36){{\ghostvar^{1}}}\ (n72-n35){{\ghostlmd^{1}}}\ (n73-n34){{\ghostvar^{1}}}\ (n74-n33){{\ghostlmd^{1}}}\ (n75-n32){{\ghostvar^{1}}}\ (n76-n31){{\ghostlmd^{1}}}\ (n77-n30){{\ghostvar^{1}}}\ (n78-n29){{\ghostlmd^{1}}}\ (n79-n28){{\ghostvar^{1}}}\ (n80-n27){{\ghostlmd^{1}}}\ (n81-n26){{\ghostvar^{2}}}\ (n82-n25){{\ghostlmd^{2}}}\ (n83-n24){{\ghostvar^{2}}}\ (n84-n23){\lambda }\ (n85-n6){x}\ (n86-n5){{\ghostlmd^{3}}}\ (n87-n4){{\ghostvar^{4}}}\ (n88-n3){{\ghostlmd^{4}}}\ (n89-n2){{\ghostvar^{2}}}\ (n90-n1){{\ghostlmd^{2}}}\ (n91-n0){{\ghostvar^{1}}}\ (n92-n91){{\ghostlmd^{2}}}\ (n93-n90){{\ghostvar^{2}}}\ (n94-n89){{\ghostlmd^{2}}}\ (n95-n88){{\ghostvar^{2}}}\ (n96-n87){{\ghostlmd^{2}}}\ (n97-n86){{\ghostvar^{2}}}\ (n98-n85){\lambda }\ (n99-n22){z}\ (n100-n21){{\ghostlmd^{2}}}\ (n101-n20){{\ghostvar^{2}}}\ (n102-n19){\lambda }\ (n103-n10){x}\ (n104-n9){{\ghostlmd^{3}}}\ (n105-n8){{\ghostvar^{2}}}\ (n106-n7){{\ghostlmd^{2}}}\ (n107-n6){{\ghostvar^{4}}}\ (n108-n5){{\ghostlmd^{4}}}\ (n109-n4){{\ghostvar^{5}}}\ (n110-n3){{\ghostlmd^{5}}}\ (n111-n2){{\ghostvar^{3}}}\ (n112-n1){{\ghostlmd^{3}}}\ (n113-n0){{\ghostvar^{2}}}\ (n114-n113){{\ghostlmd^{1}}}\ (n115-n112){{\ghostvar^{1}}}\ (n116-n111){{\ghostlmd^{1}}}\ (n117-n110){{\ghostvar^{1}}}\ (n118-n109){{\ghostlmd^{1}}}\ (n119-n108){{\ghostvar^{1}}}\ (n120-n107){{\ghostlmd^{1}}}\ (n121-n106){{\ghostvar^{1}}}\ (n122-n105){{\ghostlmd^{1}}}\ (n123-n104){{\ghostvar^{1}}}\ (n124-n103){\lambda }\ (n125-n18){s}\ (n126-n17){{\ghostlmd^{1}}}\ (n127-n16){{\ghostvar^{1}}}\ (n128-n15){{\ghostlmd^{1}}}\ (n129-n14){{\ghostvar^{2}}}\ (n130-n13){{\ghostlmd^{2}}}\ (n131-n12){{\ghostvar^{2}}}\ (n132-n11){{\ghostlmd^{2}}}\ (n133-n10){{\ghostvar^{4}}}\ (n134-n9){{\ghostlmd^{4}}}\ (n135-n8){{\ghostvar^{3}}}\ (n136-n7){{\ghostlmd^{3}}}\ (n137-n6){{\ghostvar^{5}}}\ (n138-n5){{\ghostlmd^{5}}}\ (n139-n4){{\ghostvar^{6}}}\ (n140-n3){{\ghostlmd^{6}}}\ (n141-n2){{\ghostvar^{4}}}\ (n142-n1){{\ghostlmd^{4}}}\ (n143-n0){{\ghostvar^{3}}}}$}

\resizebox{1\hsize}{!}{$t_{122} =
\Pstr[0.7cm]{(n0){\lambda }\ (n1){@}\ (n2-n1){\lambda t}\ (n3-n2){t}\ (n4-n1){\lambda s_2 z_2}\ (n5-n4){s_2}\ (n6-n3){\lambda n a x}\ (n7-n6){n}\ (n8-n5){\lambda }\ (n9-n4){s_2}\ (n10-n3){\lambda n a x}\ (n11-n10){n}\ (n12-n9){\lambda }\ (n13-n4){z_2}\ (n14-n3){\lambda a}\ (n15-n14){a}\ (n16-n13){{\ghostlmd^{1}}}\ (n17-n12){{\ghostvar^{1}}}\ (n18-n11){\lambda s z}\ (n19-n10){a}\ (n20-n9){{\ghostlmd^{2}}}\ (n21-n8){{\ghostvar^{1}}}\ (n22-n7){\lambda s z}\ (n23-n6){a}\ (n24-n5){{\ghostlmd^{2}}}\ (n25-n4){{\ghostvar^{3}}}\ (n26-n3){\lambda z_0}\ (n27-n26){z_0}\ (n28-n25){{\ghostlmd^{1}}}\ (n29-n24){{\ghostvar^{1}}}\ (n30-n23){\lambda }\ (n31-n22){s}\ (n32-n21){{\ghostlmd^{1}}}\ (n33-n20){{\ghostvar^{1}}}\ (n34-n19){\lambda }\ (n35-n18){s}\ (n36-n17){{\ghostlmd^{1}}}\ (n37-n16){{\ghostvar^{1}}}\ (n38-n15){{\ghostlmd^{1}}}\ (n39-n14){{\ghostvar^{2}}}\ (n40-n13){{\ghostlmd^{2}}}\ (n41-n12){{\ghostvar^{2}}}\ (n42-n11){{\ghostlmd^{2}}}\ (n43-n10){{\ghostvar^{4}}}\ (n44-n9){{\ghostlmd^{4}}}\ (n45-n8){{\ghostvar^{3}}}\ (n46-n7){{\ghostlmd^{3}}}\ (n47-n6){{\ghostvar^{5}}}\ (n48-n5){{\ghostlmd^{5}}}\ (n49-n4){{\ghostvar^{6}}}\ (n50-n3){{\ghostlmd^{6}}}\ (n51-n2){{\ghostvar^{4}}}\ (n52-n1){{\ghostlmd^{4}}}\ (n53-n0){{\ghostvar^{3}}}\ (n54-n53){{\ghostlmd^{1}}}\ (n55-n52){{\ghostvar^{1}}}\ (n56-n51){{\ghostlmd^{1}}}\ (n57-n50){{\ghostvar^{1}}}\ (n58-n49){{\ghostlmd^{1}}}\ (n59-n48){{\ghostvar^{1}}}\ (n60-n47){{\ghostlmd^{1}}}\ (n61-n46){{\ghostvar^{1}}}\ (n62-n45){{\ghostlmd^{1}}}\ (n63-n44){{\ghostvar^{1}}}\ (n64-n43){{\ghostlmd^{1}}}\ (n65-n42){{\ghostvar^{1}}}\ (n66-n41){{\ghostlmd^{1}}}\ (n67-n40){{\ghostvar^{1}}}\ (n68-n39){{\ghostlmd^{1}}}\ (n69-n38){{\ghostvar^{1}}}\ (n70-n37){{\ghostlmd^{1}}}\ (n71-n36){{\ghostvar^{1}}}\ (n72-n35){{\ghostlmd^{1}}}\ (n73-n34){{\ghostvar^{1}}}\ (n74-n33){{\ghostlmd^{1}}}\ (n75-n32){{\ghostvar^{1}}}\ (n76-n31){{\ghostlmd^{1}}}\ (n77-n30){{\ghostvar^{1}}}\ (n78-n29){{\ghostlmd^{1}}}\ (n79-n28){{\ghostvar^{1}}}\ (n80-n27){{\ghostlmd^{1}}}\ (n81-n26){{\ghostvar^{2}}}\ (n82-n25){{\ghostlmd^{2}}}\ (n83-n24){{\ghostvar^{2}}}\ (n84-n23){\lambda }\ (n85-n6){x}\ (n86-n5){{\ghostlmd^{3}}}\ (n87-n4){{\ghostvar^{4}}}\ (n88-n3){{\ghostlmd^{4}}}\ (n89-n2){{\ghostvar^{2}}}\ (n90-n1){{\ghostlmd^{2}}}\ (n91-n0){{\ghostvar^{1}}}\ (n92-n91){{\ghostlmd^{2}}}\ (n93-n90){{\ghostvar^{2}}}\ (n94-n89){{\ghostlmd^{2}}}\ (n95-n88){{\ghostvar^{2}}}\ (n96-n87){{\ghostlmd^{2}}}\ (n97-n86){{\ghostvar^{2}}}\ (n98-n85){\lambda }\ (n99-n22){z}\ (n100-n21){{\ghostlmd^{2}}}\ (n101-n20){{\ghostvar^{2}}}\ (n102-n19){\lambda }\ (n103-n10){x}\ (n104-n9){{\ghostlmd^{3}}}\ (n105-n8){{\ghostvar^{2}}}\ (n106-n7){{\ghostlmd^{2}}}\ (n107-n6){{\ghostvar^{4}}}\ (n108-n5){{\ghostlmd^{4}}}\ (n109-n4){{\ghostvar^{5}}}\ (n110-n3){{\ghostlmd^{5}}}\ (n111-n2){{\ghostvar^{3}}}\ (n112-n1){{\ghostlmd^{3}}}\ (n113-n0){{\ghostvar^{2}}}\ (n114-n113){{\ghostlmd^{2}}}\ (n115-n112){{\ghostvar^{2}}}\ (n116-n111){{\ghostlmd^{2}}}\ (n117-n110){{\ghostvar^{2}}}\ (n118-n109){{\ghostlmd^{2}}}\ (n119-n108){{\ghostvar^{2}}}\ (n120-n107){{\ghostlmd^{2}}}\ (n121-n106){{\ghostvar^{2}}}\ (n122-n105){{\ghostlmd^{2}}}\ (n123-n104){{\ghostvar^{2}}}\ (n124-n103){\lambda }\ (n125-n18){z}\ (n126-n17){{\ghostlmd^{2}}}\ (n127-n16){{\ghostvar^{2}}}\ (n128-n15){{\ghostlmd^{2}}}\ (n129-n14){{\ghostvar^{3}}}\ (n130-n13){{\ghostlmd^{3}}}\ (n131-n12){{\ghostvar^{3}}}\ (n132-n11){{\ghostlmd^{3}}}\ (n133-n10){{\ghostvar^{5}}}\ (n134-n9){{\ghostlmd^{5}}}\ (n135-n8){{\ghostvar^{4}}}\ (n136-n7){{\ghostlmd^{4}}}\ (n137-n6){{\ghostvar^{6}}}\ (n138-n5){{\ghostlmd^{6}}}\ (n139-n4){{\ghostvar^{7}}}\ (n140-n3){{\ghostlmd^{7}}}\ (n141-n2){{\ghostvar^{5}}}\ (n142-n1){{\ghostlmd^{5}}}\ (n143-n0){{\ghostvar^{4}}}}$}
\caption{Maximal traversals of $varity\ 2$ from Example~\ref{example:varity}}
\label{tab:varity2_trav}
\end{table}
\end{landscape}
\restoregeometry

\ifdefined\placeproofsatend
\section{Proofs}
\printproofs
\fi
\end{appendices}

\end{document}